\newif\ifarxiv
\arxivtrue

\documentclass[a4paper,USenglish,cleveref,pagebackref,thm-restate]{lipics-v2021}

\ifarxiv{}
  \nolinenumbers
  \hideLIPIcs
\fi{}
\usepackage{amsmath}
\usepackage{amsfonts}
\usepackage{nicefrac}
\usepackage{amssymb}
\usepackage{amsthm}
\usepackage{mathtools}
\usepackage{tikz}
\usetikzlibrary{calc}

\overfullrule=1mm

\usepackage[backgroundcolor=gray!10,textsize=footnotesize]{todonotes}

\usepackage[square,numbers,sort&compress]{natbib}
\usepackage{doi}

\bibliographystyle{plainnat-eprint}

\theoremstyle{definition}
\newtheorem{assumption}[theorem]{Assumption}
\newtheorem{algorithm}[theorem]{Algorithm}

\crefname{lemma}{Lemma}{Lemmata}

\newcommand{\problemdef}[3]{
	\begin{center}
	\begin{minipage}{0.95\textwidth}
		\noindent
		#1
		\vspace{5pt}\\
		\setlength{\tabcolsep}{3pt}
		\begin{tabularx}{\textwidth}{@{}lX@{}}
			\textbf{Input:}     & #2 \\
			\textbf{Question:}  & #3
		\end{tabularx}
	\end{minipage}
	\end{center}
}
\newcommand{\optproblemdef}[3]{
	\begin{center}
	\begin{minipage}{0.95\textwidth}
		\noindent
		#1
		\vspace{5pt}\\
		\setlength{\tabcolsep}{3pt}
		\begin{tabularx}{\textwidth}{@{}lX@{}}
			\textbf{Input:}     & #2 \\
			\textbf{Output:}  & #3
		\end{tabularx}
	\end{minipage}
	\end{center}
}

\DeclarePairedDelimiterX{\abs}[1]{\lvert}{\rvert}{#1}
\DeclarePairedDelimiterX{\norm}[1]{\lVert}{\rVert}{#1}
\DeclarePairedDelimiterX{\ceil}[1]{\lceil}{\rceil}{#1}

\newcommand{\NN}{\mathbb{N}}
\newcommand{\ZZ}{\mathbb{Z}}

\newcommand{\JJ}{\mathbb{J}}
\newcommand{\BB}{\mathcal{B}}
\newcommand{\UU}{\mathcal{U}}
\newcommand{\TT}{\mathcal{T}}
\newcommand{\II}{\mathcal{I}}

\newcommand{\lifetime}{\tau}

\newcommand{\NP}{\textrm{NP}}

\newcommand{\Wone}{\textrm{W[1]}}
\newcommand{\FPT}{\textrm{FPT}}

\newcommand{\bigO}{\mathcal{O}}
\newcommand{\yes}{\emph{yes}}
\newcommand{\no}{\emph{no}}

\DeclareMathOperator{\sgn}{sgn}
\DeclareMathOperator{\pf}{pf}
\DeclareMathOperator{\val}{val}
\DeclareMathOperator{\tw}{tw}

\newcommand{\oneto}[1]{[ #1 ]} %

\newcommand{\wilog}{without loss of generality}
\newcommand{\Wilog}{Without loss of generality}

\usepackage{etoolbox}

\newcommand{\appref}[1]{\ifarxiv{}\else{}\hyperref[proof:#1]{\appsymb}\fi{}}

\newcommand{\appendixsection}[1]{%
		\ifarxiv{}\else{}%
	\gappto{\appendixProofText}{\section{Additional Material for Section~\ref{#1}}\label{app:#1}}%
	\fi{}%
}

\newcommand{\toappendix}[1]{%
\ifarxiv{}#1\else{}%
  \gappto{\appendixProofText}
  {{
    #1
  }}
  \fi{}%
}

\newcommand{\appendixproof}[2]{%
		\ifarxiv{}%
		#2
\else{}%
  \gappto{\appendixProofText}
  {
    \subsection[Missing Proof]{Proof of \cref{#1}}\label{proof:#1}
    #2
  }%
  \fi{}%
}

\title{Parameterized Algorithms for Diverse Multistage Problems}

\titlerunning{Parameterized Algorithms for Diverse Multistage Problems}

\author{Leon Kellerhals}{Algorithmics and Computational Complexity, Faculty IV, Technische Universität Berlin, Germany}{leon.kellerhals@tu-berlin.de}{https://orcid.org/0000-0001-6565-3983}{}

\author{Malte Renken}{Algorithmics and Computational Complexity, Faculty IV, Technische Universität Berlin, Germany}{m.renken@tu-berlin.de}{https://orcid.org/0000-0002-1450-1901}{Supported by the German Research Foundation (DFG), project MATE (NI~369/17).}

\author{Philipp Zschoche}{Algorithmics and Computational Complexity, Faculty IV, Technische Universität Berlin, Germany}{zschoche@tu-berlin.de}{https://orcid.org/0000-0001-9846-0600}{}

\acknowledgements{
The authors wish to thank Rolf Niedermeier for his useful comments on the manuscript.
This work was initiated at the research retreat of the Algorithmics and Computational Complexity group of TU Berlin in September 2020 in Zinnowitz.
}

\authorrunning{L.~Kellerhals, M.~Renken, and P.~Zschoche}

\Copyright{Leon~Kellerhals, Malte~Renken, and Philipp Zschoche}

\ccsdesc[100]{Theory of computation~Parameterized complexity and exact algorithms}
\ccsdesc[100]{Mathematics of computing~Graph algorithms}

\keywords{Temporal graphs, dissimilar solutions, fixed-parameter tractability, perfect matchings, s-t paths, committee election, spanning forests, matroids.}

\begin{document}

\maketitle
\begin{abstract}
	The world is rarely static --- many problems need not only be solved once but repeatedly, under changing conditions.
	This setting is addressed by the ``multistage'' view on computational problems.
	We study the ``diverse multistage'' variant, where consecutive solutions of large variety are preferable to similar ones,
	e.g.\ for reasons of fairness or wear minimization.
	While some aspects of this model have been tackled before,
	we introduce a framework allowing us to prove that a number of diverse multistage problems
	are fixed-parameter tractable by diversity,
	namely \textsc{Perfect Matching}, \textsc{$s$-$t$ Path}, \textsc{Matroid Independent Set}, and \textsc{Plurality Voting}.
	This is achieved by first solving special, colored variants of these problems, which might also be of independent interest.
\end{abstract}

\ifarxiv{}\else{}
\clearpage
\fi{}

\section{Introduction}

In the \emph{multistage} setting, given a sequence of instances of some problem,
one asks whether there is a corresponding sequence of solutions such that consecutive solutions relate in some way to each other.
Often the aim is to find consecutive solutions that are very \emph{similar}~\cite{gupta2014changing,eisenstat2014facility,FluschnikNRZ19,BampisET19,BampisEST19,F21}.
This is reasonable when changing between distinct solutions incurs some form of cost.
In other settings, the opposite goal is more reasonable, that is, consecutive solutions should be very \emph{different}.
This is a natural goal when wear minimization, load distribution, or resilience against failures or attacks are of interest.
This ``diverse multistage'' setting is what we want to focus on in this paper.
Here, given a sequence of instances of some decision problem,
the task is to find a sequence of solutions such that the \emph{diversity}, i.e., the size of the symmetric difference 
of any two consecutive solutions is \emph{at least}~$\ell$.

This problem has already received some attention in the literature:
\citet{FNSZ20} studied the problem of finding diverse $s$-$t$ paths
and
\citet{bredereck2020multistage} considered series of committee elections.
In a similar setting, but aiming for large symmetric difference between every two (i.e., not just consecutive) solutions,
\citet{BasteDiversity20} provide a framework for parameterization by treewidth,
while 
\citet{fomin2020diverse,FominGPP021} focus on the case that all problems are defined on the same graph and study matching, independent set, and matroids.

\newcommand{\solutionset}[1][I]{{\mathcal R({#1})}}
\newcommand{\dmPi}{\textsc{Diverse Multistage $\Pi$}}

We briefly give a formal definition.
Assume $\Pi$~to be some decision problem which asks 
whether the \emph{family of solutions} $\solutionset \subseteq 2^{\BB(I)}$ of an instance~$I$ of~$\Pi$ is non-empty, where $\BB(I)$ 
is some \emph{base set} encompassing all possible solutions.
For example, for an instance~$I$ of~$\textsc{Vertex Cover}$, the set $\BB(I)$ is the set of all vertices
and $\solutionset$~is the set of all vertex covers within the size bound.
The problem $\dmPi$ is now the following.
\problemdef{\dmPi}
{A sequence $(I_i)_{i=1}^\lifetime$ of instances of $\Pi$ and an integer $\ell \in \NN_0$.}
{Is there a sequence $(S_i)_{i=1}^\lifetime$ of solutions~$S_i \in \solutionset[I_i]{}$ 
such that $|S_{i}\Delta S_{i+1}| \geq \ell$ for all~$i \in \oneto{\tau-1}$?}

\subparagraph{Our contributions.}
We present a general framework which allows us to prove fixed-parameter tractability of \dmPi{} parameterized by the diversity~$\ell$ for several problems~$\Pi$.
This includes finding diverse matchings,
but also diverse commitees (answering an open question by \citet{bredereck2020multistage}),
diverse $s$-$t$ paths, and diverse independent sets in matroids such as spanning forests.
Finally, we show that similar results cannot be expected for finding diverse vertex covers.

Generally, our framework can be applied to \dmPi{} whenever one can solve a $4$-colored variant of $\Pi$ efficiently.
Formally, this variant is defined as follows.
\newcommand{\coloredPi}{\textsc{4-Colored Exact $\Pi$}}
\optproblemdef{\coloredPi}
{An instance $I$ of $\Pi$, a coloring $c \colon \BB(I) \rightarrow [4]$, and $n_i \in \NN_0,i\in[4]$.}
{A solution $S \in \solutionset$ such that $\abs{\{x \in S \mid c(x) = i\}} = n_i$ for all~$i \in[4]$
or ``\no{}'' if no solution exists.}
Our main result reads as follows.
\begin{theorem}
	\label{thm:diverse-multistage-pi-det}
	If an instance $I$ of \coloredPi{} can be solved in $f(r)\cdot |I|^{O(1)}$ time,
	then an instance $J$ of \dmPi{} of size $n$ can be solved in $2^{O(\ell)} \cdot f(r_{\max}) \cdot \abs{J}^{O(1)}$ time,
	where $r_{\max}$ is the maximum of parameter $r$ over all instances of $\Pi$ in $J$.\footnote{For example, if the input is a sequence of graphs and $r$ is the treewidth, then $r_{\max}$ is the maximum treewidth over all graphs in the input.}
\end{theorem}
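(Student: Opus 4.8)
The plan is to solve \dmPi{} by a dynamic program that scans the instances $I_1,\dots,I_\lifetime$ from left to right, maintaining at the boundary between $I_i$ and $I_{i+1}$ a succinct certificate that the chosen solutions satisfy $\abs{S_i \Delta S_{i+1}} \ge \ell$, and realizing each step of the dynamic program by a single call to \coloredPi{}; color coding is then used to keep the boundary information small. Since symmetric differences are taken as sets, I work over the common ground set $U := \bigcup_{i} \BB(I_i)$.

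The first ingredient is a reformulation into local constraints. A sequence $(S_i)_{i=1}^{\lifetime}$ with $S_i \in \solutionset[I_i]$ is feasible if and only if for every $i \in \oneto{\lifetime-1}$ there is a \emph{witness set} $W_i \subseteq S_i \Delta S_{i+1}$ with $\abs{W_i} = \ell$. Splitting $W_i = W_i^{-} \uplus W_i^{+}$ with $W_i^{-} := W_i \cap S_i$ (hence $W_i^{-} \subseteq S_i \setminus S_{i+1}$) and $W_i^{+} := W_i \cap S_{i+1}$, and adopting the convention $W_0 := W_{\lifetime} := \emptyset$, feasibility becomes: there are solutions $S_i \in \solutionset[I_i]$ such that $S_i$ contains $P_i := W_{i-1}^{+} \cup W_i^{-}$ and avoids $Q_i := W_{i-1}^{-} \cup W_i^{+}$, where $P_i,Q_i$ are disjoint and $\abs{P_i}+\abs{Q_i} = \abs{W_{i-1}\cup W_i} \le 2\ell$. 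Thus, once the witness sets and their bipartitions are fixed, the only question left per instance is whether there is $S \in \solutionset[I_i]$ containing a prescribed set and avoiding another prescribed set of total size at most $2\ell$ — and this is exactly what \coloredPi{} decides: colour $P_i$ with colour~$1$ and set $n_1 := \abs{P_i}$; colour $Q_i$ with colour~$2$ and set $n_2 := 0$; distribute the remaining elements of $U$ over colours~$3$ and~$4$ and iterate over the admissible values of $n_3,n_4$ (polynomially many; fixed when the size of a $\Pi$-solution in $I_i$ is determined). Any solution returned by such a query is precisely a $\Pi$-solution for $I_i$ compatible with the prescribed witnesses.

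The dynamic program scans $i=1,\dots,\lifetime$; the state after handling $I_i$ is the witness set $W_i$ with its bipartition — exactly the data $I_{i+1}$ needs — and the transition from a state $W_i$ to a state $W_{i+1}$ is answered by the \coloredPi{} query above, applied to $I_{i+1}$ with $P_{i+1} = W_i^{+}\cup W_{i+1}^{-}$ and $Q_{i+1} = W_i^{-}\cup W_{i+1}^{+}$. Correctness is then immediate from the reformulation, and the running time is (number of boundary states) $\times$ (number of transitions) $\times$ (cost of one \coloredPi{} call on an instance of $J$), the last factor being $f(r_{\max})\cdot\abs{J}^{O(1)}$ by assumption, with $r_{\max}$ the maximum of the parameter $r$ over the instances in $J$.

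The main obstacle is that, taken literally, a boundary state $W_i$ ranges over all $\binom{\abs{U}}{\ell}$ subsets of $U$ of size~$\ell$, which is not fixed-parameter tractable. This is where color coding enters: instead of tracking the identity of the at most $2\ell$ ground-set elements that are active at a boundary, I draw colorings of $U$ from a suitable family of size $2^{O(\ell)}\cdot\abs{J}^{O(1)}$ (a perfect hash / splitter family, which also makes the argument deterministic, matching the claimed ``det'' bound) and only commit to boundary data up to the current coloring, so that $2^{O(\ell)}$ boundary states remain. Getting this compression to interact correctly with the forced-in/forced-out constraints of the per-instance queries — and to do so deterministically — is the crux of the proof; combining the $2^{O(\ell)}\cdot\abs{J}^{O(1)}$ of states, transitions and colorings with the per-query cost $f(r_{\max})\cdot\abs{J}^{O(1)}$ yields the stated bound $2^{O(\ell)}\cdot f(r_{\max})\cdot\abs{J}^{O(1)}$.
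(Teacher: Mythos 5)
Your approach is genuinely different from the paper's, and it has a real gap that you yourself flag as ``the crux'' without resolving.

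Your plan runs a dynamic program whose state at boundary~$i$ is the witness set $W_i\subseteq S_i\Delta S_{i+1}$ (with its bipartition), and tries to keep the state space small by storing only the color profile of $W_i$ under a coloring drawn from a universal family. The step that breaks is precisely the compression. The transition into $I_{i+1}$ must ask \coloredPi{} for a solution that contains the specific set $P_{i+1}=W_i^{+}\cup W_{i+1}^{-}$ and avoids $Q_{i+1}=W_i^{-}\cup W_{i+1}^{+}$; your own reduction does this by coloring $P_{i+1}$ with colour~$1$, $Q_{i+1}$ with colour~$2$, and fixing $n_1=|P_{i+1}|$, $n_2=0$, which requires knowing the actual identities of the elements of $P_{i+1}$ and $Q_{i+1}$. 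But you simultaneously discard those identities from the DP state, retaining only colour counts. \coloredPi{} has no interface to force ``these particular elements in, those out'' given only colour counts, so once $W_i$ is compressed to a profile the query at $I_{i+1}$ can no longer be posed consistently with the query at $I_i$, which used the same $W_i$. Using one coloring globally doesn't help either: a single hash family of size $2^{O(\ell)}\cdot|J|^{O(1)}$ cannot isolate the $\Theta(\lifetime\cdot\ell)$ elements that appear across all boundaries, so the parameter of the color-coding step would have to grow with $\lifetime$, destroying the claimed bound. In short, your DP genuinely needs $\binom{|U|}{\ell}$-many boundary states, and color coding as invoked does not shrink that.

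The paper avoids this entirely by compressing the \emph{solution} space rather than the \emph{boundary} information: for each instance $I_i$ it computes a small family $\F_i\subseteq\solutionset[I_i]$ (an ``$\ell$-diverse representative,'' \cref{def:diverse-families}) of size $2^{O(\ell)}\cdot|I_i|^{O(1)}$ such that restricting the sequence to members of $\prod_i\F_i$ preserves the answer. The diverse representatives are built by trying to find up to three pairwise-far solutions; in the degenerate cases (zero, one, or two ``centers''), all remaining solutions are provably close to a center, so a \emph{single} color coding per instance, combined with the exact colour-count queries of \coloredPi{}, enumerates enough nearby solutions (\cref{lem:colored-graph-stay-close-to-M,lem:colored-graph-small-syms,lem:one-matchings,lem:two-matchings,lem:compute-diverse-F}). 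The final DP then ranges over the explicit sets in $\F_{i-1}\times\F_i$, so symmetric differences can be computed directly and no identity information is ever lost. You would need to replace your witness-state DP with something along these lines to close the gap.
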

We prove \cref{thm:diverse-multistage-pi-det} in \cref{sec:framework} in a more general form which also allows solving \coloredPi{} by a Monte Carlo algorithm.
We then apply our framework to the following problems:

\newcommand{\Voting}{\textsc{Plurality Voting}}
\newcommand{\dmVoting}{\textsc{Diverse Multistage \Voting}}
\newcommand{\coloredVoting}{\textsc{$4$-Colored Exact \Voting}}

\emph{Committee Election (\cref{sec:committee}).}
In \dmVoting{}, we are given a set $A$ of agents, a set $C$ of candidates,
and $\lifetime$ many \emph{voting profiles}~$u_i \colon A \to C$.
The goal is to find a sequence $(C_i)_{i=1}^\lifetime$ of committees~$C_i \subseteq C$ 
such that each committee~$C_i$ is of size at most~$k$ and gets at least $x$~votes in the voting profile~$u_i$
(i.e., $\abs{u_i^{-1}(C_i)} \geq x$),
and $\abs{C_{i} \Delta C_{i+1}} \geq \ell$ for all~$i \in \oneto{\tau-1}$.
We show that there is a $2^{\bigO(\ell)}\cdot \abs{J}^{\bigO(1)}$-time algorithm to solve a \dmVoting{} instance $J$.
This answers an open question of \citet{bredereck2020multistage}.
Later, in \cref{sec:matroids}, we generalize the algorithm 
used to solve \coloredVoting{} to matroids.

\newcommand{\dmMatching}{\textsc{Diverse Multistage Perfect Matching}}
\newcommand{\ExactMatchingProb}{\textsc{$s$-Colored Exact Perfect Matching}}
\emph{Perfect Matching (\cref{sec:matching}).}
In the multistage setting, Perfect Matching is among the problems most intensively studied~\cite{gupta2014changing,bampis2018multistage,bampis2019lp,chimani2020approximating,steinhau2020parameterized}.
Given a sequence of graphs $(G_i)_{i=1}^\lifetime$ and an integer $\ell$,
\dmMatching{} asks whether there is a sequence~$(M_i)_{i=1}^\lifetime$ such that
each~$M_i$ is a perfect matching in~$G_i$, 
and $\abs{M_{i} \Delta M_{i+1}} \geq \ell$ for all~$i \in \oneto{\tau -1}$.
We show that there is a randomized $2^{\bigO(\ell)}\cdot\abs{J}^{\bigO(1)}$-time algorithm to solve a \dmMatching{} instance $J$ with constant %
error probability.
This stands in remarkable contrast to the $\Wone$-hardness of the 
(non-diverse) \textsc{Multistage Perfect Matching}, 
when parameterized by $\ell+\lifetime$~\cite{steinhau2020parameterized}.
To apply our framework, we establish an algebraic algorithm using the Pfaffian of a specific variant of the Tutte matrix to solve \ExactMatchingProb{} on an $n$-vertex graph
in $n^{O(s)}$ time with low %
error probability.

\newcommand{\stPath}{\textsc{$s$-$t$ Path}}
\newcommand{\dmPath}{\textsc{Diverse Multistage \stPath}}
\newcommand{\coloredPath}{\textsc{$4$-Colored Exact \stPath}}
\emph{$s$-$t$ Path (\cref{sec:path}).}
Studying \stPath{} in the multistage setting was already suggested in the seminal work of \citet{gupta2014changing}.
In \dmPath{} one is given a sequence of graphs $(G_i)_{i=1}^\lifetime$,
two distinct vertices $s$ and $t$, and an integer $\ell$, and 
asks whether there is a sequence $(P_i)_{i=1}^\lifetime$ such that
each~$P_i$ is an \stPath{} in $G_i$,
and $\abs{V(P_{i}) \Delta V(P_{i+1})} \geq \ell$ for all~$i \in \oneto{\tau -1}$.
\citet{FNSZ20} provided a comprehensive study of finding $s$-$t$ paths of bounded length
in the multistage setting from the viewpoint of parameterized complexity.
Among other results, they showed that \dmPath{} is $\NP$-hard
but fixed-parameter tractable when parameterized
by the maximum length of an \stPath{} in the solution.
We show that \dmPath{} parameterized by~$\ell$ is fixed-parameter tractable.
At first glance, using our framework seems unpromising since \coloredPath{} can presumably not be solved in polynomial time 
(it is NP-hard by a straight-forward reduction from \textsc{Hamiltonian Path}).
However, we develop a win/win strategy around a generalization of the Erd\H{o}s-P\'{o}sa theorem for long cycles due to \citet{MNSW-long-cycles}
so that we have to solve \coloredPath{} only on graphs on which the treewidth is upper-bounded in the parameter $\ell$.

In~\cref{sec:vertex-cover}, we complement our fixed-parameter tractability results with a \Wone-hardness for \textsc{Diverse Multistage Vertex Cover} 
when parameterized by~$\ell$.

\section{Preliminaries}
\label{sec:prem}
We denote by~$\NN$ and~$\NN_0$ the natural numbers excluding and including zero, respectively.
For~$n\in\NN$, let~$\oneto{n} \coloneqq \{1, 2, \dots, n\}$.
For two sets~$A$ and~$B$,
we denote by~$A \Delta B \coloneqq (A\setminus B) \cup (B \setminus A)=(A\cup B)\setminus (A\cap B)$ the \emph{symmetric difference} of~$A$ and~$B$,
and by~$A\uplus B$ the disjoint union of~$A$ and~$B$.
For a function $c \colon A \to B$, let~$c(A') \coloneqq \bigcup_{a \in A'} c(a)$ 
and~$c^{-1}(b) \coloneqq \{ a \in A \mid c(a) = b \}$, where $A' \subseteq A$.
We also use the notations~$c^b$ and~$c^{b,b'}$ as shorthands for~$c^{-1}(b)$ and~$c^{-1}(b) \cup c^{-1}(b')$, respectively.

A \emph{Monte Carlo algorithm}, or an algorithm with error probability $p$, 
is a randomized algorithm that returns a correct answer with probability $1-p$.

Let~$\Sigma$ be a finite alphabet.
A \emph{parameterized problem}~$L$ is a subset~$L\subseteq \{(x,k)\in\Sigma^*\times \NN_0\}$.
An instance~$(x,k)\in\Sigma^*\times \NN_0$ is a \yes-instance of~$L$ if and only if~$(x,k)\in L$ (otherwise, it is a \no-instance).
A parameterized problem~$L$ is \emph{fixed-parameter tractable} (in \FPT) 
if for every input~$(x,k)$ one can decide in~$f(k)\cdot |x|^{O(1)}$~time whether~$(x,k)\in L$, 
where~$f$ is some computable function only depending on~$k$.  
A \Wone-hard parameterized problem is not fixed-parameter tractable unless~\FPT=\Wone.
\ifarxiv{}We refer to \citet{downey2013fundamentals} and \citet{cygan2015parameterized} for more material on parameterized complexity.\fi{}We use standard notation from graph theory~\cite{Diestel10}.
Throughout this paper, we assume graphs to be simple and undirected.

\section{The General Framework}
\label{sec:framework}
\appendixsection{sec:framework}
In this section, we introduce a general framework to show (for some decision problem $\Pi$) fixed-parameter tractability of \dmPi{} parameterized by~$\ell$.
Recall that, for every instance $I$ of decision problem $\Pi$, we denote the family of solutions by $\solutionset \subseteq 2^{\BB(I)}$ and 
the input size $|I|$ of $I$ is at least $|\BB(I)|$. For the reminder of this section we assume that~$|\BB(I)| \geq 2$ for all instances $I$ of $\Pi$.
The framework is applicable to \dmPi{} if there is an efficient algorithm for \coloredPi{}.
Formally, we use the following prerequisite, which is slightly more general than in \cref{thm:diverse-multistage-pi-det}.
\newcommand{\Pialgo}{\ensuremath{\mathcal{A}}}
\newcommand{\param}{r}
\begin{assumption}
		\label{assumption}
		There are computable functions $f, g$ such that
		for every $0 \leq p \leq 1$
		for which $g(p)$ is defined,
		there is a Monte-Carlo algorithm \Pialgo{} with error probability~$p$ and 
		running time~$f(\param) \cdot |I|^{O(1)} \cdot g(p)$, that solves an instance~$I$ of \coloredPi{},
		where $\param\in \NN_0$ is some parameter of $I$ and $g$ is monotone non-increasing.
\end{assumption}
We allow an error probability in \cref{assumption} 
because for one of our applications (in \cref{sec:matching}),
no other polynomial-time algorithm is known.
The goal %
is to prove the following.
\ifarxiv{}%
\begin{theorem}
\else{}%
\begin{theorem}[\appref{thm:diverse-multistage-pi}]
\fi{}
		\label{thm:diverse-multistage-pi}
		Let \cref{assumption} be true.
		Then any size-$n$ instance~$I$ of \dmPi{} can be solved
		in $2^{O(\ell)}\cdot f(\param_{\max})\cdot n^{O(1)} \cdot g(\nicefrac{p}{\lifetime 2^{O(\ell)}n^{O(1)}})$~time
		by a Monte-Carlo algorithm with error probability~$p$,
		where $r_{\max}$ is the maximum of parameter $r$ over all instances of $\Pi$ in $I$,
		and $0 \leq p \leq 1$
		is an arbitrary probability for which the above expression is defined.\footnote{For example, if we only have an algorithm with non-zero error probability, then $p=0$ is excluded.}
\end{theorem}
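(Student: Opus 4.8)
The plan is to reduce any size-$n$ instance of \dmPi{} to $2^{\bigO(\ell)}\cdot n^{\bigO(1)}$ invocations of the algorithm~\Pialgo{} for \coloredPi{} from \cref{assumption}, organized as a dynamic program over the $\lifetime$ stages. The conceptual starting point is a witness characterization of diversity: for $\ell\geq 1$, a sequence $(S_i)_{i=1}^{\lifetime}$ with $S_i\in\solutionset[I_i]$ satisfies $\abs{S_i\Delta S_{i+1}}\geq\ell$ at boundary~$i$ if and only if there are disjoint sets $P_i\subseteq S_i$ and $Q_i\subseteq S_{i+1}$ with $P_i\cap S_{i+1}=\emptyset$, $Q_i\cap S_i=\emptyset$, and $\abs{P_i}+\abs{Q_i}=\ell$ --- just take $P_i$ and $Q_i$ to be the two sides of any size-$\ell$ subset of $S_i\Delta S_{i+1}$. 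Hence it suffices to find solutions $S_i$ together with, for each boundary, such a partition, subject to the requirement that each $S_i$ contains $Q_{i-1}\cup P_i$ and is disjoint from $(P_{i-1}\cup Q_i)\cap\BB(I_i)$, where $P_0\coloneqq Q_0\coloneqq P_{\lifetime}\coloneqq Q_{\lifetime}\coloneqq\emptyset$.

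Suppose for the moment that all partitions $(P_i,Q_i)$ were fixed in advance. Then the stages decouple, and for each~$i$ the task of finding some $S_i\in\solutionset[I_i]$ that contains a prescribed set of at most~$\ell$ elements and avoids another prescribed set is exactly a query to \coloredPi{}: color $\BB(I_i)$ so that the ``must-contain'' elements receive one color, the ``must-avoid'' elements a second, and the remaining elements are split among the other two colors; set the first target count to the number of ``must-contain'' elements, the second to~$0$, and let the remaining two counts range over the polynomially many admissible vectors (their sum is $\abs{S_i}$ minus the first count, and $\abs{S_i}\leq\abs{\BB(I_i)}$). A single call to \Pialgo{} per vector then decides whether a fitting $S_i$ exists. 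Consequently, once we can enumerate the relevant witness partitions efficiently, a left-to-right dynamic program over the stages solves the instance: its state after stage~$i$ is the partition $(P_i,Q_i)$ just chosen, and the transition to stage~$i+1$ is resolved by the \coloredPi{}-query that enforces the ``must-contain'' set $Q_i\cup P_{i+1}$ and the ``must-avoid'' set $(P_i\cup Q_{i+1})\cap\BB(I_{i+1})$ induced by the two endpoint states.

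The difficulty --- and the part I expect to be the main obstacle --- is that there are $\binom{n}{\ell}$ choices for each witness, far more than the $2^{\bigO(\ell)}$ we can afford per boundary. This is remedied by color-coding: fix a family of colorings of the ground set into $\bigO(\ell)$ colors, either drawn at random (a fixed witness is colorful with probability $2^{-\bigO(\ell)}$, boosted by repetition) or taken from a perfect hash family of size $2^{\bigO(\ell)}\log n$ to obtain a deterministic variant introducing no additional error, and restrict attention to colorful witnesses. A colorful witness interface between two consecutive stages can be summarized by $\bigO(\ell)$ bits --- for each color, whether and on which side its element is used --- so that there are only $2^{\bigO(\ell)}$ states and the above dynamic program becomes finite in~$\ell$. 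Making this rigorous requires engineering the coloring scheme so that each such interface state is faithfully expressible through the exact-count constraints of a \coloredPi{}-query on the relevant stage, and so that the states chain consistently along the dynamic program; that every stage must simultaneously encode both its left and its right interface in this way is exactly what dictates the four colors of the auxiliary problem. Once this bookkeeping is in place, correctness follows by tracing the witness characterization through the construction.

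It remains to collect the bounds. The dynamic program performs $\lifetime\cdot 2^{\bigO(\ell)}\cdot n^{\bigO(1)}=2^{\bigO(\ell)}\cdot n^{\bigO(1)}$ calls to \Pialgo{} (using $\lifetime\leq n$), repeated over the $2^{\bigO(\ell)}\log n$ colorings. Running each call with error probability $p'\coloneqq\nicefrac{p}{\lifetime\,2^{\bigO(\ell)} n^{\bigO(1)}}$ and a union bound over all calls gives total error at most~$p$; by \cref{assumption} each call costs $f(\param)\cdot n^{\bigO(1)}\cdot g(p')$, where $\param$ is the parameter of the queried instance, hence at most $f(\param_{\max})\cdot n^{\bigO(1)}\cdot g(p')$ after replacing $f$ by a monotone upper bound and using that $g$ is non-increasing. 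Multiplying through yields the claimed running time $2^{\bigO(\ell)}\cdot f(\param_{\max})\cdot n^{\bigO(1)}\cdot g(\nicefrac{p}{\lifetime\, 2^{\bigO(\ell)}n^{\bigO(1)}})$. The degenerate cases $\ell=0$ and $\lifetime\leq 1$ are immediate: each $S_i$ then only has to be an arbitrary solution of~$I_i$, which is one \coloredPi{}-query per stage with all counts fixed.
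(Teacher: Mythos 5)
Your witness characterization of diversity is correct, and the high-level plan (reduce each stage to \coloredPi{}-queries, then chain with a DP) is a reasonable instinct. But the step you flag as ``the part I expect to be the main obstacle'' is indeed where the argument breaks, and I do not see how to repair it along the lines you sketch. Two problems are intertwined. First, the ``must-avoid'' constraint cannot be enforced at the color level. After color-coding the at most $O(\ell)$ witness elements, the DP state can only remember, for each color class, whether that class contributes to $P_i$ or $Q_i$. To chain, the query for stage~$i$ must then force $S_i$ to avoid \emph{all} elements of the colors assigned to $P_{i-1}\cup Q_i$, since the actual elements are not part of the state. This is strictly stronger than the true requirement $S_i\cap(P_{i-1}\cup Q_i)=\emptyset$: the solution $S_i^*$ of a valid sequence may well contain other elements that happen to share a color with $P_{i-1}\cup Q_i$. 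Color-coding guarantees the witness elements get distinct colors; it does not clear the rest of the ground set off those colors. So the algorithm as described is incomplete. Second, even setting that aside, \coloredPi{} takes a $4$-coloring with \emph{exact total counts} per color, whereas your interface needs per-class constraints across $O(\ell)$ color classes (``at least one of color~$j$, none of color~$j'$''). Collapsing the $O(\ell)$ classes into four groups and fixing group totals does not control the distribution within a group, so the oracle cannot enforce ``one element per must-contain class''. You would effectively need an $O(\ell)$-colored oracle, which the framework does not provide (and which, in general, costs $n^{\Omega(\ell)}$). Finally, the witnesses $W_{i-1}$ and $W_i$ both touch $S_i$, so a single coloring must handle $2\ell$ elements simultaneously in a way that is coherent with the choices made at adjacent boundaries; your sketch does not say how a fixed family of colorings achieves this chaining.

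The paper avoids all of this by not tracking interfaces at all. It introduces the notion of an \emph{$\ell$-diverse representative} of $\mathcal R(I_i)$ (\cref{def:diverse-families}): a small subfamily $\widehat{\mathcal F}_i$ such that, for any true solution $S$ and any neighbors $A,B$ with $\min\{|A\Delta S|,|B\Delta S|\}\geq\ell$, some $\widehat S\in\widehat{\mathcal F}_i$ can replace $S$. The DP then runs over \emph{actual solutions} from $\widehat{\mathcal F}_1,\dots,\widehat{\mathcal F}_\lifetime$, so there is no interface encoding to get wrong. The size bound $2^{O(\ell)}n^{O(1)}$ comes from a case analysis: if three pairwise $2\ell$-apart solutions exist, those three alone are representative (\cref{lem:three-matchings}); otherwise all solutions are within $O(\ell)$ of one or two anchors, so the relevant symmetric differences are $O(\ell)$ and can be color-coded into a genuine $4$-coloring feeding \coloredPi{} (\cref{lem:colored-graph-stay-close-to-M,lem:colored-graph-small-syms,lem:one-matchings,lem:two-matchings}). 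I would encourage you to look for this kind of representative-family argument: it sidesteps the interface-encoding difficulty entirely by making the DP states be solutions, not summaries of witnesses.
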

\ifarxiv{}The proof of \cref{thm:diverse-multistage-pi} is deferred to the end of this section.\fi{}
Note that, if we have a non-randomized algorithm in \cref{assumption} (that is, $g(0)$ is defined and $g$ maps always to one),
then \cref{thm:diverse-multistage-pi-det} follows directly from \cref{thm:diverse-multistage-pi}.

The underlying strategy of the algorithm for a \dmPi{}-instance $J$ behind \cref{thm:diverse-multistage-pi} is to compute
for each instance $I$ of $\Pi$ in $J$ a solution family such that the Cartesian product of these families
contains a solution for $J$ if and only if $J$ is a \yes-instance.
Once these families are obtained, we can check whether $J$ is a \yes-instance by dynamic programming.
To this end, we compute a small subset of $\solutionset$ satisfying the following definition.
\newcommand{\F}{\mathcal{F}}
\newcommand{\wF}{\widehat{\mathcal{F}}}
\newcommand{\wS}{\widehat{S}}
\begin{definition}
		\label{def:diverse-families}
		Let $\F$ be a set family.
		A subfamily of $\wF \subseteq \F$ is called an \emph{$\ell$-diverse representative} of $\F$
		if, for any $S\in\F$ and sets~$A, B$ 
		with $\min\{|A \Delta S|,|B \Delta S|\} \geq \ell$,
		there is an $\wS \in \wF$ such that $\min\{|A \Delta \wS|,|B \Delta \wS|\} \geq \ell$.
\end{definition}
\noindent
First of all, we note that $\ell$-diverse representatives can be rather small.
\begin{lemma}
		\label{lem:three-matchings}
		Let $\F$ be a set family and $S_1, S_2, S_3 \in \F$.
		If $|S_i \Delta S_j| \geq 2\ell$ for all distinct $i,j \in \oneto{3}$,
		then $\{ S_1,S_2,S_3\}$ is an $\ell$-diverse representative of $\F$.
\end{lemma}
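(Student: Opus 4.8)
The plan is to reduce the statement to a purely combinatorial claim about symmetric differences and then apply the pigeonhole principle together with the triangle inequality. First I would observe that in \cref{def:diverse-families} the set $S$ only serves to make the hypothesis non-vacuous: the conclusion merely asks for \emph{some} $\widehat S \in \widehat{\mathcal F}$ with $\min\{|A \Delta \widehat S|, |B \Delta \widehat S|\} \geq \ell$, without any reference to $S$. Hence it suffices to show the following: for arbitrary sets $A$ and $B$, at least one of $S_1, S_2, S_3$ satisfies both $|A \Delta S_i| \geq \ell$ and $|B \Delta S_i| \geq \ell$; then that $S_i$ can serve as $\widehat S$.

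Next I would record the triangle inequality for the symmetric difference: for any sets $X, Y, Z$ one has $X \Delta Z \subseteq (X \Delta Y) \cup (Y \Delta Z)$, and therefore $|X \Delta Z| \leq |X \Delta Y| + |Y \Delta Z|$. This is the only external fact needed.

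Then I would argue by contradiction. Suppose none of $S_1, S_2, S_3$ is good for the pair $(A,B)$; that is, for every $i \in \oneto{3}$ we have $|A \Delta S_i| < \ell$ or $|B \Delta S_i| < \ell$. Label index $i$ by ``$A$'' in the first case and by ``$B$'' in the second (breaking ties arbitrarily). Since there are three indices and only two labels, by the pigeonhole principle two distinct indices $i, j$ receive the same label, say ``$A$''; thus $|A \Delta S_i| < \ell$ and $|A \Delta S_j| < \ell$. Applying the triangle inequality with $Y = A$ gives $|S_i \Delta S_j| \leq |S_i \Delta A| + |A \Delta S_j| < 2\ell$, contradicting the hypothesis $|S_i \Delta S_j| \geq 2\ell$. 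Hence some $S_i$ is good, which is what we needed.

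I do not expect any real obstacle here. The only points requiring a little care are making the (essentially trivial) reduction to the statement about $A$ and $B$ alone fully explicit, and invoking the triangle inequality for $\Delta$ in the right direction; everything else is a two-line pigeonhole argument.
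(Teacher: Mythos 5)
Your proof is correct and takes essentially the same route as the paper: both argue by contradiction using the triangle inequality for symmetric difference, and both in fact establish the stronger unconditional claim (some $S_i$ is $\geq\ell$-far from both $A$ and $B$ for arbitrary $A,B$, with the quantifier over $S$ never used). The only difference is presentational — you invoke the pigeonhole principle to collapse the case analysis, whereas the paper walks through the cases explicitly (WLOG $|A\Delta S_1|<\ell$, deduce $|A\Delta S_2|,|A\Delta S_3|>\ell$, hence $|B\Delta S_2|<\ell$, then contradict via $S_3$); the pigeonhole phrasing is arguably a bit cleaner but the underlying argument is identical.
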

\begin{proof}
		Assume for contradiction that there exist sets $A$ and $B$
		with $\min\{\abs{A \Delta S_i}, \abs{B \Delta S_i}\} < \ell$ for all~$i$.
		Without loss of generality, assume that $\abs{A \Delta S_1} < \ell$.
		Then for $j \in \{2, 3\}$ we have
		$\abs{A \Delta S_j} \geq \abs{S_1 \Delta S_j} - \abs{S_1 \Delta A} > 2\ell - \ell = \ell$
		by the triangle inequality.
		Therefore, $\abs{B \Delta S_2} < \ell$.
		Again, by the triangle inequality
		$\abs{B \Delta S_3} \geq \abs{S_2 \Delta S_3} - \abs{S_2 \Delta B} > 2\ell - \ell = \ell$,
		i.e., $\min\{\abs{A \Delta S_3}, \abs{B \Delta S_3}\} \geq \ell$ --- a contradiction.
\end{proof}
In the following, we measure the distance of two solutions by the size of the symmetric difference. 
In a nutshell, we compute an $\ell$-diverse representative of the family of solutions by
first trying to compute three solutions which are far apart from each other (that is, 	size of symmetric difference at least $2\ell$). 
If this succeeds, then by \cref{lem:three-matchings} we are done.
Otherwise, we distinguish between three cases. 
\begin{description}
		\item[No solution.] If there is no solution at all, 
				then trivially~$\emptyset$ is an $\ell$-diverse representative of the family of solutions.
			\item[One solution.] If we only find one solution $S_1$ to the instance of $\Pi$,
				then each other solution is close to $S_1$.
				Hence, for any two sets $A,B$, if one of them is far away from~$S_1$, then by the triangle inequality it is also far away from every other solution and can be safely ignored.
				For those sets which are close to~$S_1$, we can exploit the upper bound on the symmetric difference 
				by using color-coding \cite{alon1995color} and then applying \cref{assumption}
				to compute an $\ell$-diverse representative of the family of solutions.
				This case is handled in \cref{lem:one-matchings}.
		\item[Two solutions.] If we find two diverse solutions $S_1$ and $S_2$ such that no other solution is far away from both,
				then $S_1$ and $S_2$ partition the solution space into two parts: 
				the solutions close to $S_1$ and those close to $S_2$.
				Again, given two sets~$A, B$, if either of them is far away from~$S_1$ and $S_2$, then we may ignore it.
				By including $S_1$~and~$S_2$ in our family,
				we may further assume that $A$ is similar to~$S_1$ and $B$ is similar to~$S_2$.
				We distinguish two subcases.
				If the distance between $S_1$~and~$S_2$ is very large,
				then $A$~is far away from all solutions in the second part
				and $B$~is far away from all solutions in the first part.
				We can thus ignore one of them (say $B$) and 
				exploit the fact that $A$, $S_1$, and all solutions of interest are close to each other
				to use color-coding and then apply \cref{assumption}.
				In the other subcase where the distance between $S_1$~and~$S_2$ is bounded,
				we can utilize that fact similarly.
				This case is handled in \cref{lem:two-matchings}.
\end{description}

\noindent
Hereafter, the details.
Before we dive into the case distinction outlined above,
we need to prove two technical lemmata,
telling us how to build a diverse representative set
that works for all sets obeying some given coloring of the elements of $\BB(I)$.
These will later work as building blocks in the construction of proper diverse representatives.
In the first lemma,
only two colors are used,
and we are only concerned with one arbitrary set~$A$ instead of two.
\ifarxiv{}
\begin{lemma}
\else{}
\begin{lemma}
		[\appref{lem:colored-graph-stay-close-to-M}]
\fi{}
		\label{lem:colored-graph-stay-close-to-M}
		Let \cref{assumption} be true.
		Given an instance $I$ of $\Pi$ of size $n$, 
		a coloring~$c \colon \BB(I) \to \oneto{2}$,
		and a solution $M \in \solutionset$,
		one can compute in $f(\param) n^{O(1)} g(pn^{-4})$~time
		and with error probability at most~$p$
		a family~$\F \subseteq \solutionset$ of size at most~$n^4$
		such that for any $S \in \solutionset$
		and any~$A \subseteq \BB(I)$ with
		$S \setminus A \subseteq c^1$ and
		$A \setminus S \subseteq c^2$,
		there is $\wS \in \F$ with 
		$|A\Delta \wS| \geq |A\Delta S|$ and $|M\Delta \wS| = \abs{M \Delta S}$.
\end{lemma}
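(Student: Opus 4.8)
The plan is to keep one representative solution for each ``shape'' that a solution can have relative to the coloring~$c$ and the fixed solution~$M$, where the shape is coarse enough to take only polynomially many values but fine enough that same-shape solutions are interchangeable for the purposes of the lemma.

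Concretely, I would refine~$c$ into a $4$-coloring $c'\colon\BB(I)\to\oneto{4}$ by partitioning~$\BB(I)$ into $P_1\coloneqq c^1\cap M$, $P_2\coloneqq c^1\setminus M$, $P_3\coloneqq c^2\cap M$, $P_4\coloneqq c^2\setminus M$, and setting $c'(x)\coloneqq j$ for $x\in P_j$. Call $(|S\cap P_1|,\dots,|S\cap P_4|)$ the \emph{profile} of a solution~$S$. Since $\sum_j|P_j|=|\BB(I)|\le n$, the number of profiles is at most $\prod_{j}(|P_j|+1)\le\bigl((n+4)/4\bigr)^4\le n^4$, the last step using $n\ge|\BB(I)|\ge 2$. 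For each tuple $(n_1,\dots,n_4)$ with $0\le n_j\le|P_j|$, I would run the algorithm~$\Pialgo$ of \cref{assumption} on~$I$ with coloring~$c'$ and target sizes $(n_1,\dots,n_4)$ and error probability $p'\coloneqq p\cdot n^{-4}$, adding the returned solution to~$\F$ whenever one is returned. This makes at most $n^4$ calls, so $\F\subseteq\solutionset$ with $|\F|\le n^4$, the total running time is $f(\param)\cdot n^{O(1)}\cdot g(p\cdot n^{-4})$, and a union bound shows that the probability that some call errs --- the only way the construction can fail --- is at most $n^4\cdot p\cdot n^{-4}=p$.

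It then remains to verify the representation property. Fix $S\in\solutionset$ and $A\subseteq\BB(I)$ with $S\setminus A\subseteq c^1$ and $A\setminus S\subseteq c^2$; unfolding these gives $S\cap c^2\subseteq A\cap c^2$ and $A\cap c^1\subseteq S\cap c^1$. As $S$ has profile $(|S\cap P_1|,\dots,|S\cap P_4|)$, a non-erring call on that tuple put some $\wS\in\F$ of the same profile into~$\F$. One then checks $|M\cap\wS|=|\wS\cap P_1|+|\wS\cap P_3|=|S\cap P_1|+|S\cap P_3|=|M\cap S|$ and $|\wS|=|S|$, hence $|M\Delta\wS|=|M|+|\wS|-2|M\cap\wS|=|M\Delta S|$. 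For the other inequality, I would split both symmetric differences along $\BB(I)=c^1\uplus c^2$: from $A\cap c^1\subseteq S\cap c^1$ and $|S\cap c^1|=|\wS\cap c^1|$ one gets $|(A\cap c^1)\Delta(\wS\cap c^1)|\ge|\wS\cap c^1|-|A\cap c^1|=|S\cap c^1|-|A\cap c^1|=|(A\cap c^1)\Delta(S\cap c^1)|$, and symmetrically, from $S\cap c^2\subseteq A\cap c^2$ and $|S\cap c^2|=|\wS\cap c^2|$ one gets $|(A\cap c^2)\Delta(\wS\cap c^2)|\ge|A\cap c^2|-|\wS\cap c^2|=|(A\cap c^2)\Delta(S\cap c^2)|$; summing yields $|A\Delta\wS|\ge|A\Delta S|$.

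The only genuinely delicate point is choosing the profile correctly: it must simultaneously record how~$S$ meets the colour classes \emph{and} how it meets~$M$ --- precisely the four-way split that~$\Pialgo$ can prescribe --- since preserving only $|S\cap c^i|$ would fail to control $|M\Delta S|$, while preserving only $|S\cap M|$ and $|S|$ would fail to control $|A\Delta S|$. Once that is set up, the rest is routine: calibrating the per-call error to $p\cdot n^{-4}$ and observing that there are at most $n^4$ calls.
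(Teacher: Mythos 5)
Your proof is correct and follows essentially the same approach as the paper: both partition $\BB(I)$ into the four classes $c^i\cap M$, $c^i\setminus M$, enumerate the at most $n^4$ possible profiles, invoke the \cref{assumption} algorithm with per-call error probability $p\cdot n^{-4}$, and then verify the representation property by comparing profile counts (the paper phrases the final step via $|A\cap S|\ge|A\cap\wS|$ and $|S|=|\wS|$, while you split $|A\Delta\cdot|$ along $c^1\uplus c^2$, but these are algebraically equivalent).
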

\appendixproof{lem:colored-graph-stay-close-to-M}{
\begin{proof}
		Let %
		$F'_{1} := c^1 \cap M$, $F'_{2} := c^2 \cap M$, $F'_{3} := c^1 \setminus M$, and $F'_{4} := c^2 \setminus M$.

		Start with $\F = \emptyset$.
		Then, for each $m \leq n$ and each partition $\sum_{i=1}^4 m_i = m$,
		use algorithm~\Pialgo{} to search in $f(\param)n^{O(1)}g(pn^{-4})$~time
		and with error probability at most~$pn^{-4}$
		for a set $N \in \solutionset$ such that $|N\cap F'_i| = m_i$ for all $i \in \oneto{4}$.
		If this succeeds, then we add $N$ to $\F$.
		Since there are $\binom{n+4}{4} \leq n^4$ possibilities for $m_1, \dots, m_4$,
		the probability of an error occurring is upper-bounded by $p$.
		Moreover, the size of $\F$ is upper-bounded by~$n^4$
		and hence the time required is bounded by $f(\param)n^{O(1)}g(pn^{-4})$.
		
		It remains to be proven that $\F$ has the desired properties.
		Let $S \in \solutionset$ be arbitrary		
		and set $m_i := |S \cap F'_i|$ for all $i \in \oneto{4}$.
		By construction, $\F$~contains a set~$\wS \in \solutionset$
		such that $|\wS \cap F'_i| = m_i$.
		We then have $|\wS \Delta M|= m_3 + m_4 + \abs{M} - m_1 - m_2 = \abs{S \Delta M}$.
			
		Let $A \subseteq \BB(I)$ be a set with
		$(S \setminus A) \subseteq c^1$ and
		$(A \setminus S) \subseteq c^2$.
		Since $A \setminus S \subseteq c^2$ we have
		\begin{align}
			\abs{ A \cap S \cap c^1 } = \abs{ A \cap c^1} \geq \abs{ A \cap \wS \cap c^1}
			\label{eq:ASE1}
		\end{align}
		and since $S \setminus A \subseteq c^1$, we have that 
		\begin{align}
			|A\cap S \cap c^2| = |S \cap c^2|= m_2+m_4 = |\wS \cap c^2| \geq |A \cap \wS \cap c^2|.
			\label{eq:ASE2}
		\end{align}
		By adding \eqref{eq:ASE1} and \eqref{eq:ASE2} we obtain 
		$\abs{A \cap S} \geq \abs{A \cap \wS}$
		which in turn implies
		$\abs{A \Delta S} \leq \abs{A \Delta \wS}$
		since $\abs{S} = \abs{\wS}$.
\end{proof}
}%
The next lemma 
extends the approach of \cref{lem:colored-graph-stay-close-to-M}
to the case where we have four colors and two arbitrary sets~$A, B$.
\begin{lemma}
		\label{lem:colored-graph-small-syms}
		Let \cref{assumption} be true.
		Given an instance $I$ of $\Pi$ of size $n$, 
		a coloring~$c \colon \BB(I) \to \oneto{4}$,
		one can compute in $f(\param) n^{O(1)} g(pn^{-4})$~time 
		and with error probability at most~$p$
		a family~$\F \subseteq \solutionset$ of size at most~$n^4$
		such that for any $S \in \solutionset$
		and	all sets $A,B \subseteq \BB(I)$ with
		$
			A \setminus (B\cup S) 
			\subseteq c^1, 
			B \setminus (A\cup S) 
			\subseteq c^2, 
			(A \cap B) \setminus S %
			\subseteq c^3, \text{ and } 
			S \setminus (A \cap B) %
			\subseteq c^4,
		$
		there is $\wS \in \F$ with 
		$|C\Delta \wS| \geq |C \Delta S|$ for all $C \in \{A,B\}$.
\end{lemma}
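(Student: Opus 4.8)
The plan is to mirror the proof of \cref{lem:colored-graph-stay-close-to-M}, except that instead of refining the coloring by a fixed solution, I would use the plain \emph{color signature} of a solution with respect to the four colors. Concretely, start with $\F = \emptyset$ and, for every $m \le n$ and every tuple $(m_1,m_2,m_3,m_4) \in \NN_0^4$ with $\sum_{i=1}^4 m_i = m$, call algorithm~$\Pialgo$ with error probability $p n^{-4}$ to search for a solution $N \in \solutionset$ with $|N \cap c^i| = m_i$ for all $i \in \oneto{4}$; whenever one is found, add it to $\F$. As in \cref{lem:colored-graph-stay-close-to-M} there are at most $\binom{n+4}{4} \le n^4$ tuples, so $|\F| \le n^4$, the total running time is $f(\param)\,n^{O(1)}\,g(pn^{-4})$, and a union bound over the at most $n^4$ calls bounds the overall error probability by~$p$.

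For correctness, fix $S \in \solutionset$ and sets $A,B$ satisfying the four coloring hypotheses, and let $m_i := |S \cap c^i|$. By construction $\F$ contains some $\wS \in \solutionset$ with $|\wS \cap c^i| = m_i$ for all $i$, and since the $c^i$ partition $\BB(I)$ we obtain $|\wS| = \sum_i m_i = |S|$. Thus $|C \Delta \wS| = |C| + |S| - 2|C \cap \wS|$ and $|C \Delta S| = |C| + |S| - 2|C \cap S|$ for $C \in \{A,B\}$, so it suffices to show $|A \cap \wS| \le |A \cap S|$ and $|B \cap \wS| \le |B \cap S|$.

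The three observations I would extract from the coloring hypotheses are: (i)~$S \setminus (A \cap B) \subseteq c^4$, hence $S \cap c^i \subseteq A \cap B$ for every $i \in \oneto{3}$; (ii)~$A \setminus S = (A \setminus (B \cup S)) \uplus ((A \cap B)\setminus S) \subseteq c^1 \cup c^3$, hence $A \cap c^4 \subseteq S$; and (iii)~symmetrically $B \setminus S \subseteq c^2 \cup c^3$, hence $B \cap c^4 \subseteq S$. Now I bound $|A \cap \wS|$ color by color. For $i \in \oneto{3}$, using~(i) and $|\wS \cap c^i| = |S \cap c^i|$,
\[
 |A \cap \wS \cap c^i| \;\le\; |\wS \cap c^i| \;=\; |S \cap c^i| \;=\; |A \cap S \cap c^i|,
\]
the last equality because $S \cap c^i \subseteq A$. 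For $i = 4$, using~(ii), $|A \cap \wS \cap c^4| \le |A \cap c^4| = |A \cap S \cap c^4|$. Summing over $i \in \oneto{4}$ yields $|A \cap \wS| \le |A \cap S|$, and the symmetric computation (using (iii) in place of (ii)) yields $|B \cap \wS| \le |B \cap S|$, finishing the proof.

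I do not expect a genuine obstacle; the only real content is choosing the right signature. In \cref{lem:colored-graph-stay-close-to-M} the signature had to be refined by the given solution~$M$ in order to keep $|M \Delta \wS|$ under control, whereas here no solution is fixed and the four hypotheses are tailored precisely so that, in each color class, either $A$ (resp.\ $B$) restricted to that class already lies in~$S$, or $S$ restricted to that class already lies in $A \cap B$ --- this dichotomy is exactly what makes the plain four-color signature sufficient, and everything after it is routine set bookkeeping.
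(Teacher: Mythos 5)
Your proposal is correct and matches the paper's proof: the construction of $\F$ (iterate over color signatures, call $\Pialgo$ for each) is identical, and your correctness argument differs only cosmetically — you bound $|A\cap\wS|$ color by color, while the paper groups $c^{1,3}$ and $c^{2,4}$ — but both hinge on the same inclusions derived from the four coloring hypotheses, so the content is the same.
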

\begin{proof}
		Begin with $\F = \emptyset$.
		Then, for each $m \leq n$ and each partition $\sum_{i=1}^4 m_i = m$, 
		use algorithm \Pialgo{} to search in $f(\param) n^{O(1)} g(pn^{-4})$~time
		and with error probability at most~$pn^{-4}$
		for an $M \in \solutionset$ such that $|M\cap c^i| = m_i$ for all $i \in \oneto{4}$.
		If this succeeds, then add $M$ to $\F$.
		Since there are $\binom{n+4}{4} \leq n^4$ possibilities for $m_1, \dots, m_4$,
		the probability of an error occurring is upper-bounded by~$p$.
		Moreover, the size of $\F$ is at most~$n^4$
		and thus the overall running time is $f(\param) n^{O(1)} g(pn^{-4})$.
		
		Now let $S \in \solutionset$ be arbitrary.		
		Set $m_i := |S \cap c^i|$, for all $i\in \oneto{4}$.
		By construction there is $\wS \in \F$ such that $|\wS \cap c^i| = m_i$ for all $i$.
		It remains to be proven that $\wS$~has the desired properties.
		To this end, let $A,B \subseteq \BB(I)$ be two sets as stated in the lemma.
		By symmetry, it suffices to show that $\abs{A \Delta \wS} \geq \abs{A \Delta S}$.
		
		Since $S \setminus A \subseteq c^4$ we have
		\begin{align}
		\abs{S \cap A \cap c^{1, 3}} = \abs{S \cap c^{1, 3}} = m_1 + m_3 = \abs{\wS \cap c^{1, 3}} \geq \abs{\wS \cap A \cap c^{1, 3}}
		\label{eq:ASE'}
		\end{align}
		and since $A \setminus S \subseteq c^{1,3}$, we have
		\begin{align}
		\abs{S \cap A \cap c^{2,4}} = \abs{A \cap c^{2,4}} \geq \abs{\wS \cap A \cap c^{2,4}}.
		\label{eq:ASE''}
		\end{align}
		By adding \eqref{eq:ASE'} and \eqref{eq:ASE''}, we obtain
		$\abs{S \cap A} \geq \abs{\wS \cap A}$
		and thus
		$\abs{S \Delta A} \leq \abs{\wS \Delta A}$
		since $\abs{S} = \abs{\wS}$.
\end{proof}
We now describe how we generate the colorings required for using \cref{lem:colored-graph-small-syms,lem:colored-graph-stay-close-to-M}.
Color-coding~\cite{alon1995color} is well-established in the toolbox of parameterized algorithms.
While color-coding was initially described as a randomized technique, we use universal sets \cite{NearOptimalDerandomization}
to derandomize this technique as shown in the next lemma.
Interestingly, without this derandomization the error probability of the color-coding step would later propagate 
through the dynamic program and consequently also depend on the number of instances of $\Pi$ in the input instance of \dmPi{}.
The derandomization works as follows.

\newcommand{\famsize}{2^{2b+o(b)} \log n}
\ifarxiv{}
\begin{lemma}
\else{}
\begin{lemma}
		[\appref{thm:derandomized-color-coding}]
\fi{}
		\label{thm:derandomized-color-coding}
For any set~$A$ of size~$n$ and any $b \leq n$ one can compute in $\famsize{} \cdot n$ time a family
of functions $\{c_j \colon A \to \oneto{4} \mid j \in \oneto{\famsize}\}$
such that for any $\biguplus_{i=1}^4 B_i \subseteq A$ with $\abs{\biguplus_{i=1}^4 B_i} \leq b$
there is a~$j$
such that $c_j(B_i) = \{i\}$, for all $i\in [4]$.
\end{lemma}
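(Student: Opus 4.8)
The plan is to obtain the family directly from an $(m,k)$-\emph{universal set} in the sense of \citet{NearOptimalDerandomization}, i.e.\ a family $\UU \subseteq \{0,1\}^m$ whose projection onto any $k$ of its coordinates is all of $\{0,1\}^k$; such a $\UU$ of size $2^k k^{\bigO(\log k)} \log m$ can be computed in time $2^k k^{\bigO(\log k)} \cdot m \log m$. A single string of $\UU$ supplies only one bit per coordinate, which is enough for two colors but not four, so the one trick needed is to work over a doubled ground set: I would set $A' \coloneqq A \times \oneto{2}$ of size $2n$ and compute an $(2n, 2b)$-universal set $\UU$. For each $u \in \UU$ define $c_u \colon A \to \oneto{4}$ by reading the two bits $u$ assigns to $(a,1)$ and $(a,2)$ and letting $c_u(a) - 1 \coloneqq 2\,u_{(a,1)} + u_{(a,2)} \in \{0,1,2,3\}$; the desired family is $\{c_u \mid u \in \UU\}$, padded by repetition to exactly $\famsize$ members if needed. (One could instead combine two independent $(n,b)$-universal sets, one per bit, but that incurs a $(\log n)^2$ factor; doubling the ground set keeps it at $\log n$.)

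For the correctness property, I would fix disjoint $B_1,\dots,B_4 \subseteq A$ with $\abs{\biguplus_{i=1}^4 B_i} \leq b$ and consider the coordinate set $T \coloneqq \{ (a,j) \mid a \in \biguplus_{i=1}^4 B_i,\ j \in \oneto{2} \} \subseteq A'$, which satisfies $\abs{T} \leq 2b$. By universality of $\UU$ there is some $u \in \UU$ realising on $T$ the pattern that, for each $a \in B_i$, sets $(u_{(a,1)}, u_{(a,2)})$ to the two-bit binary encoding of $i - 1$; for this $u$ we then have $c_u(B_i) = \{i\}$ for every $i \in \oneto{4}$, as required.

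What remains is to verify the size and running-time bounds, which is where essentially all of the (still routine) work lies. The family $\UU$ has size $2^{2b} \cdot (2b)^{\bigO(\log 2b)} \cdot \log(2n)$; the point is that the overhead $(2b)^{\bigO(\log 2b)} = 2^{\bigO(\log^2 b)}$ is subexponential in $b$, hence of the form $2^{o(b)}$, while $\log(2n) = \bigO(\log n)$, so $\abs{\UU} = 2^{2b + o(b)} \log n = \famsize$. The construction time $2^{2b}(2b)^{\bigO(\log 2b)} \cdot 2n \log(2n)$ collapses to $\famsize \cdot n$ by the same estimate, and reading off all the $c_u$ from $\UU$ adds only $\bigO(\abs{\UU} \cdot n) = \famsize \cdot n$. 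The only thing one has to be slightly careful about is that passing from $b$ to $2b$ coordinates and from $n$ to $2n$ ground elements both stay within the stated $2^{2b+o(b)} \log n$ bound --- which they do precisely because the $k^{\bigO(\log k)}$ factor absorbs constant changes in $k$ and the $\log m$ factor absorbs the doubling of $m$. I do not anticipate any genuine obstacle beyond this bookkeeping.
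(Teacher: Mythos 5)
Your proposal is correct and takes essentially the same route as the paper: both double the ground set (you via $A \times \oneto{2}$, the paper via $[2n]$ with indices $i$ and $i+n$), invoke an $(2n,2b)$-universal set of \citet{NearOptimalDerandomization}, and decode the two bits per element into one of four colors. The only differences are cosmetic — a different bit-to-color encoding and an explicit padding of the coordinate set to size exactly $2b$ in the paper, which you handle implicitly — so this matches the paper's argument.
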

\appendixproof{thm:derandomized-color-coding}{
\begin{proof}
Let $A := \{a_1, \dots, a_{n}\}$.
By a result of
\citet{NearOptimalDerandomization},
one can compute in $2^{2b}b^{\bigO(\log b)} \cdot \log n \cdot n \subseteq \famsize{} \cdot n$ time a so-called \emph{$(2n, 2b)$-universal set}
which is a family~$\UU \subseteq 2^{\oneto{2n}}$ such that for every $B' \subseteq A$ with~$|B'| = 2b$
the family $\{B' \cap U \mid U \in \UU\}$ contains all $2^{2b}$ subsets of $B'$.
Let~$\UU := \{U_i\}_{i=1}^{\famsize}$.
We then define $c_j$, $j \in \famsize$, by
\[
	c_j(a_i) := 
	\begin{cases}
		1, & \text{if } i, i + n \in U_j, \\
		2, & \text{if } i \in U_j \text{ and } i + n \notin U_j, \\
		3, & \text{if } i \notin U_j \text{ and } i + n \in U_j, \text{ and} \\
		4, & \text{if } i, i + n \notin U_j.
	\end{cases}
\]
Now let $B_1 \uplus B_2 \uplus B_3 \uplus B_4 \subseteq A$ be an arbitrary $4$-partition of a subset of $A$ of size at most~$b$.
Consider $B' := \{i, i+ n \mid a_i \in \bigcup_{q=1}^4 B_q\}$.
We assume that $B'$ is of size $b$, otherwise we add arbitrary elements from $[2n]$.
Since $B'':=\{i, i+ n \mid a_i \in B_1\} \cup \{i \mid a_i \in B_2\} \cup \{i+n \mid a_i \in B_3\} \subseteq B'$
there is an~$U_j \in \UU$ such that $B' \cap U_j = B''$.
Hence, $c_j(B_i) = \{i\}$, for all $i\in [4]$.
\end{proof}
}%

\noindent We now show how to generate an $\ell$-diverse representative of the family of solutions
if there is one solution~$M^*$ from which no other solution differs by more than~$2\ell$.
\newcommand{\famsizetwo}{2^{16\ell+o(\ell)} \log n}
\ifarxiv{}
\begin{lemma}
\else{}
\begin{lemma}
		[\appref{lem:one-matchings}]
\fi{}
		\label{lem:one-matchings}
		Let \cref{assumption} be true.
		Given an instance $I$ of $\Pi$ of size $n$, 
		and a solution $M^* \in \solutionset$ 
		such that each $M \in \solutionset$ satisfies $|M \Delta M^*| \leq 2\ell$,
		one can compute in $\famsizetwo \cdot f(\param)\cdot n^{O(1)} \cdot g(\nicefrac{p}{\famsizetwo\cdot n^4})$ time
		and with error probability~$p$
		an $\ell$-diverse representative of $\solutionset$ of size at most $\famsizetwo \cdot n^4$.
\end{lemma}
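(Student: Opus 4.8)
The plan is to reduce the problem to \cref{lem:colored-graph-small-syms} by means of the derandomized color-coding of \cref{thm:derandomized-color-coding}, using crucially that — by assumption — every solution lies within symmetric difference $2\ell$ of $M^*$, so that the portion of any two sets $A, B$ whose distance to a solution must be preserved has size only $O(\ell)$.

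Concretely, I would first apply \cref{thm:derandomized-color-coding} to the base set $\BB(I)$ with color budget $b := \min\{8\ell, n\}$, computing in $\famsizetwo \cdot n$ time a family $\{c_j \colon \BB(I) \to \oneto{4}\}_{j=1}^{N}$ with $N \le \famsizetwo$. Then, for each $j \in \oneto{N}$, I would run \cref{lem:colored-graph-small-syms} on the instance $I$ with the coloring $c_j$ and error parameter $\nicefrac{p}{N}$, obtaining a family $\F_j \subseteq \solutionset$ with $|\F_j| \le n^4$. The output is $\wF := \{M^*\} \cup \bigcup_{j=1}^{N} \F_j$. Since $N \le \famsizetwo$ and $g$ is monotone non-increasing, the total running time is $\famsizetwo \cdot f(\param) \cdot n^{O(1)} \cdot g(\nicefrac{p}{\famsizetwo \cdot n^4})$, the size of $\wF$ is at most $\famsizetwo \cdot n^4$ (absorbing the single set $M^*$), and a union bound over the $N$ invocations of \cref{lem:colored-graph-small-syms} bounds the overall error probability by $p$.

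The crux is to check that $\wF$ is an $\ell$-diverse representative of $\solutionset$. Fix $S \in \solutionset$ and $A, B \subseteq \BB(I)$ with $\min\{|A \Delta S|, |B \Delta S|\} \ge \ell$, and recall $|S \Delta M^*| \le 2\ell$. If both $|A \Delta M^*| \ge 3\ell$ and $|B \Delta M^*| \ge 3\ell$, then $M^* \in \wF$ already works since $\min\{|A \Delta M^*|, |B \Delta M^*|\} \ge 3\ell \ge \ell$. Otherwise, by symmetry assume $|A \Delta M^*| < 3\ell$ and put
\[
  B_1 := A \setminus (B \cup S),\quad B_2 := B \setminus (A \cup S),\quad B_3 := (A \cap B) \setminus S,\quad B_4 := S \setminus (A \cap B)
\]
if in addition $|B \Delta M^*| < 3\ell$, and otherwise put $B_1 := B_2 := \emptyset$, $B_3 := A \setminus S$, and $B_4 := S \setminus A$ (i.e., in the second subcase I treat $B$ as a copy of $A$). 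In either subcase the $B_i$ are pairwise disjoint; moreover $\biguplus_i B_i \subseteq (A \Delta M^*) \cup (B \Delta M^*) \cup (S \Delta M^*)$ in the first subcase, whence $|\biguplus_i B_i| < 3\ell + 3\ell + 2\ell = 8\ell$, while $\biguplus_i B_i = A \Delta S \subseteq (A \Delta M^*) \cup (S \Delta M^*)$ in the second, whence $|\biguplus_i B_i| < 5\ell$; as also $\biguplus_i B_i \subseteq \BB(I)$, in both subcases $|\biguplus_i B_i| \le b$. Hence \cref{thm:derandomized-color-coding} provides a $c_j$ that colors every element of $B_i$ with color $i$, which is precisely the precondition of \cref{lem:colored-graph-small-syms} for the solution $S$ and the two sets $A, B$ (respectively $A, A$). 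Therefore $\F_j$ contains a set $\wS$ with $|A \Delta \wS| \ge |A \Delta S| \ge \ell$; in the first subcase also $|B \Delta \wS| \ge |B \Delta S| \ge \ell$, and in the second $|B \Delta \wS| \ge |B \Delta M^*| - |M^* \Delta \wS| \ge 3\ell - 2\ell = \ell$ because $\wS \in \solutionset$. So $\wS \in \wF$ is as required.

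I expect the one genuinely delicate point to be the bound $|\biguplus_i B_i| < 8\ell$, which is exactly what pins down the constant $16\ell$ in $\famsizetwo$: it rests on the fact that in the nontrivial case both $A$ and $B$ may be assumed to lie within distance $3\ell$ of $M^*$ — a side at distance $\ge 3\ell$ from $M^*$ being automatically at distance $\ge \ell$ from every solution — together with $|S \Delta M^*| \le 2\ell$. The remaining running-time, size, and error-probability accounting is routine.
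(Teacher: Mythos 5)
Your proof is correct and follows essentially the same strategy as the paper's: generate a $2^{16\ell+o(\ell)}\log n$-size family of $4$-colorings via \cref{thm:derandomized-color-coding} with budget $\bigO(\ell)$, feed each into \cref{lem:colored-graph-small-syms}, take the union together with $\{M^*\}$, and argue correctness by splitting into the case where both $A,B$ are close to $M^*$ (use the genuine four-part coloring for $(A,B,S)$) and the case where one of them is far from $M^*$ (replace it by a duplicate of the other and invoke the triangle inequality with $|M^*\Delta\wS|\le 2\ell$). The only cosmetic difference is that the paper uses threshold $\ell$ for the initial ``$M^*$ already works'' check and then derives $|A\Delta S|<3\ell$, whereas you use threshold $3\ell$ directly; both routes land on the budget $b=8\ell$ and hence the stated $\famsizetwo$ bound.
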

\appendixproof{lem:one-matchings}{
\begin{proof}
		For simplicity, let $\JJ := [\famsizetwo]$.
		Apply \cref{thm:derandomized-color-coding} with $b = 8\ell$ to compute
		in $\famsizetwo \cdot n$~time
		a family of colorings $\{c_j \colon \BB(I) \to \oneto{4} \mid j \in \JJ\}$.
		By \cref{thm:derandomized-color-coding} this family has size $\abs{\JJ}$.
		For each $j\in \JJ$, apply \cref{lem:colored-graph-small-syms} to $I$ and $c_j$
		to compute a family~$\F_j \subseteq \solutionset$
		with error probability~$p\cdot\abs{\JJ}^{-1}$.
		Observe that the probability of an error occurring at any of the~$\abs{\JJ}$ steps is bounded by $p$.
		Choose $\F := \{M^*\} \cup \bigcup_{j \in \JJ} \F_j$.
		According to \cref{lem:colored-graph-small-syms}
		the size of $\F$ is upper-bounded by $\abs{\JJ} \cdot n^4$
		and the time required is bounded by $\abs{\JJ} \cdot f(\param)\cdot n^{O(1)}\cdot g(p n^{-4}\cdot \abs{\JJ}^{-1})$.

		We now show that $\F$ is an $\ell$-diverse representative of $\solutionset$.
		To this end, let $S \in \solutionset$
		and let $A,B$ be two arbitrary sets such that
		$|A \Delta S| \geq \ell$ and $|B \Delta S| \geq \ell$.
		Since $M^* \in \F$, we may assume by symmetry that, say, 
		$|M^* \Delta A| < \ell$, otherwise we are done.
		Note that $|M^* \Delta S| \leq 2\ell$ 
		and that $|A \Delta S| \leq |A \Delta M^*| + |M^* \Delta S| < 3\ell$.
		We say that some coloring~$c$ is \emph{good} for $A, B, S$ if the conditions of \cref{lem:colored-graph-small-syms} are satisfied,
		i.e.\ if
		\begin{align*}
			A \setminus (B\cup S) &\subseteq c^1, &
			B \setminus (A\cup S) &\subseteq c^2, &
			(A \cap B) \setminus S &\subseteq c^3, \text{ and} &
			S \setminus (A\cap B) &\subseteq c^4.
		\end{align*}
		We distinguish between two cases.
		\begin{description}
				\item[\textbf{Case 1: $|M^* \Delta B| < 3\ell$.}]
					Then $|B \Delta S| \leq |B \Delta M^*| + |M^* \Delta S| \leq 5\ell$.
					According to \cref{thm:derandomized-color-coding}
					there is an $i \in \JJ$ such that coloring $c_i$ is good for $A,B,S$,
					since $|B \Delta S| + |A \Delta S| < 8\ell$.
					By \cref{lem:colored-graph-small-syms} and construction of $\F_i$,
					there is an $\wS \in \F_i \subseteq \F$
					such that $|\wS \Delta A| \geq |S \Delta A| \geq \ell$ and
					$|\wS \Delta B| \geq |S \Delta B| \geq \ell$.
					
				\item[\textbf{Case 2: $|M^* \Delta B| \geq 3\ell$.}]
					Set $B':=A$.
					According to \cref{thm:derandomized-color-coding}
					there is an $i \in \JJ$ such that coloring $c_i$ is good for $A,B',S$,
					since $|B' \Delta S| + |A \Delta S| < 6\ell$.
					Thus, by \cref{lem:colored-graph-small-syms} and by the construction of $\F_i$
					there is an $\wS \in \F_i \subseteq \F$
					such that $|\wS \Delta A| \geq |S \Delta A| \geq \ell$.
					Finally, we observe that
					$\abs{\wS \Delta B} \geq \abs{M^* \Delta B} - \abs{M^* \Delta \wS} \geq 3\ell - 2\ell \geq \ell$
					by the triangle inequality.
		\end{description}
		This completes the proof.
\end{proof}
}%
Next, we show how to generate an $\ell$-diverse representative of the family of solutions
if there are two solutions such that no other solution differs from both by more than~$2\ell$.
\newcommand{\famsizethree}{2^{20\ell+o(\ell)} \log n}
\begin{lemma}
		\label{lem:two-matchings}
		Let \cref{assumption} be true.
		Let $I$ be an $\Pi$-instance of size $n$, 
		and $M_1, M_2 \in \solutionset$ such that $\abs{M_1 \Delta M_2} \geq 2\ell$ and
		each $M \in \solutionset$ has $\min\{\abs{M \Delta M_1}, \abs{M \Delta M_2}\} \leq 2\ell$.
		Then one can compute, in $\famsizethree{}\cdot f(\param) n^{\bigO(1)} g(\nicefrac{p}{n^4 \famsizethree{}})$ time and with error probability~$p$, 
		an $\ell$-diverse representative of $\solutionset$ of size $\famsizethree{} \cdot n^4$.
\end{lemma}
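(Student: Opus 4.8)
The plan is to mirror the proof of \cref{lem:one-matchings}, anchoring now at $M_1$ and $M_2$ simultaneously and, when these lie far apart, falling back to the two-colour \cref{lem:colored-graph-stay-close-to-M}. First I would set $\JJ := [\famsizethree]$ and $b := \min\{10\ell, n\}$ and use \cref{thm:derandomized-color-coding} to compute, in $\famsizethree \cdot n$ time, a family $\{c_j \colon \BB(I) \to \oneto{4} \mid j \in \JJ\}$ of size $\abs{\JJ}$; from each $c_j$ I define the $2$-colouring $\bar{c}_j$ by $\bar{c}_j(x) := 1$ if $c_j(x) = 1$ and $\bar{c}_j(x) := 2$ otherwise. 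Then, for every $j \in \JJ$, I make three calls, each with error parameter $\nicefrac{p}{3\abs{\JJ}}$: one call of \cref{lem:colored-graph-small-syms} on $(I, c_j)$ and one call of \cref{lem:colored-graph-stay-close-to-M} on $(I, \bar{c}_j, M_t)$ for each $t \in \{1, 2\}$. Collecting the returned solutions into $\F_j$ and putting $\F := \{M_1, M_2\} \cup \bigcup_{j \in \JJ} \F_j$, a union bound over the $O(\abs{\JJ})$ calls bounds the overall error by $p$, while the size bound $\famsizethree \cdot n^4$ and the running time follow from those of the invoked lemmas and the monotonicity of $g$ (the constant number of calls per index being absorbed into the $o(\ell)$ and $n^{O(1)}$ terms).

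For correctness I would fix $S \in \solutionset$ and sets $A, B$ with $\abs{A \Delta S} \geq \ell$ and $\abs{B \Delta S} \geq \ell$ and produce $\wS \in \F$ with $\min\{\abs{A \Delta \wS}, \abs{B \Delta \wS}\} \geq \ell$. If $\min\{\abs{A \Delta M_t}, \abs{B \Delta M_t}\} \geq \ell$ for some $t \in \{1, 2\}$, then $\wS := M_t \in \F$ works; otherwise each $M_t$ is within distance $\ell$ of $A$ or of $B$, so after possibly renaming $A, B$ I may assume $\abs{A \Delta M_1} < \ell$, and then $\abs{A \Delta M_2} < \ell$ would yield $\abs{M_1 \Delta M_2} < 2\ell$ by the triangle inequality, contradicting the hypothesis, so $\abs{B \Delta M_2} < \ell$. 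By the symmetry of the construction under interchanging $M_1 \leftrightarrow M_2$ and $A \leftrightarrow B$ I may further assume $\abs{S \Delta M_1} \leq 2\ell$, so that $\abs{A \Delta S} < 3\ell$. I then split on $\abs{M_1 \Delta M_2}$.

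If $\abs{M_1 \Delta M_2} \leq 4\ell$, then $\abs{M_2 \Delta S} \leq 6\ell$ and hence $\abs{B \Delta S} < 7\ell$, so the pairwise disjoint sets $A \setminus (B \cup S)$, $B \setminus (A \cup S)$, $(A \cap B) \setminus S$ and $S \setminus (A \cap B)$ --- whose union is contained in $(A \Delta S) \cup (B \Delta S)$ --- have total size at most $\abs{A \Delta S} + \abs{B \Delta S} < 10\ell$, hence at most $b$; by \cref{thm:derandomized-color-coding} some $c_j$ assigns them the colours $1, 2, 3, 4$ respectively, which is precisely the hypothesis of \cref{lem:colored-graph-small-syms} for $A, B, S$, so the $\wS \in \F_j$ it produces satisfies $\abs{A \Delta \wS} \geq \abs{A \Delta S} \geq \ell$ and $\abs{B \Delta \wS} \geq \abs{B \Delta S} \geq \ell$. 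If instead $\abs{M_1 \Delta M_2} > 4\ell$, then $\abs{S \Delta A} < 3\ell \leq b$, so some $c_j$ colours all of $S \setminus A$ with $1$ and all of $A \setminus S$ with $2$, giving $S \setminus A \subseteq \bar{c}_j^{1}$ and $A \setminus S \subseteq \bar{c}_j^{2}$; applying \cref{lem:colored-graph-stay-close-to-M} with $\bar{c}_j$ and $M_1$ then returns $\wS \in \F_j$ with $\abs{A \Delta \wS} \geq \abs{A \Delta S} \geq \ell$ and $\abs{M_1 \Delta \wS} = \abs{M_1 \Delta S} \leq 2\ell$, whence $\abs{M_2 \Delta \wS} \geq \abs{M_1 \Delta M_2} - \abs{M_1 \Delta \wS} > 2\ell$ and therefore $\abs{B \Delta \wS} \geq \abs{M_2 \Delta \wS} - \abs{M_2 \Delta B} > \ell$.

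The step I expect to be the crux is the far case $\abs{M_1 \Delta M_2} > 4\ell$: it is tempting to handle it by collapsing $B$ onto $A$ and invoking \cref{lem:colored-graph-small-syms}, but that lemma only guarantees that $\wS$ shares the colour profile of $S$, and such a $\wS$ might sit near $M_2$ --- hence near $B$ --- destroying $\abs{B \Delta \wS} \geq \ell$. The equality $\abs{M_1 \Delta \wS} = \abs{M_1 \Delta S}$ furnished by \cref{lem:colored-graph-stay-close-to-M} is exactly what forces $\wS$ onto $M_1$'s side of the solution space and so keeps it far from $B$; the rest is routine triangle-inequality accounting together with the choice $b = 10\ell$ (forced by the bounded case) that makes the colouring family have size $\famsizethree$.
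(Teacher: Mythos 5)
Your proof is correct and follows essentially the same route as the paper's: compute 4-colorings via derandomized color coding with $b = 10\ell$, derive a 2-coloring from each, invoke \cref{lem:colored-graph-small-syms} once and \cref{lem:colored-graph-stay-close-to-M} twice (once per $M_t$) per coloring, and split the correctness argument on whether $\abs{M_1 \Delta M_2} \le 4\ell$. The only cosmetic difference is the definition of the derived 2-coloring (you send color $1$ to $1$ and everything else to $2$, while the paper uses $c'_j = \lceil c_j/2\rceil$), but both satisfy the precondition of \cref{lem:colored-graph-stay-close-to-M} for the same reason.
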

\begin{proof}
		For simplicity, let $\JJ := [\famsizethree]$.
		Apply \cref{thm:derandomized-color-coding} with $b = 10\ell$ to compute
		in $\famsizethree \cdot n$~time
		a family of colorings $\{c_j \colon \BB(I) \to \oneto{4} \mid j \in \JJ\}$.
		By \cref{thm:derandomized-color-coding} this family has size $\abs{\JJ}$.
		
		For each $j\in \JJ$, apply \cref{lem:colored-graph-small-syms} to $I$ and $c_j$
		to compute a family~$\F_j \subseteq \solutionset$ of size at most $n^4$
		with error probability~$p/3 \cdot n^{-4}\abs{\JJ}^{-1}$.
		Observe that the probability of an error occurring at any of the~$n^4\abs{\JJ}$ steps is upper-bounded by $p/3$
		and the computation of all~$\F_j$ takes $\abs{\JJ} f(\param) n^{\bigO(1)} g(\nicefrac{p}{3n^4 \cdot \abs{\JJ}})$~time.
		
		Next, define another family of colorings $\{c'_j \colon \BB(I) \to \oneto{2} \mid j \in \JJ\}$
		by setting $c'_j(x) := \ceil{\nicefrac{c_j(x)}{2}}$.
		Then, for each $j \in \JJ$, apply \cref{lem:colored-graph-stay-close-to-M}, to $I$, $c'_j$ and $M_1$
		to compute a family~$\F'_j \subseteq \solutionset$,
		with the same error probability and time bound as before.
		Repeat with $M_2$ instead of~$M_1$ to obtain $\F''_j$.
				
		Set $\F := \{M_1,M_2\} \cup \bigcup_{j \in \JJ} (\F_j \cup F'_j \cup F''_j)$.
		Then $\F$ has size at most~$3\abs{\JJ}n^4+2 \subseteq \famsizethree \cdot n^4$.
		Computing~$\F$ takes $\famsizethree \cdot f(\param) n^{\bigO(1)} g(\nicefrac{p}{3n^4 \cdot \abs{\JJ}})$~time.
		The probability of an error occurring at any step while computing~$\F$ is upper-bounded by $p$.
		
		We now show that $\F$ is an $\ell$-diverse representative of $\solutionset$.
		To this end, let $S \in \solutionset$ and $A, B$ be two arbitrary sets such that
		$\abs{A \Delta S} \geq \ell$ and $\abs{B \Delta S} \geq \ell$.
		We may assume for each $i \in \oneto{2}$ that $\abs{M_i \Delta A} < \ell$ or $\abs{M_i \Delta B} < \ell$,
		otherwise we are done.
		By symmetry, we may assume $\abs{M_1 \Delta A} < \ell$.
		Then $\abs{M_2 \Delta A} \geq \abs{M_2 \Delta M_1} - \abs{M_1 \Delta A} \geq \ell$ by the triangle inequality
		and thus we must have $\abs{M_2 \Delta B} < \ell$.
		By assumption, $\min\{\abs{S \Delta M_1}, \abs{S \Delta M_2}\} \leq 2\ell$,
		so let \wilog{} $\abs{S \Delta M_1} \leq 2\ell$.
		Note that $|A \Delta S| \leq |A \Delta M_1| + |M_1 \Delta S| < 3\ell$.
		We distinguish the following two cases.
		\begin{description}
				\item[\textbf{Case 1: $|M_1 \Delta M_2| \leq 4\ell$.}]
						Then, $|B \Delta S| \leq |B \Delta M_2| + |M_2 \Delta M_1| + |M_1 \Delta S| < 7\ell$.
		We say that some coloring~$c$ is \emph{good} for $A, B, S$ if the conditions of \cref{lem:colored-graph-small-syms} are satisfied,
		i.e.\ if
				$
			A \setminus (B\cup S) \subseteq c^1,\,
			B \setminus (A\cup S) \subseteq c^2,\,
			(A \cap B) \setminus S \subseteq c^3, \text{ and }
			S \setminus (A\cap B) \subseteq c^4.
			$
						According to \cref{thm:derandomized-color-coding}
						there is an $i \in \JJ$ such that coloring $c_i$ is good for $A,B,S$,
						since $|B \Delta S| + |A \Delta S| \leq 10\ell$.
						By \cref{lem:colored-graph-small-syms}, there is $\wS \in \F_i \subseteq \F$ such that 
						such that $|\wS \Delta A| \geq |S \Delta A| \geq \ell$ and
						$|\wS \Delta B| \geq |S \Delta B| \geq \ell$.
				\item[\textbf{Case 2: $|M_1 \Delta M_2| > 4\ell$.}]
						Since $|S \Delta A| \leq 3\ell \leq 10\ell$,
						there is $j \in \JJ$ such that
						$
							S \setminus A \subseteq {c'_j}^1 \text{ and }
							A \setminus S \subseteq {c'_j}^2.
						$
						By \cref{lem:colored-graph-stay-close-to-M} there is $\wS \in \F_j'$
						such that 
						$\abs{\wS \Delta M_1} = \abs{S \Delta M_1} \leq 2\ell$ and $\abs{\wS \Delta A} \geq \abs{S \Delta A} \geq \ell$.
						Finally, observe that by the triangle inequality
						$\abs{\wS \Delta B} \geq \abs{M_1 \Delta M_2} - \abs{M_1 \Delta \wS} - \abs{B \Delta M_2} > \ell$.
		\end{description}
		This completes the proof.
\end{proof}

\noindent With \cref{lem:one-matchings,lem:two-matchings,lem:three-matchings} at hand
we can formalize the case distinction outlined in the beginning of the section.
This gives us a way to efficiently compute an $\ell$-diverse representative in general.
\begin{lemma}
		\label{lem:compute-diverse-F}
		Let \cref{assumption} be true.
		Let $I$~be an instance of $\Pi$ of size $n$.
		One can compute an~$\ell$-diverse representative of $\solutionset$ of size $\famsizethree \cdot n^4$ 
		in $\famsizethree \cdot f(\param) n^{O(1)} g(\nicefrac{p}{n^4 \cdot \famsizethree})$ time
		with error probability at most~$p$.
\end{lemma}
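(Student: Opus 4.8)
The plan is to build an algorithm that first tries to exhibit three pairwise far‑apart solutions and, failing that, falls into exactly one of the three cases described above. First I would look for a single solution $S_1$: colour all of $\BB(I)$ with colour~$1$ and, for each $v \in \{0,\dots,n\}$, run $\Pialgo$ on $(I, c, (v,0,0,0))$; such a call asks precisely for a solution of size~$v$, so (barring errors of $\Pialgo$) some call succeeds if and only if $\solutionset \neq \emptyset$. If they all fail, I return~$\emptyset$, which is an $\ell$-diverse representative of the empty family.

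Given $S_1$, I search for a second solution $S_2$ with $\abs{S_1 \Delta S_2} \ge 2\ell$: colour $x$ with~$1$ if $x \in S_1$ and with~$2$ otherwise, and run $\Pialgo$ on $(I, c, (n_1,n_2,0,0))$ for every pair $(n_1,n_2)$ with $\abs{S_1} - n_1 + n_2 \ge 2\ell$, using that any solution $S$ with $\abs{S \cap S_1} = n_1$ and $\abs{S \setminus S_1} = n_2$ has $\abs{S_1 \Delta S} = \abs{S_1} - n_1 + n_2$. If every such call fails, then --- conditioned on $\Pialgo$ not erring --- every $M \in \solutionset$ satisfies $\abs{M \Delta S_1} < 2\ell$, so the hypothesis of \cref{lem:one-matchings} holds for $M^* = S_1$ and I output the family it produces. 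Otherwise, with $S_1, S_2$ in hand, I look for a third solution $S_3$ with $\min\{\abs{S_3 \Delta S_1}, \abs{S_3 \Delta S_2}\} \ge 2\ell$, now using the four colours ``in $S_1 \cap S_2$'', ``in $S_1 \setminus S_2$'', ``in $S_2 \setminus S_1$'', and ``in neither''; for a solution $S$ with colour-count vector $(n_1,n_2,n_3,n_4)$ both quantities $\abs{S \Delta S_1}$ and $\abs{S \Delta S_2}$ are determined by this vector together with $\abs{S_1}$ and $\abs{S_2}$, so I enumerate all vectors making both of them at least~$2\ell$ and call $\Pialgo$ for each. If some call returns an $S_3$, then $S_1, S_2, S_3$ are pairwise at symmetric distance $\ge 2\ell$, so by \cref{lem:three-matchings} the family $\{S_1, S_2, S_3\}$ is an $\ell$-diverse representative of $\solutionset$, which I return. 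If every call fails, then --- again conditioned on no error --- every $M \in \solutionset$ has $\abs{M \Delta S_1} < 2\ell$ or $\abs{M \Delta S_2} < 2\ell$, hence $\min\{\abs{M \Delta S_1}, \abs{M \Delta S_2}\} \le 2\ell$; since also $\abs{S_1 \Delta S_2} \ge 2\ell$, the hypotheses of \cref{lem:two-matchings} are met with $M_1 = S_1$ and $M_2 = S_2$, and I output the family it produces.

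For the analysis, correctness in each branch is immediate from \cref{lem:three-matchings}, \cref{lem:one-matchings} or \cref{lem:two-matchings}, using that the hypotheses of the latter two hold whenever none of the preceding search calls erred; in every branch the returned set is a subfamily of $\solutionset$ of size at most $\famsizethree \cdot n^4$ (the one-solution branch gives at most $\famsizetwo \cdot n^4 \le \famsizethree \cdot n^4$, the two-solution branch $\famsizethree \cdot n^4$, the three-solution branch~$3$, and the no-solution branch~$0$). There are $\bigO(n)$, $\bigO(n^2)$ and $\bigO(n^4)$, hence $\bigO(n^4)$ in total, direct calls to $\Pialgo$; running each of them with error probability $\nicefrac{p}{2N}$, where $N = \bigO(n^4)$ is the number of these calls, and the single nested invocation of \cref{lem:one-matchings} or \cref{lem:two-matchings} with error parameter~$\nicefrac{p}{2}$, a union bound caps the total error at~$p$. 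Since $g$ is monotone non-increasing and $\nicefrac{p}{2N} \ge \nicefrac{p}{n^4 \cdot \famsizethree}$ once $n$ exceeds an absolute constant (small instances being handled by brute force), each direct call is charged at most $f(\param)\, n^{\bigO(1)}\, g(\nicefrac{p}{n^4 \cdot \famsizethree})$, so the running time is dominated by the call to \cref{lem:two-matchings} and equals $\famsizethree \cdot f(\param)\, n^{\bigO(1)}\, g(\nicefrac{p}{n^4 \cdot \famsizethree})$, as required.

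The only genuinely delicate points are (i) checking that the colour-count enumerations are exhaustive --- which holds because, once the colouring records membership in $S_1$ (respectively in $S_1$ and $S_2$), the symmetric difference of any solution with $S_1$ (respectively with both $S_1$ and $S_2$) is a function of the colour-count vector and the fixed sizes $\abs{S_1}, \abs{S_2}$ alone --- and (ii) distributing the error budget across the $\bigO(n^4)$ direct calls and the one nested lemma invocation so that it stays below~$p$ while the argument of $g$ remains no smaller than $\nicefrac{p}{n^4 \cdot \famsizethree}$. I expect (ii) to be the fussiest part of the write-up, though it is routine once organised as above.
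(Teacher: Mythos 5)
Your proposal follows essentially the same four-step strategy as the paper's own proof: find one solution by guessing its size with a monochrome coloring, then search for a second far-apart solution via a 2-coloring by $S_1$-membership, then search for a third via the 4-coloring induced by $S_1$ and $S_2$, and fall back to \cref{lem:one-matchings}, \cref{lem:two-matchings}, or \cref{lem:three-matchings} depending on where the search terminates; the colorings, the exhaustiveness argument via colour-count vectors, and the union-bound error budgeting all match the paper's Steps 1--4. This is correct and not a genuinely different route.
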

\begin{proof}
		Our procedure to compute an $\ell$-diverse representative of $\solutionset$ works in four steps.

		\begin{description}
			\item[Step 1.]
		We use \Pialgo{} with a monochrome coloring and error probability~$\nicefrac{p}{4n}$ to search for 
		some $M_1 \in \solutionset$ in $f(\param) n^{\bigO(1)} g(\nicefrac{p}{4n})$ by guessing the size of $|M_1| \leq n$. 
		Observe that the probability of an error occurring in any of the searches is upper-bounded by $\nicefrac{p}{4}$
		If we do not succeed, then output the empty set and we are done.
		Otherwise,  we proceed with the next step.

			\item[Step 2.]
		For each pair $m_1, m_2$ with $m_1 + m_2 \leq n$ and $m_2 + \abs{M_1} - m_1 > 2\ell$,
		try to compute $M_2 \in \solutionset$ with $\abs{M_2 \cap M_1} = m_1$ and $\abs{M_2 \cap (\BB(I) \setminus M_1)} = m_2$
		in $f(\param)n^{\bigO(1)} g(\nicefrac{p}{4n^2})$~time and with error probability~$\nicefrac{p}{4 n^2}$
		using \Pialgo{} with a $2$-coloring where elements in $M_1$ are assigned one color and elements in $\BB(I) \setminus M_1$ are assigned the second color.
		If no such~$M_2$ is found for any pair~$m_1, m_2$, then for every $M \in \solutionset$ the symmetric difference $\abs{M \Delta M_1} \leq 2\ell$.
		In that case we may apply \cref{lem:one-matchings} with error probability $p/2$ and are done.
		Observe that the probability of an error occurring at any step until here is upper-bounded by $p$
		and the overall running time is $\famsizetwo \cdot f(\param) n^{O(1)} g(\nicefrac{p}{n^4 \cdot \famsizetwo})$.
		If we found such an $M_2$, then we proceed with the next step.

			\item[Step 3.]
		We have $M_1,M_2 \in \solutionset$ with $\abs{M_1 \Delta M_2} \geq 2\ell$.
		Define the coloring~$c \colon \BB(I) \to \oneto{4}$ by
		\begin{align*}
				c(v) := 
			\begin{cases}
					i & \text{ if } v \in M_i \setminus M_j \text{ for } \{i,j\}=\{1,2\}, \\
					3 & \text{ if } v \in M_1 \cap M_2, \text{ and}\\
					4 & \text{ otherwise.}
			\end{cases}
		\end{align*}
		For all $m'_1, m'_2, m'_3, m'_4$ 
		with $m'_1 + m'_2 + m'_3 + m'_4 \leq n$
		and $m'_2+m'_4 + \abs{M_1} - m'_1 - m'_3 \geq 2\ell$ 
		and $m'_1+m'_4 + \abs{M_2} - m'_2 - m'_3 \geq 2\ell$,
		search for a solution $M_3 \in \solutionset$
		with $|M_3 \cap c^i| = m'_i$, for all $i \in \oneto{4}$, 
		using \Pialgo{} with $c$ and error probability $\nicefrac{p}{4n^4}$.
		For all these combined, we thus have error probability~$p/4$ and 
		need $f(\param)n^{\bigO(1)} g(\nicefrac{p}{4n^4})$~time.
		If no such~$M_3$ is found for any choice of $m'_1, m'_2, m'_3, m'_4$, 
		then any $M \in \solutionset$ must have $\min\{\abs{M \Delta M_1}, \abs{M \Delta M_2}\} < 2\ell$.
		In that case we may apply \cref{lem:two-matchings} with error probability $\nicefrac{p}{4}$ and are done.
		Observe that the probability of an error occurring at any step until here is upper-bounded by $p$
		and the overall running time is $\famsizethree \cdot f(\param) n^{O(1)} g(\nicefrac{p}{n^4 \cdot \famsizethree})$.
		In case that we found such an $M_3$, we proceed with the next step.

		\item[Step 4.]
		We have $M_1,M_2,M_3 \in \solutionset$ such that $|M_i \Delta M_j| \geq 2\ell$ for all distinct $i,j \in \oneto{3}$.
		Hence, by \cref{lem:three-matchings}, we can output $\{M_1,M_2,M_3\}$.
		This completes the proof.\qedhere
		\end{description}
\end{proof}

\ifarxiv{}Finally, \cref{lem:compute-diverse-F} allows us to formulate a dynamic program for \dmPi{} and 
prove \cref{thm:diverse-multistage-pi}.\else{}
Finally, \cref{lem:compute-diverse-F} allows us to prove \cref{thm:diverse-multistage-pi}, see \cref{proof:thm:diverse-multistage-pi}.\fi{}
		\appendixproof{thm:diverse-multistage-pi}{
\begin{proof}[Proof of \cref{thm:diverse-multistage-pi}]
		Let $J := ((I_i)_{i=1}^\lifetime,\ell)$ be an instance of \dmPi{}, where $n := \max_{i \in [\lifetime]} |I_i|$.
		For each $i \in [\lifetime]$ we apply \cref{lem:compute-diverse-F} to obtain
		an $\ell$-diverse representative~$\F_i$ of $\solutionset[I_i]{}$
		that has size at most $\famsizethree \cdot n^4$
		in $\famsizethree \cdot f(\param)n^{\bigO(1)} \cdot g(\nicefrac{p}{\lifetime n^4 \cdot \famsizethree})$~time
		with error probability $\nicefrac{p}{\lifetime}$. 
		Observe that the probability of an error occurring at any step is upper-bounded by $p$.
		Now we use the following dynamic program to check whether $J$ is a \yes-instance.
		\begin{align*}
				\forall i \in \{2, 3, \dots, \lifetime\}, S \in \F_{i} \colon
				D[i,S] \!:=\!
				\begin{cases}
					\top & \text{if }\exists \wS \in \F_{i-1} \colon D[i-1,\wS] = \top \text{ and } 
								|S \Delta \wS| \geq \ell,\\
						\bot & \text{otherwise,}
				\end{cases}
		\end{align*}
		where $D[1,\wS] = \top$ if and only if $\wS \in \F_1$.
		We report that $J$ is a \yes-instance if and only there is an $S \in \F_\lifetime$ such that $D[\lifetime,S] = \top$.
		Note that this takes $\left(\famsizethree \cdot n^4 \right)^2 \lifetime \subseteq 2^{\bigO(\ell)} n^{\bigO(1)} \lifetime$ time. 
		Hence our overall running time is~$2^{\bigO(\ell)} \cdot f(\param_{\max})n^{\bigO(1)} \cdot \lifetime \cdot g(\nicefrac{p}{\lifetime n^4 \cdot \famsizethree})$, where~$r_{\max}$ is the maximum of parameter $r$ over all instances of $\Pi$ in $J$.

		\textbf{($\Leftarrow$):} We show by induction over~$i \in [\lifetime]$ that 
		if $D[i,S] = \top$, then there is a sequence $(S_j)_{j \in [i]}$ 
		such that $S_i = S$, $S_j \in \solutionset[I_j]{}$ for all $j \in [i]$
		and $|S_{j-1} \Delta S_j| \geq \ell$ for all $j \in \{2, 3, \dots, i\}$.

		By definition of $D$ this is clearly the case for $i=1$.
		Now let $1 < i \le \lifetime$ and $D[i,S] = \top$.
		Since $D[i,S] = \top$, $S \in \F_i$ and thus $S \in \solutionset[I_i]{}$.
		By definition of $D$ there is an $\wS \in F_{i-1}$ with $D[i-1,\wS]=\top$ and $|S \Delta \wS| \geq \ell$.
		By induction hypothesis, there is a sequence $(S_j)_{j \in [i-1]}$ 
		such that $S_{i-1} = \wS$, $S_j \in \solutionset[I_j]{}$ for all $j \in [i-1]$
		and $|S_{j-1} \Delta S_j| \geq \ell$ for all $j \in \{2, 3, \dots, i-1\}$.
		Hence, the sequence $(S_1,\dots,S_{i-1}=\wS,S)$ completes the induction.
		Thus, if we report that $J$ is a \yes-instance, then this is true.

		\textbf{($\Rightarrow$):} 
		Now let $(S_j)_{j\in[\lifetime]}$ be a solution for $J$.
		To simplify the proof let $S_{\lifetime+1}$ be a set of $\ell$~elements that are disjoint from~$S_{\lifetime}$.
		We show by induction that for all $i \in [\lifetime]$
		there is a $Z \in \F_i$ such that $D[i,Z] = \top$ and $|Z \Delta S_{i+1}| \geq \ell$.

		Let $i=1$.
		Then there is a $Z \in \F_1$ such that $|S_2 \Delta Z| \geq \ell$ 
		since $\F_1$ is an $\ell$-diverse representative of $\solutionset[I_1]$.
		Hence, $D[1,Z] = \top$.

		Now let $1<i\le \lifetime$.
		By induction hypothesis, there is a $Z_{i-1} \in \F_{i-1}$ such that $D[i-1,Z_{i-1}] = \top$ and
		$|S_i \Delta Z_{i-1}| \geq \ell$. 
		Since $S_i \in \solutionset[I_i]$ and we have $|S_i \Delta Z_{i-1}|,|S_i \Delta S_{i+1}|\geq \ell$ 
		and $\F_i$ is an $\ell$-diverse representative of $\solutionset[I_i]$,
		there is a $Z \in \F_i$
		such that $|Z \Delta Z_{i-1}|,|Z \Delta S_{i+1}|\geq \ell$.
		By definition of $D$, we also have $D[i,Z]=\top$. 
		This completes the induction step.
		Thus, there is a $Z\in \F_{\lifetime}$ such that $D[\lifetime, Z] =\top$
		and if $J$ is a \yes-instance, then we report that.
\end{proof}
}%

\section{Application: Committee Election}
\label{sec:committee}
\appendixsection{sec:committee}

\citet{bredereck2020multistage} studied the following problem
under the name \textsc{Revolutionary Multistage Plurality Voting}.
\problemdef{\dmVoting}
{A set $A$ of agents, 
 a set $C$ of candidates,
 a sequence $(u_i)_{i=1}^\lifetime$ of voting profiles $u_i \colon A \to C \cup \{ \emptyset \}$, and
 integers $k,x,\ell \in \mathbb N$.}
 {Is there a sequence $(C_1,C_2,\dots,C_\lifetime)$ such that
	for all $i \in [\lifetime]$ it holds that $|C_i| \leq k$
	and $|u^{-1}(C_i)| \geq x$, 
	and for all $i \in [\lifetime-1]$ it holds true that $\abs{C_i \Delta C_{i+1}} \geq \ell$?}
In this section, we affirmatively answer the question of \citet{bredereck2020multistage} whether
\dmVoting{} parameterized by $\ell$ or $k$ is in \FPT{}.%
\footnote{Note that $\ell \leq 2k$ for all non-trivial instances, so it suffices to prove this for~$\ell$.}

\begin{theorem}
		\label{thm:dm-voting}	
		An instance $J$ of \dmVoting{} can be solved in $2^{\bigO(\ell)}\cdot |J|^{\bigO(1)}$ time. 
\end{theorem}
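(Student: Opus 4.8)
The plan is to instantiate the framework of \cref{thm:diverse-multistage-pi-det} with $\Pi = \Voting{}$: it suffices to exhibit a polynomial-time algorithm for \coloredVoting{}, since then \cref{assumption} holds with $f$ and $g$ both constant (a deterministic algorithm, no parameter~$\param$ needed), and \cref{thm:diverse-multistage-pi-det} directly yields a $2^{\bigO(\ell)}\cdot\abs{J}^{\bigO(1)}$-time algorithm for \dmVoting{}. To fit the abstract form, for a \Voting{} instance consisting of agents $A$, candidates $C$, a voting profile $u \colon A \to C \cup \{\emptyset\}$, and integers $k,x$, we set $\BB(I) \coloneqq C$ and let $\solutionset$ be the family of all $S \subseteq C$ with $\abs{S} \leq k$ and $\abs{u^{-1}(S)} \geq x$; instances with $\abs{C} \leq 1$ are trivial, so we may assume $\abs{\BB(I)} \geq 2$.

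It remains to solve \coloredVoting{}: given $I$ as above, a coloring $c \colon C \to \oneto{4}$, and integers $n_1,\dots,n_4$, find $S \in \solutionset$ with $\abs{c^i \cap S} = n_i$ for all $i \in \oneto{4}$, or report that none exists. The crucial observation is that the vote sets $u^{-1}(v)$ of distinct candidates $v$ are pairwise disjoint (each agent has at most one favorite), so $\abs{u^{-1}(S)} = \sum_{v \in S} \abs{u^{-1}(v)}$ splits into per-candidate scores; moreover the color targets already fix $\abs{S} = \sum_{i} n_i$. Hence the algorithm first rejects if $\sum_i n_i > k$ or if some color class $c^i$ contains fewer than $n_i$ candidates. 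Otherwise, within each color class $i$ independently it picks a set $S_i \subseteq c^i$ of exactly $n_i$ candidates of largest score $\abs{u^{-1}(v)}$ (ties broken arbitrarily), puts $S \coloneqq \bigcup_{i} S_i$, and accepts $S$ iff $\sum_{v \in S}\abs{u^{-1}(v)} \geq x$. Since the objective is additive over candidates and the color constraint is separable over the color classes, this greedy choice maximizes $\abs{u^{-1}(S')}$ over all $S' \subseteq C$ satisfying the color constraints, so a valid solution exists if and only if the algorithm accepts. The running time is clearly polynomial in $\abs{I}$.

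Combining the two steps gives \cref{thm:dm-voting}; the bound for parameter~$k$ follows since $\ell \leq 2k$ in every non-trivial instance. There is no genuinely hard step here: the only points requiring care are that the exact color counts determine the committee size (so that the constraint $\abs{S}\leq k$ and the vote constraint decouple across color classes), and the (straightforward) optimality of the greedy selection.
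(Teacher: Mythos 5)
Your proposal is correct and takes essentially the same route as the paper: instantiate \cref{thm:diverse-multistage-pi-det} with $\Pi = \Voting$ (with $\BB(I)=C$ and $\solutionset$ the committees of size at most $k$ receiving at least $x$ votes), and solve \coloredVoting{} by greedily taking the $n_i$ highest-scoring candidates in each color class, exploiting additivity of the score over candidates and separability across color classes. The paper's proof of \cref{lem:coloredVoting-polytime} is this exact greedy; you are slightly more explicit about the disjointness of the sets $u^{-1}(v)$, the rejection when some color class is too small, and the $\abs{\BB(I)}\geq 2$ assumption, which are points the paper leaves implicit.
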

To prove \cref{thm:dm-voting}, we use \cref{thm:diverse-multistage-pi-det}.
In the notation of our framework, we deal with the following problem~$\Pi$: 
given an instance $I=(A,C,u,k,x)$ consisting of 
a set $A$ of agents,
a set $C =: \BB(I)$ of candidates,
a voting profile $u \colon A \to C$, and
two integers $k,x$,
decide whether $\solutionset := \{ S \subseteq C \mid k \geq |S| \text{ and } |u^{-1}(S)| \geq x \}$ is non-empty.
Hence, to apply \cref{thm:diverse-multistage-pi-det}, 
we consider the following problem.
\optproblemdef{\coloredVoting}
{A set $A$ of agents, 
 a set $C$ of candidates, 
 a voting profile $u \colon A \to C \cup \{ \emptyset \}$,
 a coloring $c\colon C \to [4]$, and
 integers $n_i,x,k \in \mathbb N,i\in[4]$.}
 {A set $C' \subseteq C$ of at most $k$ candidates
so that $|u^{-1}(C')| \geq x$ and $|c^{-1}(i) \cap C'|=n_i$ for all $i \in [4]$
or ``\no'' if no such set exists.}
This problem is polynomial-time solvable and 
hence the following observation together with \cref{thm:diverse-multistage-pi-det} proves \cref{thm:dm-voting}.
In \cref{sec:matroids} we will generalize this application to independent sets in matroids.
\ifarxiv{}
\begin{observation}
\else{}
\begin{observation}
		[\appref{lem:coloredVoting-polytime}]
\fi{}
		\label{lem:coloredVoting-polytime}
		\coloredVoting{} is polynomial-time solvable.
\end{observation}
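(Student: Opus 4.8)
The plan is to observe that the exact colour constraints make the objective separable across colour classes, so that a simple greedy suffices. First I would note that any feasible output $C'$ must satisfy $\abs{c^{-1}(i) \cap C'} = n_i$ for every $i \in [4]$, hence $\abs{C'} = n_1 + n_2 + n_3 + n_4$. Therefore, as a preprocessing step, if $\sum_{i \in [4]} n_i > k$ or if $\abs{c^{-1}(i)} < n_i$ for some $i$, we immediately output ``\no''. Otherwise the cardinality bound $\abs{C'} \le k$ is automatically met, and the task reduces to deciding whether some set containing exactly $n_i$ candidates of colour $i$ receives at least $x$ votes.

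The key point is that the number of votes received decomposes additively over candidates. Writing $d(v) := \abs{u^{-1}(v)}$ for the number of agents voting for a candidate $v$, the sets $u^{-1}(v)$ are pairwise disjoint over distinct $v$ (each agent casts at most one vote, abstentions being mapped to $\emptyset \notin C$), so $\abs{u^{-1}(C')} = \sum_{v \in C'} d(v)$. Since the colour classes $c^{-1}(1), \dots, c^{-1}(4)$ partition $C$, maximising $\sum_{v \in C'} d(v)$ subject to choosing exactly $n_i$ candidates from each class splits into four independent subproblems. I would therefore, for each $i \in [4]$, sort $c^{-1}(i)$ in non-increasing order of $d(\cdot)$, let $C'_i$ consist of the $n_i$ candidates with the largest $d$-values (ties broken arbitrarily), and set $C' := \bigcup_{i \in [4]} C'_i$. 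Finally, output $C'$ if $\sum_{v \in C'} d(v) \ge x$, and ``\no'' otherwise.

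Correctness follows from a one-line exchange argument: for any feasible $\widehat{C}$ we have $\abs{u^{-1}(\widehat{C})} = \sum_{i \in [4]} \sum_{v \in \widehat{C} \cap c^{-1}(i)} d(v) \le \sum_{i \in [4]} \sum_{v \in C'_i} d(v) = \abs{u^{-1}(C')}$, because $C'_i$ is by construction a size-$n_i$ subset of $c^{-1}(i)$ of maximum total $d$-weight. Hence a feasible set attaining $x$ votes exists if and only if $C'$ does, and $C'$ itself is feasible by construction. Computing the degrees $d(v)$ from $u$, sorting the four colour classes, and summing all run in time polynomial in $\abs{A} + \abs{C} \le \abs{I}^{O(1)}$. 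There is no genuine obstacle here; the only care needed is the degenerate bookkeeping — the redundancy of the bound $k$ and the case $\abs{c^{-1}(i)} < n_i$ — which the preprocessing step dispatches.
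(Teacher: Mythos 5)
Your proof is correct and takes essentially the same approach as the paper: compute per-candidate scores, sort each colour class, greedily pick the top $n_i$ candidates per class, and check the vote threshold. You are slightly more careful than the paper in two respects — explicitly handling the infeasible case $\abs{c^{-1}(i)} < n_i$, and justifying the additive decomposition via disjointness of the sets $u^{-1}(v)$ — but the core argument is the same.
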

\appendixproof{lem:coloredVoting-polytime}{
\begin{proof}
		Given an instance $I=(A,C,u,c,n_1,n_2,n_3,n_4,x,k)$ of \coloredVoting{}.
		We may assume that $\sum_{i=1}^4 n_i \leq k$, otherwise we can terminate without an output.
For each candidate $v \in C$, we compute its \emph{score} $s(v) := |u^{-1}(v)|$.
		Let $C_i := c^{-1}(i)$, for all $i \in [4]$.
		For each $i \in [4]$, sort the candidates in $C_i$ by their scores.
		Compute the set $C_i' \subseteq C$ containing the $n_i$ candidates of $C_i$ with the highest score.
		If $\sum_{i=1}^4\sum_{v \in C_i'} s(v) \geq x$, then output $\bigcup_{i=1}^4 C_i$.
		Otherwise, we terminate without an output.
		It is easy to verify that this procedure is correct.
\end{proof}
}%

\section{Application: Perfect Matching}
\label{sec:matching}
\appendixsection{sec:matching}

In this section, we apply our framework from \cref{sec:framework} to find a sequence of diverse perfect matchings.

\problemdef{\dmMatching}
{A sequence $(G_i)_{i=1}^\tau$ of graphs and an integer~$\ell \in \NN_0$.}
{Is there a sequence~$(M_i)_{i=1}^\tau$ of perfect matchings~$M_i \subseteq E(G_i)$ such that $|M_{i}\Delta M_{i+1}| \ge \ell$ for all~$i \in \oneto{\tau -1}$?}

There are two closely related variants of this problem which were studied extensively.
The first variant is the non-diverse variant, where one seeks to bound the symmetric differences (in some way) from above~\cite{bampis2018multistage,bampis2019lp,chimani2020approximating,gupta2014changing,steinhau2020parameterized}.
\citet{steinhau2020parameterized} proved that 
if the size of the symmetric difference of two consecutive perfect matchings shall be at most $\ell$,
then this problem variant is \NP-hard even if $\ell$ is constant, 
and~$\Wone$-hard when parameterized by~$\ell+\tau$.
The second variant is the non-multistage variant, where one is given a single graph and is asked to compute a set of pairwise diverse perfect matchings~\cite{fomin2020diverse,FominGPP021}.
\citet{fomin2020diverse} proved that this variant is~$\NP$-hard even if one asks only for two diverse matchings.
This directly implies~$\NP$-hardness for \dmMatching{} even when~$\tau = 2$.

Our goal is to show fixed-parameter tractability of \dmMatching{} when parameterized by~$\ell$.
This stands in contrast to the $\NP$-hardness for the non-diverse problem variant with constant~$\ell$.

\ifarxiv{}
\begin{theorem}
\else{}
\begin{theorem}
		[\appref{thm:dmmatching}]
\fi{}
	\label{thm:dmmatching}
	An instance $J$ of \dmMatching{} can be solved in $2^{\bigO(\ell)} \cdot |J|^{\bigO(1)}$ time
	with a constant error probability.
\end{theorem}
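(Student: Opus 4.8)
The plan is to invoke \cref{thm:diverse-multistage-pi} (the randomized version of the framework), so the crux is to verify \cref{assumption} for $\Pi = \textsc{Perfect Matching}$, i.e.\ to give a Monte-Carlo algorithm for \coloredPi{} on this problem. Here $\BB(I)$ is the edge set of the input graph $G$ on $n$ vertices, $\solutionset$ is the set of perfect matchings of $G$, and we are given a $4$-coloring $c\colon E(G)\to[4]$ and targets $n_1,\dots,n_4\in\NN_0$; we must find a perfect matching with exactly $n_i$ edges of color $i$, or report that none exists. The parameter $r$ in \cref{assumption} will simply be a constant (say $r=1$, with $f\equiv 1$), since we want the running time to be $n^{O(1)}\cdot g(p)$ with no dependence on a structural parameter. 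So the task reduces to: \textbf{solve \ExactMatchingProb{} (for a constant number $s=4$ of colors) in randomized polynomial time with low error probability.}

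The approach for this is algebraic, via the Tutte matrix. Recall that for a graph $G$ on vertex set $\{v_1,\dots,v_n\}$ the Tutte matrix is the skew-symmetric matrix $T$ with $T_{jk}=x_{jk}$, $T_{kj}=-x_{jk}$ for each edge $v_jv_k$ with $j<k$, and $0$ otherwise; $G$ has a perfect matching iff $\det T\neq 0$ as a polynomial, equivalently iff the Pfaffian $\pf(T)\neq 0$, and in fact $\pf(T)=\sum_{M}\pm\prod_{v_jv_k\in M}x_{jk}$ summed over all perfect matchings $M$. The idea is to introduce, for each color $i\in[4]$, an extra indeterminate $y_i$, and to multiply the variable $x_{jk}$ of an edge of color $c(v_jv_k)$ by $y_{c(v_jv_k)}$. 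Then $\pf(T)$ becomes a polynomial in the $y_i$'s (with coefficients that are polynomials in the $x_{jk}$'s), and the monomial $y_1^{n_1}y_2^{n_2}y_3^{n_3}y_4^{n_4}$ appears with a nonzero coefficient precisely when $G$ has a perfect matching with exactly $n_i$ edges of color $i$. To test whether that coefficient is nonzero, we extract it: substitute random values for the $x_{jk}$'s from a sufficiently large field $\mathbb{F}$ (so that by Schwartz--Zippel the coefficient polynomial is nonzero with high probability iff it is not identically zero), treat the result as a univariate polynomial in each $y_i$ in turn (or a $4$-variate polynomial of total degree at most $n/2$), evaluate $\pf(T)$ at enough points $(y_1,\dots,y_4)$, and interpolate to read off the coefficient of $y_1^{n_1}\cdots y_4^{n_4}$. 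Each evaluation of $\pf(T)$ at concrete field values costs $O(n^3)$ (or $O(n^\omega)$) field operations; the number of evaluation points needed for a $4$-variate interpolation of degree $O(n)$ is $n^{O(1)}$ (this is where a larger power of $n$ enters, and why the framework tolerates $g(p n^{-4})$ rather than $g(p)$); and we can recover a perfect matching with the desired color profile, if one exists, by the standard self-reduction (repeatedly delete an edge and re-test). Taking the field size polynomially larger than $1/p$ makes the error probability at most $p$.

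Concretely, the steps I would carry out are: (i) define the colored Tutte matrix $T(y_1,\dots,y_4)$ as above; (ii) prove the key identity that the coefficient of $y_1^{n_1}\cdots y_4^{n_4}$ in $\pf(T)$ equals $\sum_{M}\pm\prod_{e\in M}x_e$, summed over exactly those perfect matchings $M$ with $|c^{-1}(i)\cap M|=n_i$ for all $i$, and hence is a nonzero polynomial iff such an $M$ exists; (iii) describe the randomized coefficient-extraction procedure (random $x_e\in\mathbb{F}$, multi-point evaluation of $\pf$, polynomial interpolation) and bound its error via Schwartz--Zippel and its running time as $n^{O(s)}\cdot g(p)$-style, here $n^{O(1)}$ since $s=4$ is constant; (iv) add the self-reduction to actually output a witnessing matching; (v) conclude that \cref{assumption} holds with $f\equiv 1$, $r$ trivial, and $g(p)=\poly(1/p)$, and then apply \cref{thm:diverse-multistage-pi} with a constant target error probability to obtain the claimed $2^{O(\ell)}\cdot|J|^{O(1)}$-time randomized algorithm for \dmMatching{}.

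The main obstacle is step (ii): one must argue carefully that multiplying edge variables by the color indeterminates does not introduce unwanted cancellations between matchings of different color profiles — this is immediate because two matchings with different profiles contribute to different $y$-monomials — and, more delicately, that within a fixed profile the terms behave exactly like the usual Pfaffian expansion, so that the sign structure is unchanged and no cancellation can make a nonempty family of same-profile matchings sum to the zero polynomial in the $x_e$. The latter follows from the fact that distinct perfect matchings give distinct monomials in the $x_e$, so the coefficient polynomial is a nonzero $\{0,\pm1\}$-combination of distinct monomials whenever the family is nonempty. A secondary technical point is bounding the number of evaluation points and the field size so that the overall error is $\le p$ and the running time fits the $g(p n^{-4})$ slot demanded by the framework's lemmata; this is routine but must be stated.
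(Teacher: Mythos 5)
Your proposal is correct and follows essentially the same high-level route as the paper: both reduce to a randomized polynomial-time algorithm for \ExactMatchingProb{} with $s=4$, both build a colored Tutte/Pfaffian matrix in which the edge variable $x_e$ (or $w_{ij}$ in the paper) is multiplied by a per-color indeterminate $y_{c(e)}$, both argue that the target coefficient in $\pf$ is a nonzero $\pm1$-combination of distinct monomials in the edge variables whenever a matching with the right color profile exists, both invoke DeMillo--Lipton--Schwartz--Zippel after substituting random edge weights, and both do the standard edge-deletion self-reduction to output a witness. The one place where you genuinely diverge is coefficient extraction: you propose multi-point evaluation of $\pf$ over $(y_1,\dots,y_4)$ followed by multivariate interpolation, whereas the paper keeps the $y_i$ symbolic, computes $\det(A)\in\ZZ[y_1,\dots,y_4]$ by Gaussian elimination on a polynomial with only $\binom{n+s}{s}$ terms, and then takes a \emph{symbolic} square root of that polynomial (their Lemma on computing $\sqrt{P}$ coefficient by coefficient). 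Your route avoids the polynomial-square-root lemma entirely, which is a small simplification; in exchange you should make explicit that each evaluation computes the Pfaffian \emph{directly} (e.g.\ via a Pfaffian elimination scheme in $O(n^3)$) rather than as $\sqrt{\det}$ of a scalar, since the scalar square root carries a sign ambiguity that would corrupt the interpolation --- the paper sidesteps this because a polynomial square root has no sign ambiguity once the lowest-order coefficient is fixed. Also, your stated $g(p)=\poly(1/p)$ is looser than necessary (and looser than the paper's $g(p)=O(\log 1/p)$): since the number of evaluation points is $n^{O(s)}$ independent of $p$ and only the bit-length of field elements scales with $\log(1/p)$, you get the logarithmic bound for free; your looser bound still fits the framework but is worth tightening.
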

We will prove \cref{thm:dmmatching} by means of \cref{thm:diverse-multistage-pi} at the end of this section.
To this end we need to consider the following problem.

\optproblemdef{\ExactMatchingProb}{
	A graph $G=(V,E)$, a coloring~$c \colon E \to \oneto{s}$, and $k_i\in\NN,i\in [s]$.
	}{
	(if exists) A perfect matching $M$ in $G$ such that $|c^i \cap M| = k_i$, for all $i \in [s]$?
}

For~$s=2$, this problem is known as \textsc{Exact Matching}, and \citet{MulmuleyVV87} showed that this special case is solvable by a randomized polynomial-time algorithm.
We generalize this result by showing that \ExactMatchingProb{} can be solved in polynomial time for any constant~$s$ by a randomized algorithm with constant error probability.
While we only need this for~$s=4$ in order to prove \cref{thm:dmmatching}, we believe that the general case may be of independent interest.
We remark that it is open whether \textsc{Exact Matching} can be solved in (deterministic) polynomial time.
\ifarxiv{}
\begin{lemma}
\else{}
\begin{lemma}
		[\appref{thm:findColoredMatching}]
\fi{}
		\label{thm:findColoredMatching}
		For every~$0 < p < 1$
		there is an~$(n^{\bigO(s)} \cdot \log 1/p)$-time algorithm which, given an instance of \ExactMatchingProb{},
	finds a solution with probability at least~$1-p$ if one exists,
	and concludes that there is no solution otherwise.
\end{lemma}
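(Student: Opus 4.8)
The plan is to generalize the algebraic algorithm of \citet{MulmuleyVV87} for \textsc{Exact Matching} from two colours to $s$ colours, by combining a polynomial-identity-testing argument with a weighting trick. First I would set up a variant of the Tutte matrix: introduce an indeterminate $x_{uv}$ for every edge $uv$ of $G$, fix weights $w_i := (\lfloor n/2\rfloor+1)^{i-1}$ for $i\in\oneto{s}$ and one further variable $z$, and define the skew-symmetric matrix $T$ by $T_{uv} := x_{uv}\, z^{w_{c(uv)}}$ for $uv\in E$ with $u<v$, and $T_{vu}:=-T_{uv}$. Expanding the Pfaffian gives
\[
	\pf(T) = \sum_{M} \varepsilon_M \Bigl(\prod_{uv \in M} x_{uv}\Bigr)\, z^{\sum_{i\in\oneto{s}} w_i\, \abs{c^i \cap M}},
\]
where $M$ ranges over the perfect matchings of $G$ and $\varepsilon_M\in\{-1,+1\}$. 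Because $w$ was chosen so that $(k_1,\dots,k_s)\mapsto\sum_i w_i k_i$ is injective on $\{0,\dots,\lfloor n/2\rfloor\}^s$ (this is just the base-$(\lfloor n/2\rfloor+1)$ expansion), the prescribed colour profile $(k_1,\dots,k_s)$ corresponds to a single exponent $W:=\sum_{i\in\oneto{s}} w_i k_i$ of $z$, and the coefficient of $z^W$ in $\pf(T)$ equals $\sum_{M:\,\abs{c^i\cap M}=k_i\ \forall i}\varepsilon_M\prod_{uv\in M}x_{uv}$. Distinct perfect matchings are distinct edge sets and every edge carries its own variable, so these monomials do not cancel; hence this coefficient is a nonzero polynomial in the $x_{uv}$ if and only if a perfect matching with the prescribed colour profile exists.

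This reduces the decision version to polynomial identity testing. I would work over a field $\mathbb F_q$ with $q=\Theta(n^{cs})$ for a suitable constant $c$, draw the $x_{uv}$ independently and uniformly from $\mathbb F_q$, and compute the univariate polynomial $\pf(T)\in\mathbb F_q[z]$ — whose $z$-degree is at most $n\cdot\max_i w_i = n^{\bigO(s)}$ — by evaluating $T$ (a numeric skew-symmetric matrix over $\mathbb F_q$) at $n^{\bigO(s)}$ distinct points $z=\zeta$, computing each $\pf$ (equivalently each determinant) in $n^{\bigO(1)}$ field operations, and interpolating. We report ``solution exists'' iff the coefficient of $z^W$ is nonzero. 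This has one-sided error: if no solution exists the coefficient is identically zero and we are always correct; if one exists, Schwartz--Zippel applied to the coefficient polynomial (degree $\le n/2$ in the $x_{uv}$) gives success probability $\ge 2/3$ once $q$ is a large enough polynomial. Independent repetition $\bigO(\log 1/p)$ times, taking the OR, pushes the error below $p$; each field element has $\bigO(s\log n)$ bits, so the running time is $n^{\bigO(s)}\cdot\log 1/p$. To actually \emph{find} a matching, I would use self-reducibility: process the edges $e_1,e_2,\dots$, maintaining a current graph $G'$ and residual targets $(k_i')$; for the current edge $e=uv$, run the decision procedure on $G'\setminus\{e\}$ with targets $(k_i')$ — if it reports a solution, delete $e$ permanently, otherwise commit $e$ (delete $u,v$ and all incident edges, decrement $k'_{c(e)}$). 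The committed edges form the desired matching; running each of the $\bigO(\abs E)$ calls with error $\bigO(p/\abs E)$ keeps the total error at most $p$ and only adds an $\bigO(\log(\abs E/p))$ factor, still within $n^{\bigO(s)}\log 1/p$.

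The step needing the most care is the non-cancellation argument isolating exactly the perfect matchings of the prescribed profile in the coefficient of $z^W$: one must handle the signs $\varepsilon_M$ in the Pfaffian expansion and the injectivity of the weighting correctly. If one prefers to avoid the Pfaffian and instead work with $\det(T)=\pf(T)^2$ — looking at the coefficient of $z^{2W}$ with doubled weights — then one additionally has to argue that the cross terms arising from pairs $M\ne M'$ cannot cancel the diagonal squares; this holds because such a cross term contains some variable $x_{uv}$ (for $uv\in M\triangle M'$) to an odd power, whereas every diagonal term $\prod_{uv\in M}x_{uv}^2$ is a perfect square, so the monomials are distinct.
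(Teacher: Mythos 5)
Your main route is correct and genuinely different from the paper's. Both start from a randomized colored skew-symmetric Tutte matrix and apply Schwartz--Zippel, but the paper keeps $s$~distinct symbolic color variables $y_1,\dots,y_s$, computes the multivariate Pfaffian $\pf(A)\in\ZZ[y_1,\dots,y_s]$ symbolically (by computing $\det(A)$ and then a multivariate polynomial square root in $n^{\bigO(s)}$ algebraic operations), and reads off the coefficient of $y_1^{k_1}\cdots y_s^{k_s}$. You instead collapse the $s$ colors into a single variable~$z$ via the mixed-radix weights $w_i=(\lfloor n/2\rfloor+1)^{i-1}$, so the Pfaffian becomes a univariate polynomial in~$z$ of degree $n^{\bigO(s)}$, recovered by evaluating at $n^{\bigO(s)}$ points and interpolating over $\mathbb{F}_q$. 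Both yield $n^{\bigO(s)}\log(1/p)$ time; the paper gets the $\log(1/p)$ factor by drawing the random edge weights from $[\lceil n/(2p)\rceil]$ so a degree-$n/2$ polynomial vanishes with probability at most~$p$ in one shot, whereas you get it by repetition. Your self-reducibility for the search version matches the paper's. One implementation caveat: at a fixed numeric $z=\zeta$, the determinant only determines $\pf$ up to sign, which would spoil interpolation across evaluation points; you need a direct polynomial-time Pfaffian computation (these exist), not $\sqrt{\det}$.

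Your closing alternative --- ``avoid the Pfaffian and work with $\det(T)=\pf(T)^2$, reading the coefficient of $z^{2W}$'' --- is incorrect, and not for the reason you guard against. You rightly note that cross terms cannot cancel diagonal squares (every cross term contains some $x_{uv}$ to an odd power). The real problem is a \emph{false positive}: a pair $M\neq M'$ whose individual color profiles both differ from $(k_1,\dots,k_s)$ can still satisfy $|c^i\cap M|+|c^i\cap M'|=2k_i$ for all~$i$ --- for instance $M$ with $k_1-1$ edges of color~$1$ and $k_2+1$ of color~$2$, $M'$ with $k_1+1$ and $k_2-1$, remaining colors matching $k_i$ --- and such a pair contributes a surviving monomial to the coefficient of $z^{2W}$ even when no perfect matching with profile $(k_1,\dots,k_s)$ exists. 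Worse, with your weights the map $(l_1,\dots,l_s)\mapsto\sum_i w_il_i$ is only injective on $\{0,\dots,\lfloor n/2\rfloor\}^s$, not on $\{0,\dots,n\}^s$ where the sums $|c^i\cap M|+|c^i\cap M'|$ live, so pairs with color-sums different from $(2k_1,\dots,2k_s)$ may also land on that exponent. Thus the determinant shortcut does not decide the problem; the Pfaffian formulation (your main route, and the paper's) is essential because there the $z$-exponent is an injective encoding of a \emph{single} matching's color profile.
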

\ifarxiv{}The proof of \cref{thm:findColoredMatching} deferred for a moment.\fi{}
To determine whether a given \ExactMatchingProb{} has a solution we use the following algorithm.
\begin{algorithm}
		\label{alg:matching}
		Let $0 < p < 1$
		and let $I = (G,c,k_1,\dots,k_s)$ be an instance of \ExactMatchingProb{}
		where $G=(V,E)$ has $n$~vertices.
	\begin{description}
			\item[Step 1.] Set $\gamma := \lceil n/(2p)\rceil$ and 
					draw~$w_{ij} \in [\gamma]$ for all~$\{i,j\} \in E$ uniformly at random.
			\item[Step 2.] Construct an $n\times n$ matrix $A'$ with entries~$a_{ij} \in \ZZ[y_1, \dots, y_s]$, $1 \le i \le j \le n$, where
\begin{align*}
	a_{ij} := \begin{cases}
		0 & \text{if } \{i,j\} \notin E,\\
		w_{ij}y_q & \text{if } \{i, j\} \in c^q \cap E, q \in [s].\\
	\end{cases}
\end{align*}
Afterwards we compute the  skew-symmetric matrix~$A := A' - (A')^T$.
			\item[Step 3.] Compute the polynomial $P := \sqrt{\det(A)} \in \ZZ[y_1,\dots,z_s]$.

			\item[Step 4.] If $P$ contains a monomial $b^*y_1^{k_1} y_2^{k_2} \cdots y_s^{k_s}$ 
					such that $b^*\neq 0$ then, output \yes. Otherwise, output \no.  
					\hfill$\diamond$
	\end{description}
\end{algorithm}
Before studying the running time of~\cref{alg:matching}, we first focus on its correctness.
\begin{lemma}
		\label{lem:exactmatching-probability}
		Let $I$ and $p$ be the input of \cref{alg:matching}.
	If \cref{alg:matching} returns \yes, then there is a solution for $I$.
	Conversely, if $I$ is a \yes-instance, then \cref{alg:matching} returns \yes{} with probability at least~$1-p$.
\end{lemma}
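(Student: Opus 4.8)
The plan is to use the standard Tutte-matrix / Pfaffian approach to matching detection, adapted to keep track of how many edges of each color are used. First I would recall that for the skew-symmetric matrix $A$ constructed in Step~2, we have $\det(A) = \pf(A)^2$, so $P = \pf(A)$ is a genuine polynomial in $\ZZ[y_1,\dots,y_s]$ (after fixing signs consistently; this is where the $\sqrt{\det(A)}$ in Step~3 is well-defined up to sign). The key algebraic identity is the classical expansion
\begin{align*}
	\pf(A) = \sum_{M} \pm \prod_{\{i,j\} \in M} a_{ij},
\end{align*}
where the sum ranges over all perfect matchings $M$ of the complete graph on $[n]$; terms with $\{i,j\} \notin E$ vanish because $a_{ij}=0$. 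Thus every surviving term corresponds to a perfect matching $M$ of $G$, and its monomial in the $y$-variables is exactly $\prod_{q \in [s]} y_q^{|c^q \cap M|}$, with coefficient $\pm \prod_{\{i,j\}\in M} w_{ij}$. Grouping terms by their $y$-monomial, the coefficient of $y_1^{k_1}\cdots y_s^{k_s}$ in $\pf(A)$ is
\begin{align*}
	b^* = \sum_{M : |c^q \cap M| = k_q \ \forall q} \pm \prod_{\{i,j\} \in M} w_{ij},
\end{align*}
a polynomial (in fact a multilinear form) in the edge weights $w_{ij}$.

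For the first direction ($\yes$ output implies a solution exists): if $b^* \neq 0$ for the specific random weights drawn, then the sum above is non-empty, so there is at least one perfect matching $M$ with $|c^q\cap M| = k_q$ for all $q$, i.e. $I$ is a $\yes$-instance. This direction is immediate and error-free.

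For the converse (a $\yes$-instance yields output $\yes$ with probability $\ge 1-p$): if a color-respecting perfect matching exists, then $b^*$, viewed as a formal polynomial in the $w_{ij}$, is not identically zero — the monomials $\prod_{\{i,j\}\in M} w_{ij}$ for distinct matchings $M$ are distinct (as each is the product of the variables indexed by a distinct edge set), so no cancellation can make the whole polynomial vanish; it has degree $n/2$ in the $w_{ij}$. Then I would invoke the Schwarz–Zippel lemma: drawing each $w_{ij}$ uniformly from $[\gamma]$ with $\gamma = \lceil n/(2p)\rceil$, the probability that $b^*$ evaluates to zero is at most $(n/2)/\gamma \le p$. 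Hence with probability at least $1-p$ the coefficient $b^*$ is non-zero and the algorithm outputs $\yes$. The main obstacle — and the point needing the most care — is pinning down the sign consistency so that $P=\pf(A)$ is a well-defined polynomial (rather than merely $\det(A)$, where the color-count monomials would appear doubled), and confirming that passing from $\pf(A)$ to its coefficient extraction does not introduce cancellation across different matchings; both follow from the classical Pfaffian expansion but must be stated cleanly. Everything else is a routine application of Schwarz–Zippel.
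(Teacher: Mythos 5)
Your proof is correct and follows essentially the same route as the paper: expand $P = \pf(A)$ over perfect matchings, read off the coefficient $b^*$ of $y_1^{k_1}\cdots y_s^{k_s}$ as a polynomial in the $w_{ij}$ of degree $n/2$, and apply Schwartz--Zippel. You are in fact slightly more careful than the paper in spelling out why $b^*$ is not identically zero (distinct matchings give distinct $w$-monomials, so no cancellation), a point the paper leaves implicit.
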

\toappendix{
To show \cref{lem:exactmatching-probability} we need the following well-known lemma.
\begin{lemma}[\citet{DL78,Schwartz80,Zippel79}]
	\label{lem:schwartz-zippel}
	Let~$P \in \mathbb F[x_1, \dots, x_n]$ be a polynomial of total degree~$d \ge 0$ over a field~$\mathbb F$.
	Let~$S$ be a finite subset of~$F$ and let~$r_1, \dots, r_n$ be selected uniformly and independently at random from~$S$.
	Then the probability that~$P(r_1, \dots, r_n) = 0$ is at most~$d/|S|$.
\end{lemma}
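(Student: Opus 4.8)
The plan is to prove \cref{lem:schwartz-zippel} by induction on the number of variables $n$, following the classical argument. Since $P$ is assumed to have a well-defined total degree $d \ge 0$, it is not the zero polynomial. For the base case $n = 1$, I would use the standard fact that a nonzero univariate polynomial of degree $d$ over a field has at most $d$ roots; hence among the $|S|$ equally likely choices of $r_1 \in S$, at most $d$ make $P(r_1) = 0$, and the probability is at most $d/|S|$. (When $d = 0$, $P$ is a nonzero constant and the probability is $0$.)

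For the inductive step with $n \ge 2$, I would regard $P$ as a polynomial in $x_n$ over the ring $\mathbb F[x_1, \dots, x_{n-1}]$ and write $P = \sum_{i=0}^{k} x_n^i \, P_i(x_1, \dots, x_{n-1})$, where $k$ is the largest exponent of $x_n$ whose coefficient $P_k$ is not identically zero. Then $P_k \not\equiv 0$ and $\deg P_k \le d - k$. First apply the induction hypothesis to $P_k$ in its $n-1$ variables: the event $E_1 := \{\, P_k(r_1, \dots, r_{n-1}) = 0 \,\}$ has probability at most $(d-k)/|S|$; note this already settles the case $k = 0$, and is harmless when $k = d$. Conditioned on the complement of $E_1$, the univariate polynomial $Q(x_n) := P(r_1, \dots, r_{n-1}, x_n)$ has degree exactly $k$ and is nonzero, since its leading coefficient is $P_k(r_1, \dots, r_{n-1}) \ne 0$ and no higher power of $x_n$ occurs in $P$; because $r_n$ is drawn independently of $r_1, \dots, r_{n-1}$, the base-case bound gives $\Pr[\, Q(r_n) = 0 \mid \overline{E_1}\,] \le k/|S|$. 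Since $\{ P(r_1, \dots, r_n) = 0 \}$ is contained in the union of $E_1$ and the event $\{Q(r_n) = 0\} \cap \overline{E_1}$, a union bound yields that the probability is at most $(d-k)/|S| + k/|S| = d/|S|$, which closes the induction.

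This argument is elementary and I do not anticipate a genuine obstacle; the only points that need a little care are the bookkeeping around conditioning and independence --- specifically, choosing $k$ to be the \emph{top} exponent of $x_n$ so that the residual polynomial $Q$ provably has degree $k$ whenever $P_k$ does not vanish at the sampled point, and invoking the independence of $r_n$ from $r_1, \dots, r_{n-1}$ so that the conditional probability of $\{Q(r_n) = 0\}$ is bounded by $k/|S|$ regardless of the realized values of the other coordinates. The degenerate cases ($P$ a nonzero constant, $k = 0$, $k = d$) are also worth checking explicitly, but each reduces immediately to the corresponding part of the argument above.
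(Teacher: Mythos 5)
Your proof is correct, but note that the paper does not prove this lemma at all: it is imported as a known result of DeMillo--Lipton, Schwartz, and Zippel, so there is no internal argument to compare against. What you give is the standard induction-on-the-number-of-variables proof from those original references, and your bookkeeping is sound --- in particular, taking $k$ to be the \emph{largest} exponent of $x_n$ with nonvanishing coefficient polynomial $P_k$, so that $\deg P_k \le d-k$ and the specialized univariate polynomial $Q$ has degree exactly $k$ whenever $P_k(r_1,\dots,r_{n-1}) \ne 0$, together with the union bound $\Pr[P=0] \le \Pr[E_1] + \Pr[Q(r_n)=0 \mid \overline{E_1}]$ using the independence of $r_n$. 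One small point worth making explicit is that the statement as written tacitly assumes $P \not\equiv 0$ (the conclusion fails for the zero polynomial); you handled this correctly by reading ``total degree $d \ge 0$'' under the convention that the zero polynomial has no nonnegative degree, which is exactly the intended hypothesis in the cited sources.
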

}
\ifarxiv{}
\begin{proof}[Proof of \cref{lem:exactmatching-probability}]
\else{}
\begin{proof}
\fi
	Let~$\mathcal P$ be the set of all partitions of~$V$ into unordered pairs.
	For~$\sigma \in \mathcal P$ with~$\sigma = \{\{i_1, j_1\}, \{i_2, j_2\}, \dots, \{i_{n/2}, j_{n/2}\}\}$ with~$i_k < j_k$ for~$k \in [n/2]$ and~$i_1 < i_2 < \dots < i_{n/2}$, let
	\[
		\pi_\sigma := \begin{bmatrix}
			1&2&3&4&\cdots&n-1&n\\
			i_1&j_1&i_2&j_2&\cdots&i_{n/2}&j_{n/2}
		\end{bmatrix}
	\]
	be the corresponding permutation.
	Let~$\val(\sigma) := \sgn(\pi_\sigma)\prod_{\{i,j\} \in \sigma} a_{ij}$, where~$\sgn(\pi_\sigma) \in \{+1, -1\}$ is the signum of~$\pi_\sigma$.
	The Pfaffian of~$A$ (computed by \cref{alg:matching}) is defined as~$\pf(A) := \sum_{\sigma \in \mathcal P} \val(\sigma)$ \cite{lovasz2009matching}.
	Note that~$A$ is skew-symmetric, hence, $\pf(A) = \sqrt{\det(A)} = P$ \cite{muir1882,lovasz2009matching}.
	As~$\val(\sigma) = 0$ whenever~$\sigma$ contains a non-edge, we have~$P = \sum_{M \in \mathcal{PM}} \val(M)$, where~$\mathcal{PM}$ is the set of perfect matchings in $G$.
	Let~$M$ be a perfect matching and let~$z_q = |c^q \cap M|$, $q \in \oneto{s}$.
	Then
	$
		\val(M) = \sgn(\pi_M) \prod_{q \in \oneto{s}} \: \prod_{\{i,j\} \in M \cap c^{q}} w_{ij}y_q = b \cdot y_1^{z_1} y_2^{z_2} \cdots y_s^{z_s},
	$
	where~$b \in \ZZ$.
	Let $\mathcal{PM^*} \subseteq \mathcal{PM}$ be the family of perfect matchings $M^*$ which have exactly $k_i$~edges of color~$i$, for all~$i \in [s]$.
	Then the coefficient~$b^*$ of the monomial~$b^*y_1^{k_1} y_2^{k_2} \cdots y_s^{k_s}$ of~$P$
	is
	$
		b^* = \sum_{M^* \in \mathcal{PM^*}} \sgn(\pi_{M^*})\prod_{\{i,j\} \in M^*} w_{ij}\,.
	$
	Hence, if \cref{alg:matching} returns \yes{} (i.e., $b^* \neq 0$), then $\mathcal{PM}^* \neq \emptyset$.
	
	Now conversely assume $I$~to be a \yes-instance, i.e., $\mathcal{PM^*} \neq \emptyset$.
	We analyze the probability of the event $b^*=0$ occurring.
	Note that $b^*$ can be seen as a polynomial of degree at most~$n/2$ over the indeterminates $\{w_{ij} \mid \{i,j\} \in E\}$.
	As we have drawn the~$w_{ij}$ independently and uniformly at random from~$[\gamma]$ with~$\gamma \ge n/(2p)$,
	by the DeMillo-Lipton-Schwartz-Zippel lemma (\cref{lem:schwartz-zippel}) the probability that~$b^* = 0$ is at most~$n/(2\gamma) \le p$.
\end{proof}
Now we show that \cref{alg:matching} can be executed efficiently.
\ifarxiv{}
\begin{lemma}
\else{}
\begin{lemma}
		[\appref{lem:exactmatching-time}]
\fi{}
		\label{lem:exactmatching-time}
		\cref{alg:matching} runs in $n^{O(s)} \log(1/p)$ time.
\end{lemma}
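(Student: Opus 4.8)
The plan is to bound the cost of each of the four steps of \cref{alg:matching}; all of them are immediate except Step~3, the computation of $P=\pf(A)\in\ZZ[y_1,\dots,y_s]$, on which the whole bound rests. Step~1 draws $\bigO(n^2)$ independent weights from $[\gamma]$ with $\gamma=\ceil{n/(2p)}$, each of bit-length $\bigO(\log(n/p))$, in $n^{\bigO(1)}\log(1/p)$ time; Step~2 assembles the $n\times n$ matrix $A$ in $n^{\bigO(1)}$ time; and, given $P$ as a list of monomials, Step~4 merely reads off the coefficient of $y_1^{k_1}\cdots y_s^{k_s}$ and tests whether it is nonzero, in time linear in the size of $P$. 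So it suffices to show that $P$ admits a representation of size $n^{\bigO(s)}\log(1/p)$ that can be computed in $n^{\bigO(s)}\log(1/p)$ time.

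For the size of $P$ I invoke its description from the proof of \cref{lem:exactmatching-probability}: $P=\sum_M\val(M)$ ranges over the perfect matchings $M$ of $G$, and each $\val(M)$ is a single monomial $b\cdot y_1^{z_1}\cdots y_s^{z_s}$ with $z_q=\abs{c^q\cap M}$ and $\sum_q z_q=\abs{M}=n/2$. Hence $\deg_{y_q}P\le n/2$ for every~$q$, so $P$ has at most $(\lfloor n/2\rfloor+1)^s=n^{\bigO(s)}$ monomials, and --- each coefficient being a signed sum of at most $n^{\bigO(n)}$ products of at most $n/2$ weights drawn from $[\gamma]$ --- every coefficient of $P$ is an integer of bit-length $\bigO(n\log(n/p))$.

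To compute $P$ I use multivariate interpolation over the grid $\{0,1,\dots,\lfloor n/2\rfloor\}^s$, which has $n^{\bigO(s)}$ points. At each grid point, $A$ specialises to an integer skew-symmetric matrix with entries of bit-length $\bigO(\log(n/p))$, and its Pfaffian is a single integer that can be computed in $n^{\bigO(1)}$ arithmetic operations on integers of bit-length $\bigO(n\log(n/p))$ by a standard polynomial-time Pfaffian routine (for instance fraction-free skew-symmetric elimination, or --- using $\det(A)=\pf(A)^2$ --- a multi-modular determinant computation together with a combinatorial sign determination). The coefficients of $P$ are then recovered from these $n^{\bigO(s)}$ values by $s$ rounds of univariate Lagrange interpolation, whose cost is dominated by that of the evaluations. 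Multiplying the number of grid points by the per-point cost yields the bound $n^{\bigO(s)}\log(1/p)$ for Step~3, and hence for \cref{alg:matching}.

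The only delicate point --- and it is bookkeeping rather than a genuine obstacle --- is staying within this budget: one must verify that every integer arising in the Pfaffian evaluations and in the interpolation has bit-length $\bigO(n\log(n/p))$ (a Hadamard-type bound, since each such quantity is a minor or sub-Pfaffian of a specialisation of $A$, whose entries are bounded by $\gamma n$), that only $n^{\bigO(1)}$ arithmetic operations are spent per grid point, and that fast integer multiplication is employed, so that the operand bit-length contributes the factor $\log(1/p)$ (with the $\log n$ part absorbed into $n^{\bigO(1)}$) rather than a power of it.
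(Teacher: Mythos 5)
Your proof is correct but takes a genuinely different route from the paper's. The paper computes $\det(A)$ \emph{symbolically} --- Gaussian elimination carried out over the polynomial ring, using the bound of $\binom{n+s}{s} \in n^{\bigO(s)}$ on the number of monomials of total degree at most $n$ to control intermediate sizes --- and then extracts $P = \sqrt{\det(A)}$ coefficient by coefficient via a dedicated helper result (\cref{lem:rootOfPolynomials}) showing that the square root of a multivariate polynomial can be computed in $n^{\bigO(s)}$ algebraic operations. You instead use an evaluation--interpolation strategy: specialise $y_1,\dots,y_s$ to each of the $n^{\bigO(s)}$ points of an integer grid, compute the signed integer Pfaffian of the resulting numeric skew-symmetric matrix at each point in $n^{\bigO(1)}$ arithmetic operations, and recover $P=\pf(A)$ by tensor-product Lagrange interpolation. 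This buys you two things: you never need the polynomial-square-root lemma, and all arithmetic stays over $\ZZ$ rather than a polynomial ring, which makes the Hadamard-type bit-length bookkeeping more transparent. The one spot where you are slightly loose is the fallback ``$\det(A)=\pf(A)^2$ plus a combinatorial sign determination'': recovering the sign of the Pfaffian from the determinant alone is not immediate, and the sign genuinely matters here because interpolation requires sign-consistent values across the grid. Your primary suggestion (a direct skew-symmetric Pfaffian elimination routine) is the right one and closes that gap. Both approaches give the claimed $n^{\bigO(s)}\log(1/p)$ bound.
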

\appendixproof{lem:exactmatching-time}{
As for the running time of \cref{alg:matching}, note that computing the determinant as well as its square root are the most expensive operations.
For completeness, 
We first show that we can compute the square root of a polynomial efficiently.
\begin{lemma}
	\label{lem:rootOfPolynomials}
	Let~$P \in \ZZ[x_1, \dots, x_s]$ be a polynomial of degree~$2n>s$ such that there exists a polynomial~$Q \in \ZZ[x_1, \dots, x_s]$ with~$P(x) = (Q(x))^2$.
	Computing~$Q$ from~$P$ takes~$n^{\bigO(s)}$ algebraic operations.%
	\footnote{That is, additions, subtractions, multiplications, divisions.}
\end{lemma}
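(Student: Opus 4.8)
The plan is to reduce the multivariate square-root extraction to the univariate case via a Kronecker substitution, solve the univariate problem by the standard power-series square-root recurrence, and then invert the substitution.

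First I would record the degree bounds: since $\deg P = 2n$ and $P = Q^2$, we have $\deg Q = n$, so every variable occurs with exponent at most $2n$ in $P$ and at most $n$ in $Q$. Let $\psi \colon \ZZ[x_1, \dots, x_s] \to \ZZ[t]$ be the ring homomorphism with $\psi(x_i) = t^{(2n+1)^{i-1}}$. Because every exponent vector $\vec e$ occurring in $P$ (and hence in $Q$) satisfies $0 \le e_i \le 2n$ for all $i$, uniqueness of base-$(2n+1)$ representations shows that $\psi$ maps the monomials of $P$ (and of $Q$) injectively to powers of $t$; thus $P$ and $Q$ are recovered from $\psi(P)$ and $\psi(Q)$ monomial by monomial, by reading the base-$(2n+1)$ digits of each exponent. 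Also $\psi(P) = \psi(Q)^2$, and $\deg_t \psi(P) \le 2n \cdot (2n+1)^{s-1}$, which is $n^{\bigO(s)}$: for $n \ge 3$ we have $2n+1 \le n^2$, so this is at most $n^{2s}$, while for $n \le 2$ the hypothesis $2n > s$ forces $s \le 3$ and the bound is an absolute constant. Building $\psi(P) \in \ZZ[t]$ from the coefficient list of $P$ costs $n^{\bigO(s)}$ operations.

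Next I would extract the square root of the univariate polynomial $f := \psi(P)$, using that one exists by hypothesis. Write $f = t^{2a} R(t)$ with $R(0) \neq 0$; if $\psi(Q) = t^a S(t)$ with $S(0) \neq 0$, then $R = S^2$, so $R(0) = S(0)^2$ is a perfect square in $\ZZ$ and $S(0)$ is obtained, up to the unavoidable global sign of $Q$, as the integer square root of the known integer $R(0)$, at negligible cost. The normalized series $R(t)/R(0) = (S(t)/S(0))^2$ has constant term $1$, so its square root $E(t) = \sum_{j \ge 0} w_j t^j$ with $w_0 = 1$ is computed by the recurrence $w_k = \frac{1}{2}\bigl(r_k - \sum_{j=1}^{k-1} w_j w_{k-j}\bigr)$, where $r_k$ is the degree-$k$ coefficient of $R/R(0)$, run for $k = 1, \dots, \deg R / 2$; equivalently by Newton iteration $y_{m+1} = \frac{1}{2}\bigl(y_m + (R/R(0)) \cdot y_m^{-1}\bigr)$ with power-series inverses. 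Either way this uses only additions, subtractions, multiplications and divisions, and $\bigO\!\left((\deg_t f)^2\right) = n^{\bigO(s)}$ of them, yielding $\psi(Q) = S(0) \cdot t^a E(t)$. Inverting $\psi$ then reads $Q$ off in $n^{\bigO(s)}$ further operations.

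The routine parts are the degree bookkeeping and the univariate square-root recurrence. The points needing care are that the square root is intrinsically defined only up to an overall sign, that recovering the scalar $S(0)$ costs one integer square root of a perfect square (a single bounded-precision side computation), and --- the genuinely load-bearing point --- that the Kronecker exponents $(2n+1)^{i-1}$ really do separate every monomial that can appear in $P$ and in $Q$, which is exactly where the exponent bound $2n$ enters (and, for small $n$, where the hypothesis $2n > s$ keeps $s$ small enough for the degree of $\psi(P)$ to be $n^{\bigO(s)}$).
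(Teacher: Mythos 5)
Your proof is correct, but it takes a genuinely different route from the paper's. The paper works directly with the multivariate polynomial: writing $Q = \sum_\alpha d_\alpha x^\alpha$, it observes that $c_\kappa = \sum_{\alpha+\beta=\kappa} d_\alpha d_\beta$ and solves for $d_\kappa$ by induction on $\norm{\kappa}_1$, with base case $d_{(0,\dots,0)} = \sqrt{c_{(0,\dots,0)}}$ and the recurrence $d_\kappa = \frac{1}{2d_{(0,\dots,0)}}\bigl(c_\kappa - \sum d_\alpha d_\beta\bigr)$, then counts the summands to get $n^{\bigO(s)}$ operations. You instead collapse the multivariate problem to a univariate one by the Kronecker substitution $x_i \mapsto t^{(2n+1)^{i-1}}$ (justified by the total-degree bound, which forces each individual exponent below $2n+1$, so the encoding is injective), extract the univariate square root by the standard power-series recurrence or Newton iteration over $\ZZ[t]$, and decode. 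The two arguments have the same complexity; yours is essentially the paper's recurrence after linearizing the multi-index lattice, so the operation count is the same up to constants, but it lets you cite off-the-shelf univariate machinery. One point where your argument is actually cleaner: the paper's recurrence divides by $d_{(0,\dots,0)}$ and implicitly assumes $Q$ has nonzero constant term, which is not justified by the hypotheses; in the multivariate setting there is no clean "factor out the lowest term" fix (e.g.\ $Q = x_1 + x_2$ has zero constant term but no monomial factor). After Kronecker substitution, however, the vanishing constant term of $\psi(Q)$ \emph{does} correspond to a factor $t^a$, which you strip off before normalizing — so your version handles that degenerate case correctly where the paper glosses over it. A small slip on your side: the correct degree bound for $\psi(P)$ is $(2n+1)^s - 1$ if one only uses per-variable bounds, or $2n(2n+1)^{s-1}$ using the total degree bound as you intend; either way it is $n^{\bigO(s)}$, so nothing breaks.
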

\begin{proof}
	For $\alpha \in \NN^s$, we write $x^\alpha$ to denote $x_1^{\alpha_1}x_2^{\alpha_2} \cdots x_s^{\alpha_s}$.
	Let~$P(x) =: \sum_{\alpha \in \NN^s} c_\alpha x^\alpha$
	and $Q(x) =: \sum_{\alpha \in \NN^s} d_\alpha x^\alpha$
	with coefficients $c_\alpha, d_\alpha \in \ZZ$.
	Then
	\begin{align*}
		P(x) &= \sum_{\alpha, \beta \in \NN^s} d_{\alpha}d_{\beta}x^{\alpha+\beta}
		\intertext{and thus for all $\kappa \in \NN^s$}
		c_\kappa &= \sum_{\substack{\alpha, \beta \in \NN^s\\\alpha + \beta = \kappa}} d_\alpha d_\beta \,.
	\end{align*}
	We will compute the~$d_\kappa$ by induction on~$\norm{\kappa}_1$.
	Clearly, $d_{(0,0,\dots,0)} = \sqrt{c_{(0,0,\dots,0)}}$, this is our base case.
	Now let~$\kappa \in \NN^s$ and suppose we have computed all~$d_\alpha$ with~$\norm{\alpha}_1 < \norm{\kappa}_1$.
	We have
	\[
		c_{\kappa} - \sum_{\substack{\alpha + \beta = \kappa\\ \norm{\alpha}_1 \ne 0 \\ \norm{\beta}_1 \ne 0}} d_{\alpha}d_{\beta}
			= \sum_{\substack{\alpha + \beta = \kappa\\ \norm{\alpha}_1 \cdot \norm{\beta}_1 = 0}} d_{\alpha}d_{\beta}
			= 2 \cdot d_{(0,0,\dots,0)} \cdot d_\kappa.
	\]
	This is equivalent to
	\begin{align}\label{eq:coefficient}
		d_{\kappa} = \frac{1}{2 \cdot d_{(0,0,\dots,0)}} \cdot \Big(c_{\kappa} - \sum_{\substack{\alpha + \beta = \kappa\\ \norm{\alpha}_1 \ne 0 \\ \norm{\beta}_1 \ne 0}} d_{\alpha}d_{\beta} \Big).
	\end{align}
	We already computed~$d_{(0,0,\dots,0)}$ as well as all~$d_\alpha$ and~$d_\beta$ occuring in~\eqref{eq:coefficient}.
	Therefore we can use~\eqref{eq:coefficient} to compute~$d_\kappa$ and thus $Q(x)$.

	Note that the sum in~\eqref{eq:coefficient} contains one summand for each $d_\alpha$ with $0 < \norm{\alpha}_1 < \norm{\kappa}_1$.
	Since $1/(2d_{(0,0,\dots,0)})$ only needs to be computed once, computing $d_\kappa$ requires
	\[
		1 + \sum_{j=1}^{\norm{\kappa}_1-1} \binom{s+j}{j} = \sum_{j=0}^{\norm{\kappa}_1-1} \binom{s+j}{j} = \binom{s+\norm{\kappa}_1}{s+1}
	\]
	algebraic operations.
	As we need to do this for every~$\kappa \in \NN^s$ with~$\norm{\kappa}_1 \le n$, we require
	\[
		\bigO(1) + \sum_{i=1}^{n} \binom{s+i}{s} \cdot \binom{s+i}{s+1}
		\leq \bigO\left(n \cdot (s+n)^{s} \cdot (s+n)^{s+1}\right)
		\leq \bigO\left(n^{2s+2}\right)
		\leq n^{\bigO(s)}
	\]
	algebraic operations overall.
\end{proof}
\begin{proof}[Proof of \cref{lem:exactmatching-time}]
		Note that, \wilog{}, $s \le n/2$.
		Moreover, we need at most $O(\log{\gamma})$ cells to store $w_{ij}$ for all $\{i,j\} \in E$.
		Hence, computing~$A$ takes~$\bigO(n^2\log{\gamma})$ time.
	As~$\det(A)$ is a polynomial of degree at most~$n$ with~$s$ variables, it consists of at most~$\binom{n+s}{s} \in \bigO((2n)^s)$ coefficients.
	Hence, computing~$\det(A)$ takes at most~$n^{\bigO(s)}$ algebraic operations, e.g., using Gauss elimination. 
	As we need at most $O(\log{\gamma})$ cells for the initial values $w_{ij}$ for all $\{i,j\} \in E$,
	and we need only~$n^{\bigO(s)}$ algebraic operations to compute~$\sqrt{\det(A)}$ (see \cref{lem:rootOfPolynomials}),
	we have an overall running time of $n^{\bigO(s)} \log(\gamma) = n^{\bigO(s)} \log{(1/p)}$.
\end{proof}
}%
We are now ready to put all parts together and prove \cref{thm:findColoredMatching}.
In a nutshell, we use \cref{alg:matching} to check whether there is a solution.
If this is the case, then we try to delete as many edges as possible from the instance until the whole edge set is a solution\ifarxiv{}\else{}, see \cref{proof:thm:findColoredMatching} for details\fi{}.
\appendixproof{thm:findColoredMatching}{
\begin{proof}[Proof of \cref{thm:findColoredMatching}]
	Let $I = (G=(V,E),c,k_1,\dots,k_s)$ be an instance of \ExactMatchingProb{}.
	Let $m=|E|$ and $n=|V|$.
	We check whether $I$ has a solution by applying \cref{alg:matching} with error probability $p/(m+1)$.
	If the answer is \no{}, then we output that there is no solution.
	Otherwise we initialize $M := \emptyset$.

	We iterate over all edges 
	$e \in E$ and 
	apply \cref{alg:matching} with error probability $p/(m+1)$
	to the instance $((V,M \cup (E \setminus \{e\})), c, k_1, \dots, k_s)$.
	Afterwards, we delete~$e$ from~$E$.
	If the result is \no{}, then we add $e$ to~$M$.
	In any case we proceed with the next edge.
	If we reached $M = E$, then we output the solution~$M$.
	
	By \cref{lem:exactmatching-probability}, the probability that at some step an error occurs is at most $p/(m+1)$.
	Since we execute \cref{alg:matching} at most $m+1$ times, the overall error probability is $p$. 

	Overall, we execute \cref{alg:matching} at most $m+1$ times, 
	each of which can be computed in~$n^{\bigO(s)}\log(1/p)$ time due to \cref{lem:exactmatching-time}.
	Hence, the overall running time can be bounded by~$n^{\bigO(s)}\log(1/p)$.
\end{proof}
}%
Putting \cref{thm:findColoredMatching} and \cref{thm:diverse-multistage-pi} together, we can prove the main theorem of this section\ifarxiv{}\else{}, see \cref{proof:thm:dmmatching} for details\fi{}.
\appendixproof{thm:dmmatching}{
\begin{proof}[Proof of \cref{thm:dmmatching}]
	\cref{thm:findColoredMatching} with $s=4$ fulfills \cref{assumption} wherein~$g(p)=\bigO(\log 1/p)$ and~$f(r)=1$.
	We aim for a constant error probability, say $p=1/4$.
	Hence, by \cref{thm:diverse-multistage-pi} and \cref{thm:findColoredMatching} 
	we have an algorithm with error probability~$1/4$ for an instance $J$ of \dmMatching{} with running time
			$2^{\bigO(\ell)} \cdot |J|^{\bigO(1)}$.
\end{proof}
}%

\section{Application: $s$-$t$ Path}
\label{sec:path}
\appendixsection{sec:path}
In this section, we apply our framework to the task of finding a sequence of diverse $s$-$t$ paths.
This has obvious applications e.g.\ in %
convoy routing \cite{FNSZ20}. 

\problemdef{\dmPath}
{A sequence of %
		graphs $(G_i)_{i=1}^\lifetime$, 
	two distinct vertices~$s,t \in \bigcap_{i=1}^\lifetime V(G_i)$,
 and $\ell \in  \NN_0$.}
 {Is there a sequence $(P_1,P_2,\dots,P_\lifetime)$ such that
		 $P_i$ is an $s$-$t$ path in $G_i$ for all $i \in [\lifetime]$,
	and  $\abs{S_i \Delta S_{i+1}} \geq \ell$ for all  $i \in \oneto{\tau -1}$?}
	Our goal is to show that \dmPath{} parameterized by $\ell$ is in \FPT.
\begin{theorem}\label{thm:diverse-paths}
		\dmPath{} parameterized by $\ell$ is in \FPT{}.
\end{theorem}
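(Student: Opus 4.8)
The plan is to instantiate the framework of \cref{sec:framework} with $\Pi := \stPath{}$, taking as parameter $\param := \tw(G)$, and to circumvent the $\NP$-hardness of \coloredPath{} by a win/win argument that, for each input graph, either reduces its ``relevant part'' to treewidth $\poly(\ell)$ or directly produces three $s$-$t$ paths that are pairwise very diverse. For the instantiation we set $\BB(G) := V(G)$ and $\solutionset[G]{} := \{ V(P) \mid P \text{ is an } s\text{-}t \text{ path in } G \}$. Given a graph together with a width-$w$ tree decomposition, \coloredPath{} is solvable in $f(w) \cdot n^{\bigO(1)}$ time by a routine dynamic program: alongside the usual bookkeeping for carving out a single $s$-$t$ path one additionally records, for each partial solution, the vector $(n_1,\dots,n_4)$ counting how many vertices of each color it already uses (only $n^4$ such vectors are relevant). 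Hence \cref{assumption} holds deterministically, and \cref{lem:compute-diverse-F} computes, for any graph $G$, an $\ell$-diverse representative of $\solutionset[G]{}$ of size $2^{\bigO(\ell)}\cdot\abs{G}^{\bigO(1)}$ in $2^{\bigO(\ell)}\cdot f(\tw(G))\cdot\abs{G}^{\bigO(1)}$ time --- which is FPT in $\ell$ \emph{whenever $\tw(G)$ is bounded by a function of $\ell$}. Since the dynamic program from the proof of \cref{thm:diverse-multistage-pi} turns a family of $\ell$-diverse representatives into a decision in $2^{\bigO(\ell)}\cdot\abs{J}^{\bigO(1)}$ time, it suffices to compute, for each input graph $G_i$, an $\ell$-diverse representative $\F_i$ of $\solutionset[G_i]{}$ in time FPT in $\ell$.

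Fix $i$ and first replace $G_i$ by the subgraph $G_i'$ induced by the vertices lying on some $s$-$t$ path (computable in polynomial time by flow computations); as every $s$-$t$ path is contained in $G_i'$ we have $\solutionset[G_i']{} = \solutionset[G_i]{}$ (if $G_i'=\emptyset$, output $\emptyset$). A routine block--cut-tree argument shows that $G_i'$ is connected with a path-shaped block--cut tree: its blocks in order are $B_1,\dots,B_m$, its separating cut vertices are $v_1,\dots,v_{m-1}$, and writing $v_0:=s$ and $v_m:=t$ one has $v_{j-1},v_j\in V(B_j)$ with $v_{j-1}\ne v_j$, and each $B_j$ is either the single edge $v_{j-1}v_j$ or $2$-connected; moreover every $s$-$t$ path in $G_i'$ is the concatenation of a $v_{j-1}$-$v_j$ path inside $B_j$ over all $j\in\oneto{m}$, so symmetric differences decompose blockwise and $\tw(G_i')=\max_j\tw(B_j)$.

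Now the win/win, with a threshold $g(\ell)=\poly(\ell)$ fixed below and a constant-factor FPT treewidth approximation applied to each block. If $\tw(B_j)\le g(\ell)$ for all $j$, then $\tw(G_i')\le g(\ell)$ and \cref{lem:compute-diverse-F} produces $\F_i$ in $2^{\bigO(\ell)}\cdot f(g(\ell))\cdot\abs{G_i}^{\bigO(1)}$ time, which is FPT in $\ell$. Otherwise some block $B_j$ has $\tw(B_j)>g(\ell)$ and is $2$-connected, and the claim is that one can, in FPT time, find three $v_{j-1}$-$v_j$ paths in $B_j$ that are pairwise at vertex-symmetric-difference $\ge 2\ell$; completing each by one fixed $v_{k-1}$-$v_k$ path in every other block then yields three $s$-$t$ paths in $G_i'$ that are pairwise $\ge 2\ell$ apart, so by \cref{lem:three-matchings} their vertex sets form an $\ell$-diverse representative of $\solutionset[G_i]{}$. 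To prove the claim I would use the generalization of the Erd\H{o}s--P\'{o}sa theorem for long cycles of \citet{MNSW-long-cycles}: if $B_j$ has no $k$ vertex-disjoint cycles of length $\ge L$, then $\bigO_{k,L}(1)$ vertices hit all of them, and removing these leaves a graph of circumference $<L$, hence (by the classical bound $\tw(H)\le\mathrm{circ}(H)-1$) of bounded treewidth; so $\tw(B_j)=\bigO_{k,L}(1)$. Picking $L:=2\ell+\bigO(1)$, $k$ a suitable constant, and $g(\ell)$ large enough forces $k$ vertex-disjoint length-$\ge L$ cycles $C_1,\dots,C_k$ in $B_j$; using $2$-connectivity and Menger's theorem one routes a $v_{j-1}$-$v_j$ path absorbing a long arc of two vertex-disjoint cycles among the $C_q$, with the traversed arc of each of the two cycles freely choosable among its two arcs. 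This gives $2^2$ path options, among which (exactly as in the proof of \cref{lem:three-matchings}, the arcs being longer than $2\ell$) three are pairwise $\ge 2\ell$ apart. All ingredients --- flow and block computations, treewidth approximation, finding the disjoint long cycles, and the routing --- run in FPT time in $\ell$.

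The main obstacle is this last routing step: attaching the two fixed terminals $v_{j-1},v_j$ to two of the disjoint long cycles and rerouting an $s$-$t$ path through a long arc of each, while keeping the two detours vertex-disjoint \emph{and} retaining, on each cycle, the freedom to pick either arc --- it is this freedom, not merely the existence of one long path, that creates the diversity. This is a fan-lemma argument with some case analysis on how the terminals and the connecting paths meet the $C_q$; taking $k$ a sufficiently large constant gives the slack needed to resolve the conflicts. (Alternatively one can run the same argument through the polynomial Excluded-Grid Theorem, using that grids route robustly between two prescribed vertices; going via long cycles keeps the growth of $g$ in $\ell$ milder.)
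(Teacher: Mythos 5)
Your high-level plan matches the paper's: treat $\Pi = \stPath$, solve \coloredPath{} by a treewidth-DP, and use the \citet{MNSW-long-cycles} result in a win/win to either bound the treewidth by $\poly(\ell)$ or directly produce a three-element $\ell$-diverse representative (\cref{lem:three-matchings}). Your observation that, after pruning to vertices on some $s$-$t$ path, $G_i'$ has a path-shaped block--cut tree and that symmetric differences (and treewidth) localize to blocks is correct, but the paper does not need it: it only records the equivalent fact that $G_i'$ plus the edge $\{s,t\}$ (or a new degree-two vertex attached to $s,t$) is biconnected, and applies \citet{MNSW-long-cycles} to $G_i'$ directly.

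The genuine gap is in the step you yourself flag: routing a single $v_{j-1}$-$v_j$ path through two of the long cycles while retaining, on \emph{both} cycles simultaneously, the freedom to choose either arc, then taking three of the resulting $2^2$ paths. This double-arc routing is not carried out, and it is also not needed --- the difficulty disappears if you never ask for a single path meeting both cycles with independent arc choices. The paper's construction works with just two disjoint cycles $C, C'$ of length $\geq 4\ell$ and builds the three paths asymmetrically: $P_1$ is an $s$-$t$ path through at least one edge of $C$ (obtained by adding a degree-two vertex $s'$ adjacent to $s,t$, subdividing an edge of $C$ into a new vertex $t'$, and taking two internally disjoint $s'$-$t'$ paths in the resulting biconnected graph); after shortcutting, one may assume $P_1$ meets each of $C, C'$ in at most one arc, so $P_2 := E(P_1)\,\Delta\,E(C)$ is again an $s$-$t$ path; and $P_3$ is either $E(P_1)\,\Delta\,E(C')$ (if $P_1$ meets $C'$) or, otherwise, a fresh $s$-$t$ path containing at least half of $C'$, built the same way as $P_1$. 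Because $C$ and $C'$ are vertex-disjoint, $P_1\,\Delta\,P_2 = C$ never interacts with $C'$, so each of the three pairwise symmetric differences is $\geq 2\ell$ without ever choosing arcs on two cycles at once. Replacing your routing paragraph by this two-cycle construction (applied inside $G_i'$, or inside the offending block if you prefer to keep the block decomposition) closes the gap; the rest of your argument --- the framework instantiation, the colored treewidth DP, the bound $\tw(H) \le \mathrm{circ}(H)-1$ used to derive the Erd\H{o}s--P\'osa win/win threshold, and assembling the per-instance representatives into the DP of \cref{thm:diverse-multistage-pi} --- is sound and in line with the paper.
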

We will prove \cref{thm:diverse-paths} by means of \cref{thm:diverse-multistage-pi-det} at the end of this section.
To this end, we need to consider the following problem.
\optproblemdef{\coloredPath}
{ A %
		graph $G$,
		distinct vertices~$s,t \in V(G)$, 
		coloring $c \colon V(G) \to [4]$,
		and $n_i \in \NN_0,i\in[4]$.
		}{(if exists) An $s$-$t$ path $P$ such that $\abs{c^{-1}(i) \cap V(P)} = n_i$ for all $i\in[4]$.}

Unfortunately, \coloredPath{} is unlikely to be polynomial-time solvable,
as it is NP-hard even if only a single color is used, by a trivial reduction from \textsc{Hamiltonian Path}.
However, as we will see in the proof of \cref{thm:diverse-paths}, 
by a result of \citet{MNSW-long-cycles} we can actually reduce \coloredPath{} to the case that all graphs have small treewidth.
In this setting, we then employ dynamic programming.

\ifarxiv{}
\begin{lemma}
\else{}
\begin{lemma}
		[\appref{lem:tw-algo}]
\fi{}
		\label{lem:tw-algo}
		\coloredPath{} is solvable $k^{\bigO(k)}\cdot|I|^{\bigO(1)}$ time, 
		where $k$ is the treewidth of the input graph~$G$.
\end{lemma}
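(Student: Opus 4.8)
The plan is to run a standard dynamic program over a tree decomposition. First I would compute, in $2^{O(k)}\cdot|I|^{O(1)}$ time by known algorithms, a tree decomposition of $G$ of width $O(k)$, make it a nice tree decomposition with separate introduce-vertex, introduce-edge, forget, and join nodes, and add $s$ and $t$ to every bag (this increases the width by at most two and guarantees both terminals are always present and are forgotten only at the root). Rooting the decomposition at a bag equal to $\{s,t\}$, I associate with each node $p$ its bag $X_p$ and the subgraph $G_p$ spanned by the vertices occurring in the subtree below $p$, and the DP records which ``boundary profiles'' are realizable by a partial solution living in $G_p$.

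A boundary profile at node $p$ consists of a subset $S\subseteq X_p$ of bag vertices the partial solution uses; a map $\delta\colon S\to\{1,2\}$ giving the degree of each used bag vertex in the chosen edge set (degree-$2$ vertices are ``internal'', degree-$1$ vertices are the loose ends of partial path segments); a matching $\mu$ on $\delta^{-1}(1)$ recording which loose ends are already joined by a built segment, where $s$ and $t$ may be matched to each other or left unmatched to be closed at the root; and a colour-count vector $(m_1,m_2,m_3,m_4)$ with $0\le m_i\le n_i$ giving how many vertices of each colour the partial solution uses. A table entry is $\top$ iff $G_p$ contains a set of vertex-disjoint paths using no vertex of $X_p\setminus S$, whose intersection pattern with $X_p$ is described by $(S,\delta,\mu)$, and whose vertex multiset contains exactly $m_i$ vertices of colour $i$. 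There are $2^{O(k)}$ choices for $(S,\delta)$, at most $k^{O(k)}$ matchings $\mu$, and $(n+1)^4=n^{O(1)}$ count vectors, so $k^{O(k)}\cdot n^{O(1)}$ states per node.

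The transitions are the usual ones for path-type treewidth DPs, augmented by the colour bookkeeping. A leaf admits only the empty profile. Introducing a vertex $v$ either leaves $v$ unused, or adds $v$ with $\delta(v)=1$, matched to itself as a length-$0$ segment, incrementing $m_{c(v)}$. Introducing an edge $\{u,v\}$ either does nothing, or selects it, which is allowed only if it keeps all degrees at most $2$ and does not close a cycle among the segments (i.e.\ $u,v$ are not already matched to each other, unless they are $s,t$ and this is the final closure), updating $\delta$ and $\mu$ by splicing the segments ending at $u$ and $v$. A forget node for $v$ retains exactly those child profiles with $v\notin S$ or $\delta(v)=2$, i.e.\ $v$ must be fully processed before being forgotten. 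A join node combines one profile from each child: the used sets are merged, the degree maps must be pointwise-compatible (summing to at most $2$), the colour counts add with the contribution of vertices used by both children subtracted once, and the two matchings are merged by taking the transitive closure of their union, which is valid only if it contains no cycle and leaves each vertex with segment-degree at most one. Finally, $G$ has the desired path iff the root table contains the profile $S=\{s,t\}$, $\delta(s)=\delta(t)=1$, $\mu=\{\{s,t\}\}$, $(m_1,m_2,m_3,m_4)=(n_1,n_2,n_3,n_4)$; otherwise we report ``no''. Each transition inspects $k^{O(k)}\cdot n^{O(1)}$ (pairs of) entries with $k^{O(k)}$ work per matching merge, so the total running time is $k^{O(k)}\cdot|I|^{O(1)}$.

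The main obstacle is getting the join node right: merging the partial matchings $\mu_1$ and $\mu_2$ on the shared loose ends so as to never create a closed cycle of segments while correctly propagating which endpoints become connected — this is exactly the segment-connectivity bookkeeping of the classical Hamiltonian-path/cycle treewidth DP, and since we only target a $k^{O(k)}$ bound we may perform it by brute force over matchings rather than via the rank-based or cut-and-count speedups. The colour-count coordinate is an orthogonal, routine addition, and the standard exchange arguments for nice tree decompositions establish correctness of the table semantics verbatim.
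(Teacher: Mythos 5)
Your proposal is correct and follows essentially the same strategy as the paper: compute a nice tree decomposition of width $O(k)$, force $s$ and $t$ into every bag, and run a dynamic program that, for each node $p$, records which ``partial path segment'' patterns on the bag $X_p$ are realizable in the subgraph below $p$, together with the four color counts. Both approaches have $k^{O(k)}\cdot|I|^{O(1)}$ states and transitions per node and hence the same overall running time.

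The only distinction is cosmetic encoding. You store a used-vertex set $S$, a degree map $\delta\colon S\to\{1,2\}$, and a matching $\mu$ on the loose ends --- the classical Hamiltonian-path-style state --- and you work with separate introduce-vertex and introduce-edge nodes. The paper instead indexes the table by an unordered collection of vertex pairs $\{\Lambda_i\}_{i=1}^q\subseteq X_v$, where each pair is the bag-trace of one internally vertex-disjoint segment, and handles edges at introduce and join nodes without a dedicated edge-introduce type. These are two standard, interchangeable encodings of the same information: a bag vertex with $\delta=2$ in your scheme corresponds in the paper's scheme to a vertex appearing in two pairs. Neither encoding buys anything over the other for a $k^{O(k)}$ bound, so the proposal is fine as is; one small thing to tighten if you wrote this out in full is the treatment of a freshly introduced, degree-$0$ vertex (you label it $\delta(v)=1$ ``matched to itself'', which is a slight abuse --- either allow $\delta(v)=0$ or explicitly permit fixed points in $\mu$ so the splicing rule at edge-introduce remains consistent).
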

While some techniques \cite{cygan2011solving,fomin2014representative,bodlaender2015deterministic} seem applicable to improve the running time of \cref{lem:tw-algo} slightly,
for our needs a straight-forward dynamic program on a nice tree decomposition suffices.
\appendixproof{lem:tw-algo}{
We introduce nice tree decompositions before we prove \cref{lem:tw-algo}.
\newcommand{\treedec}{\TT = (T,\{X_v\}_{v \in V(T)})}
\begin{definition}
		A \emph{tree decomposition} of a graph $G$ is a pair $\TT = (T,\{X_v\}_{v \in V(T)})$,
		where $T$ is a tree whose every node $v$ is assigned a \emph{bag} $X_v \subseteq V(G)$
		such that 
		\begin{enumerate}
				\item  $\bigcup_{v \in V(T)} X_v = V(G)$,
				\item  $\forall e\in E(G),\exists v\in V(T) \colon e \subseteq X_v$,
				\item $\forall u\in V(G)\colon T[\{ v \in V(T) \mid u \in X_v\}]$ is a tree.
		\end{enumerate}
		The \emph{width} of $\TT$ is $\max_{v\in V(T)} |X_v|-1$.
		The \emph{treewidth} of a graph $G$ is the minimum width of a tree decomposition of $G$.
		A tree decomposition $\TT$ is called \emph{nice} if $T$ is rooted at a vertex~$r$ such that
		\begin{itemize}
				\item $X_r=\emptyset= X_v$ for all leaves $v \in V(T)$, and
				\item every non-leaf node $v \in V(T)$ of $T$ is of one of the following three types:
						\begin{description}
								\item{\textbf{Introduce node:}} $v$ has exactly one child $w$ in $T$
										and $X_v = X_w \cup \{u \}$, for some vertex $u \not\in X_w$.
								\item{\textbf{Forget node:}} $v$ has exactly one child $w$ in $T$
										and $X_v = X_w \setminus \{u \}$, for some vertex $u \in X_w$.
								\item{\textbf{Join node:}} $v$ has exactly two children $u,w$ in $T$
										and $X_v = X_w = X_u$.
						\end{description}
		\end{itemize}
\end{definition}
\begin{lemma}[\cite{bodlaender2016c} and {\cite[Lemma 7.4]{cygan2015parameterized}}]
		\label{lem:compute-ntd}
Given graph $G$ of treewidth $k$,
one can compute in $2^{\bigO(k)} \cdot n^{\bigO(1)}$ time a nice tree decomposition $\treedec{}$ for $G$ of width $\bigO(k)$ 
such that $\abs{V(T)} \in \bigO(|V(G|)$.
\end{lemma}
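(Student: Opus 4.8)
The plan is to assemble the statement from two standard, off-the-shelf results: a single-exponential-time algorithm producing \emph{some} tree decomposition of $G$ of width $\bigO(k)$, and the routine surgery that turns an arbitrary tree decomposition into a nice one of the same width with only a linear blow-up in the number of nodes.

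\textbf{Step 1: a tree decomposition of width $\bigO(k)$.} First I would invoke the constant-factor treewidth approximation of \citet{bodlaender2016c}: given $G$ and an integer $k'$, it runs in $2^{\bigO(k')}\cdot n$ time and either returns a tree decomposition of $G$ of width at most $5k'+4$ or correctly certifies that $\tw(G) > k'$. Since $k = \tw(G)$ is not known in advance, run this for guesses $k' = 1, 2, 4, 8, \dots$, doubling after each rejection; the first guess with $k' \geq k$ succeeds, and because each call's running time exceeds the square of the previous one, the final call dominates and the total is still $2^{\bigO(k)}\cdot n$. Then clean the output up in the usual way: while $T$ has an edge $uv$ with $X_u \subseteq X_v$, contract it (retaining the bag $X_v$). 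This changes neither the width nor the validity of the decomposition, and afterwards $\abs{V(T)} \leq \abs{V(G)} = n$.

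\textbf{Step 2: making it nice.} Root $T$ at an arbitrary node and apply the construction of \cite[Lemma~7.4]{cygan2015parameterized}: attach an empty bag above the root and below every leaf; replace each node with $\geq 2$ children by a small binary tree of join nodes all carrying the same bag; and, between each node and a child whose bag differs by more than one vertex, insert a chain that first forgets the vertices of the parent's bag absent from the child's bag and then introduces the remaining vertices. Every bag created this way is a subset of one of the original bags, so the width stays $\bigO(k)$. A node with $c_v$ children is expanded into $\bigO(c_v)$ join nodes, and each parent-child chain has length at most $\abs{X_v \Delta X_w} = \bigO(k)$; summing over all nodes of $T$ and using $\sum_v c_v = \bigO(n)$ yields $\abs{V(T)} \in \bigO(k\cdot|V(G)|) \subseteq \bigO(|V(G)|)$ and a running time of $k^{\bigO(1)}\cdot n$. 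Composing Steps~1 and~2 proves the lemma within the stated $2^{\bigO(k)}\cdot n^{\bigO(1)}$ bound.

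\textbf{Where the (minor) difficulty lies.} There is no genuine obstacle: the statement is essentially the conjunction of two textbook results. The only two points needing a word of care are not knowing $k$ a priori, handled by the geometric search on the guess without spoiling the single-exponential dependence, and keeping $\abs{V(T)}$ linear rather than quadratic, handled by first contracting all subset-related bags so that the nicification is run on a decomposition with at most $n$ nodes. Both are routine bookkeeping.
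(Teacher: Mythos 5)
Your proposal is essentially the paper's own route: the paper offers no proof of this lemma at all, it simply imports it by citing \cite{bodlaender2016c} for a width-$\bigO(k)$ tree decomposition computable in $2^{\bigO(k)}\cdot n$ time and \cite[Lemma~7.4]{cygan2015parameterized} for the conversion into a nice tree decomposition, which is exactly the two-step assembly you carry out (your geometric search over the unknown $k$ and the contraction of subset bags are the standard bookkeeping steps implicit in that citation).

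One step of your write-up is not correct as stated: the containment $\bigO(k\cdot|V(G)|)\subseteq \bigO(|V(G)|)$ is false, since $k$ is not a constant. The nicification you describe (and the cited Lemma~7.4) honestly yields $\bigO(k\cdot|V(G)|)$ nodes, and with the paper's definition of nice tree decompositions one cannot in general do better, because each join node forces every vertex of its bag to be introduced separately in both branches. This discrepancy originates in the paper's own statement, which quotes the bound more generously than its source; for the only place the lemma is used (the $k^{\bigO(k)}\cdot|I|^{\bigO(1)}$ dynamic program of the treewidth algorithm) the bound $\bigO(k\cdot|V(G)|)\subseteq n^{\bigO(1)}$ is all that is needed. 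So either state the node bound as $\bigO(k\cdot|V(G)|)$, or add a genuine argument if you want the literal $\bigO(|V(G)|)$; the rest of your proof is fine.
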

\begin{proof}[Proof of \cref{lem:tw-algo}]
Let $I=(G,s,t,c,n_1,n_2,n_3,n_4)$ be an instance of \coloredPath.
Let $n := |V(G)|$ and $k$ be the treewidth of $G$.
By \cref{lem:compute-ntd}, we compute a nice tree decomposition $\treedec$ for $G$ of  width $\bigO(k)$ 
such that $\abs{V(T)} \in \bigO(n)$.
As a first step we add $\{s,t\}$ to every bag of $\TT$.
Henceforth, we say a node $v \in V(T)$ is an introduce/forget/join node if it was such a node before we added $\{s,t\}$ to every bag.
Let $T_v$ be the subtree of $T$ rooted at $v$ and $X(T_v) := \bigcup_{u \in V(T_v)} X_u$.
We are going to compute a dynamic program $D$ such that the following assumption is true for all $v \in V(G)$.
\begin{assumption}
		\label{ass:dp}
Let $v \in V(T)$ be fixed.
For every set of vertex pairs $\Lambda_1, \Lambda_2, \dots, \Lambda_q \subseteq X_v$
and every vector $\gamma \in [n]^4$,
assume that $D_v[\{\Lambda_i\}_{i=1}^q, \gamma] = \top$
if and only if there are internally vertex-disjoint paths $P_1,\dots,P_q$ in $G[X(T_v)]$ 
with $X_v \cap V(P_i) = \Lambda_i$
and $\abs*{\bigcup_{i=1}^q V(P_i) \cap c^p} = \gamma_p$
for all $p \in \oneto{4}$.
\end{assumption}
Hence, we would like to know whether $D_r[\{\{s,t\}\},(n_1,n_2,n_3,n_4)] = \top$.
Note that we could store instead of $\top$ a set of vertex disjoint paths $P_1,\dots,P_q$ as specified in \cref{ass:dp} to compute an actual solution,
but we omit this for the sake of simplicity.
Thus, it is only left to show that we can compute~$D$ in  $k^{\bigO(k)}n^{\bigO(1)}$~time
such that \cref{ass:dp} holds for all $v \in V(T)$.
We do this by induction over the structure of the tree $T$.
\newcommand{\ev}[1]{\vec{e}_{#1}}%
In the following, $\ev{i}$ denotes the $i$-th canonical unit vector.

\subparagraph{Leaf node.} Let $v$ be a leaf in $T$,
hence $X_v = \{ s,t \}$.
We set
$
	D_v[\{\{s,t\}\},\ev{c(s)}+\ev{c(t)}] = \top
$
if and only if $\{s,t\} \in E(G)$.
Then it is easy to verify that \cref{ass:dp} holds for $v$ and all values of $D_v$ can be computed in $n^{\bigO(1)}$ time.

\subparagraph{Forget node.}
Let $v$ be a forget node in $T$ and $w$ the child of~$v$.
Hence $X_v = X_w \setminus \{ u \}$ for some $u \in V(G)$.
Assume that \cref{ass:dp} holds for~$w$.
Let $\Lambda_1, \dots, \Lambda_q \subseteq X_v$ be arbitrary vertex pairs and $\gamma \in \oneto{n}^4$.
We set $D_v[\{\Lambda_i\}_{i=1}^q, \gamma] = \top$
if and only if any of the following two conditions holds.
\begin{enumerate}[(i)]
\item \label{forget:case1} $D_w[\{\Lambda_i\}_{i=1}^q, \gamma] = \top$.
\item \label{forget:case2} There is an index $j \in \oneto{q}$ such that $D_w[\Psi_j(\{\Lambda_i\}_{i=1}^q), \gamma] = \top$
where $\Psi_j$ is the operation of replacing $\Lambda_j =: \{s_j, t_j\}$ by the two pairs $\{s_j, u\}, \{u, t_j\}$.
\end{enumerate}
Note that $X(T_v) = X(T_w)$ and $X_v = X_w \setminus \{ u \}$.
Assume that paths $P_1, \dots, P_q$ as required by \cref{ass:dp} exist for~$v$.
Either none of them contains~$u$, in which case (\ref{forget:case1}) must be true,
or one of them (say $P_j$) does, in which case (\ref{forget:case2}) holds.
Conversely, if one of these conditions holds, then such paths do exist.
Clearly, all values of $D_v$ can be computed in $k^{\bigO(k)}n^{\bigO(1)}$ time.

\subparagraph{Join node.}
Let $v$ be a join node in $T$ and $u, w$ the children of~$v$.
Hence $X_v=X_u = X_w$.
Assume that \cref{ass:dp} holds for $w$ and $u$.
Let $\Lambda_1, \dots, \Lambda_q \subseteq X_v$ be vertex pairs and $\gamma \in \oneto{n}^4$.
We set $D_v[\{\Lambda_i\}_{i=1}^q, \gamma] = \top$ if and only if
$\{\Lambda_i\}_{i=1}^q$
can be partitioned into two sets
$\{\Lambda_{j_{u,i}}\}_{i=1}^{q_u}$,
$\{\Lambda_{j_{w,i}}\}_{i=1}^{q_w}$
and $\gamma$ can be written as a sum $\gamma = \gamma_u + \gamma_w$
such that
\begin{align*}
D_u[\{\Lambda_{j_{u,i}}\}_{i=1}^{q_u}, \gamma_u] &= \top \text{ and}
\\ D_w[\{\Lambda_{j_{w,i}}\}_{i=1}^{q_w}, \gamma_w] &= \top
\end{align*}
Since $X(T_v) = X(T_u) \cup X(T_w)$ and no two vertices of $(X(T_u) \cup X(T_w)) \setminus X_v$ are connected by an edge,
any path in $G[X(T_v)]$ without internal vertices from $X_v$ is either a path in $G[X(T_u)]$ or in $G[X(T_w)]$.
This proves the correctness of this step.
Again, it is easy to see that all values of $D_v$ can be computed in $k^{\bigO(k)}n^{\bigO(1)}$ time.

\subparagraph{Introduce node.}
Let $v$ be an introduce node in $T$ and $w$ be the child of~$v$.
Hence, $X_v=X_w \cup \{ u \}$ for some $u \not\in X_w$.
Assume that \cref{ass:dp} holds for~$w$.
Let $\Lambda_1, \dots, \Lambda_q \subseteq X_v$ be vertex pairs and $\gamma \in \oneto{n}^4$.
Let $J := \{j \in \oneto{q} \mid u \in \Lambda_j\}$
and $\gamma' := \gamma - \sum_{j \in J}\sum_{x \in \Lambda_j} \ev{c(x)}$.
We set $D_v[\{\Lambda_i\}_{i=1}^q, \gamma] = \top$ if and only if
$D_w[\{\Lambda_i\}_{i\in \oneto{q} \setminus J}, \gamma'] = \top$
and each pair $\Lambda_j$ with $j \in J$ forms an edge of~$G$.
Since $X(T_v) = X(T_u) \cup \{u\}$ and $u$~does not have an edge to any vertex of $X(T_u) \setminus X_v$,
any pair~$\Lambda_j$ containing~$u$ must be connected by a direct edge,
while all other pairs must be connected by paths in $G[X(T_u)]$.
From this, the correctness of the above follows.
All values of $D_v$ can clearly be computed in $k^{\bigO(k)}n^{\bigO(1)}$ time.

Having now proven the inductive step for all types of nodes, we conclude that \cref{ass:dp} is true for the root node $r$.
Since $V(T) \in \bigO(n)$,  we have an overall running time of $k^{\bigO(k)}n^{\bigO(1)}$.
This completes the proof of \cref{lem:tw-algo}.
\end{proof}
}%
We are now ready to prove \cref{thm:diverse-paths}.

\begin{proof}[Proof of \cref{thm:diverse-paths}.]
		Let the instance $J$ of \dmPath{} be given in the form of graphs $G_1, \dots, G_\tau$, two vertices $s, t \in \bigcap_{i=1}^\tau G_i$ and $\ell \in \NN$.
	We may assume that every vertex~$v$ of every graph $G_i$ is contained in at least one $s$-$t$ path in $G_i$,
	since otherwise we may delete~$v$.
	This is equivalent to the assumption that the graph $G_i'$ obtained from adding the edge $\{s, t\}$ to $G_i$ is biconnected.
	
	By a result of \citet{MNSW-long-cycles},
	there is a universal constant~$\gamma > 0$, such that
	each~$G_i$ with treewidth $\tw(G_i) \geq \gamma\ell$
	contains two vertex-disjoint cycles of size at least~$4\ell$.
	
	If two such cycles $C, C'$ exist in~$G_i$,
	then let $P_1$ be an $s$-$t$ path containing at least one edge of~$C$.
	To see that such a path exists,
	construct a biconnected graph by simply attaching a new degree-two vertex~$s'$ to both $s$ and $t$,
	create another new vertex~$t'$ by subdividing some edge of~$C$,
	and take two disjoint paths between $s'$~and~$t'$.
	
	Without loss of generality, $P_1$ enters $C$~and~$C'$ at most once each.
	Construct another $s$-$t$ path~$P_2$ from~$P_1$ by setting $E(P_2) := E(P_1) \Delta E(C)$.
	If $P_1$ contains any edge of~$C'$, then define~$P_3$ by $E(P_3) := E(P_1) \Delta E(C')$.
	Otherwise, let $P_3$ be any $s$-$t$ path containing at least half of the edges of~$C'$
	(this can be achieved analogously to the construction of $P_1$ resp.~$P_2$).
	Observe that $P_1, P_2$, and $P_3$ have pairwise symmetric differences at least~$2\ell$.
	Thus, $\{P_1, P_2, P_3\}$ is an $\ell$-diverse representative of all $s$-$t$ paths in $G_i$ by \cref{lem:three-matchings}.
	
	We can then solve the subinstances given by $(G_j)_{j<i}$ and $(G_j)_{j > i}$ separately and 
	pick a suitable path from $\{P_1, P_2, P_3\}$ afterwards.
	
	All subinstances in which every graph~$G_i$ has $\tw(G_i) < \gamma \ell$
	can be solved by \cref{thm:diverse-multistage-pi-det} in combination with \cref{lem:tw-algo}
	in $2^{\bigO(\ell)} f(\gamma \ell) |J|^{\bigO(1)}$~time,
	where $f$~is given by \cref{lem:tw-algo}.
\end{proof}

\toappendix{
\section{Application: Spanning Forests and Other Matroids}
\newcommand{\Matroid}{\textsc{Matroid Independent Set}}
\newcommand{\dmMatroid}{\textsc{Diverse Multistage \Matroid}}
\newcommand{\coloredMatroid}{\textsc{$s$-Colored Exact \Matroid}}
\label{sec:matroids}
In this section, we apply our framework in the context of matroid theory which abstracts the notion of linear independence in vector spaces
and finds applications in geometry, topology, combinatorial optimization, network theory, and coding theory \cite{neel2009matroids,kashyap2009applications}. 
In the classical non-diverse multistage setting, matroids have already been studied \cite{gupta2014changing}. 
We first introduce some standard notations 
and then define the problem we apply our framework to.

A pair $(U,\II)$, where $U$~is the \emph{ground set}
and $\II\subseteq 2^U$ is a family of \emph{independent sets}\footnote{Note that this is \emph{not} a set of pairwise non-adjacent vertices in a graph.},
is a \emph{matroid} if the following holds:
\begin{itemize}
	\item  $\emptyset \in \II$.
	\item  If $A' \subseteq A$ and $A \in \II$, then $A' \in \II$.
	\item  If $A,B \in \II$ and $|A| < |B|$, then there is an $x \in B \setminus A$ such that $A \cup \{x\} \in \II$.
\end{itemize}
Throughout this section, we assume to have access to an oracle which tells us in polynomial time whether a given set $A \subseteq U$ is an element of $\II$.
An inclusion-wise maximal independent set~$A\in \II$
of a matroid~$M=(U,\II)$ is a \emph{basis}.
The cardinality of the bases of~$M$ is 
called the \emph{rank} of~$M$.
A \emph{cycle matroid} of an undirected graph $G$ is a matroid~$(E(G),\II)$, 
where an $A \subseteq E(G)$ is in $\II$ if and only if $(V,A)$ is a forest.
A \emph{partition matroid} is a matroid~$(U,\II)$
such that $\{\II:=\{S\subseteq U\mid | U_i \cap S|\leq r_i, i \in [m]\}$,
where $\bigcup_{i=1}^m U_i$ is a partition of $U$ and $r_i \in \NN_0,i \in [m],m \in \NN$.
A partition matroid is called \emph{uniform matroid} if $m=1$.
Later we will use partition matroids 
to encode constraints
of type ``at most $r_i$~elements of color~$i$''.

We now can formulate the central problem of this section.
\problemdef{\dmMatroid}
{A sequence of matroids $(M_i=(U_i,\II_i))_{i=1}^\tau$ 
		with $\lifetime$ many weight functions $\omega_i \colon U_i \to \NN_0$,
and integers $x_i, \ell \in \NN_0$, for all $i \in [\lifetime]$.}
 {Is there a sequence $(S_1,S_2,\dots,S_\lifetime)$ such that
		 for all $i \in [\lifetime]$ the set $S_i$ is in $\II_i$ 
		 and is of weight $w_i(S_i) \geq x_i$, 
	and for all $i \in [\lifetime-1]$ it holds true that $\abs{S_i \Delta S_{i+1}} \geq \ell$?}
	Note that \dmVoting{} is a special case of \dmMatroid{}, 
	where matroid $M_i$ is a uniform matroid of rank~$k$ with the set of candidates as ground set and 
	the weight functions map to the number of agents approving the candidate.
	We show fixed-parameter tractability \dmMatroid{} parameterized by $\ell$.
\begin{theorem}
		\label{thm:dm-matroid}	
		\dmMatroid{} parameterized by $\ell$ can be solved in $2^{\bigO(\ell)}\cdot n^{O(1)}$ time, where $n$ is the input size.
\end{theorem}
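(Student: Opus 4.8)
The plan is to invoke \cref{thm:diverse-multistage-pi-det}. Take $\Pi$ to be the decision problem whose instance is a triple $I = (M,\omega,x)$, where $M = (U,\II)$ is a matroid accessed by a polynomial-time independence oracle, $\omega \colon U \to \NN_0$, and $x \in \NN_0$; set $\BB(I) := U$ and $\solutionset := \{ S \in \II \mid \omega(S) \geq x \}$. Then \dmMatroid{} is exactly \dmPi{}, and since the independence oracle is polynomial there is no genuine parameter, so $f$ may be taken to be a constant function. By \cref{thm:diverse-multistage-pi-det} it therefore suffices to solve \coloredPi{} in polynomial time — that is, given additionally a coloring $c \colon U \to [4]$ and $n_1,\dots,n_4 \in \NN_0$, to find $S \in \II$ with $\omega(S) \geq x$ and $|c^{-1}(i) \cap S| = n_i$ for all $i \in [4]$, or to report that no such $S$ exists. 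The bound $2^{\bigO(\ell)} \cdot n^{\bigO(1)}$ then follows immediately.

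To solve this colored problem, I would first replace the exact-cardinality constraints by a second matroid. Let $P = (U,\JJ)$ be the partition matroid with blocks $U_i := c^{-1}(i)$ and capacities $n_i$, i.e. $\JJ = \{ S \subseteq U \mid |S \cap U_i| \leq n_i \text{ for all } i \in [4] \}$, and put $N := \sum_{i=1}^4 n_i$. Any common independent set $S \in \II \cap \JJ$ with $|S| = N$ automatically satisfies $|S \cap U_i| = n_i$ for every $i$, since it contains at most $n_i$ elements of each color and $N$ elements in total, forcing all four inequalities to be tight; conversely, every $S \in \II$ with $|c^{-1}(i) \cap S| = n_i$ for all $i$ lies in $\II \cap \JJ$ and has size exactly $N$. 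Hence the colored instance is a \yes{}-instance if and only if $M$ and $P$ admit a common independent set of size exactly $N$ of $\omega$-weight at least $x$, and any such set is a valid output.

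Finally I would appeal to weighted matroid intersection: given independence oracles for two matroids on a common ground set, one can in polynomial time compute a maximum-$\omega$-weight common independent set of any prescribed cardinality (or certify that none of that cardinality exists); the oracle for the partition matroid $P$ is trivial and the oracle for $M$ is polynomial by assumption. Running this for cardinality $N$ and comparing the optimum weight against $x$ solves \coloredPi{} in polynomial time, and plugging this into \cref{thm:diverse-multistage-pi-det} proves the theorem. As a sanity check note that this also recovers \cref{lem:coloredVoting-polytime} and \cref{thm:dm-voting}, since \dmVoting{} is the special case in which each $M_i$ is a uniform matroid of rank $k$. The one point needing a little care is that we require \emph{exact} color counts rather than mere upper bounds; this is precisely what the cardinality-restricted form of weighted matroid intersection provides, once combined with the observation above that saturating the global size $N$ forces tightness in every block.
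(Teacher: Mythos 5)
Your proposal is correct and follows essentially the same route as the paper: the paper's \cref{lem:coloredMatroid} likewise builds the partition matroid with blocks $c^{-1}(i)$ and capacities $n_i$, computes a maximum-$\omega$-weight common independent set of prescribed cardinality $\sum_i n_i$ via weighted matroid intersection, uses the same ``global size saturates every block'' observation to get exact color counts, and then applies \cref{thm:diverse-multistage-pi-det}. No gaps; the only cosmetic difference is that the paper phrases the colored subproblem as \coloredMatroid{} with a general number $s$ of colors.
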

In the context of our framework, $\Pi$ here is the following: 
given an instance $I=(M,w,x)$ consisting of a matroid $M=(U,\II)$,
a weight function $\omega \colon U \to \NN_0$,
and an integer $x$,
is $\solutionset = \{ S \in \II \mid \omega(S)\geq x \}$ non-empty, where $\BB(I)= U$?
Hence, to apply \cref{thm:diverse-multistage-pi}, we study the following problem.
\optproblemdef{\coloredMatroid}
{A matroid $M=(U,\II)$, 
		a weight function voting profile $\omega \colon U \to \NN$,
 a coloring $c\colon U \to [s]$, and
 integers $n_i,x,\in \mathbb N,i\in[s]$.}
 {(if exists) An independent set $S \in \II$ 
 such that $\omega(S) \geq x$ and $|c^{-1}(i) \cap S|=n_i,i \in [s]$?}
 We show that \coloredMatroid{} can be solved efficiently. 
 \begin{lemma}
		\label{lem:coloredMatroid}
		 \coloredMatroid{} is polynomial time solvable.
 \end{lemma}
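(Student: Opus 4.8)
The plan is to reduce \coloredMatroid{} to weighted matroid intersection. First I would model the exact color requirements by a partition matroid $M_c := (U, \II_c)$ with parts $c^{-1}(1), \dots, c^{-1}(s)$ and rank bounds $n_1, \dots, n_s$, that is, $\II_c := \{ S \subseteq U \mid \abs{c^{-1}(i) \cap S} \leq n_i \text{ for all } i \in [s] \}$; membership in $\II_c$ is clearly testable in polynomial time. Put $r := \sum_{i=1}^{s} n_i$. The crucial observation is that for $S \in \II$ the condition ``$\abs{c^{-1}(i) \cap S} = n_i$ for all $i$'' is equivalent to ``$S \in \II \cap \II_c$ and $\abs{S} = r$''. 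One direction is immediate (the $c^{-1}(i)$ partition $U$, so $\abs{S} = \sum_i \abs{c^{-1}(i)\cap S} = \sum_i n_i$), and conversely $\abs{c^{-1}(i) \cap S} \le n_i$ for every $i$ together with $\sum_{i=1}^{s} \abs{c^{-1}(i) \cap S} = \abs{S} = r = \sum_{i=1}^{s} n_i$ forces equality in every coordinate. Hence it suffices to find a common independent set of $M$ and $M_c$ of cardinality exactly $r$ and of weight at least $x$, or to certify that none exists.

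Next I would invoke the classical weighted matroid intersection algorithm, which runs in polynomial time using only independence oracles for the two matroids (the oracle for $M$ is assumed throughout this section, and the oracle for $M_c$ is trivial). In its standard augmenting-path form it computes, for $k = 0, 1, 2, \dots$, a common independent set $I_k$ with $\abs{I_k} = k$ whose weight $\omega(I_k)$ is maximum among all common independent sets of cardinality $k$, and the process terminates at the maximum common-independent-set cardinality, so if no common independent set of cardinality $r$ exists this is detected. I would run it up to $k = r$. If cardinality $r$ is never reached, or if $\omega(I_r) < x$, then --- since every feasible solution of \coloredMatroid{} is in particular a common independent set of $M$ and $M_c$ of cardinality $r$, hence of weight at most $\omega(I_r)$ --- there is no solution, and the algorithm outputs ``\no''. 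Otherwise $I_r$ is a valid output by the equivalence above. A routine check shows that the total running time, counting oracle calls, stays polynomial in the input size.

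I do not expect a genuine obstacle here; the only point requiring care is to invoke the \emph{cardinality-constrained} version of weighted matroid intersection --- a maximum-weight common independent set of prescribed size $r$ --- rather than an overall maximum-weight common independent set, which could be strictly smaller when the weights are small. The augmenting-path algorithm delivers exactly the former (as the chain $\emptyset = I_0 \subsetneq I_1 \subsetneq \dots$), and it also hands us the infeasibility certificate for cardinality $r$ for free, so no extra argument is needed.
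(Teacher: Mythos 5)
Your proposal is correct and takes essentially the same approach as the paper: it builds the same partition matroid encoding the color bounds, reduces the problem to finding a maximum-weight common independent set of cardinality $\sum_i n_i$ via weighted matroid intersection, and compares the resulting weight with $x$. Your write-up is somewhat more explicit about why the cardinality-$r$ constraint forces equality in every color class, but the underlying argument is the same.
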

 \begin{proof}
		 Given an instance $I=(M=(U,\II),\omega,c,n_1,n_2,\dots,n_s,x,k)$ of \coloredMatroid{}.
		Partition the ground set $U =: \biguplus_{i=1}^s U_i$,
		where $U_i := \{ v \in U \mid c(v)=i \}$, for all $i \in [s]$.
		Construct the partition matroid $M' := (U, \II' := \{ S \subseteq U \mid |S \cap U_i| \leq n_i \})$.
		Now compute in polynomial time a set $S \in \II \cap \II'$ of size $\sum_{i=1}^4 n_i$ of maximum-weight with respect to $\omega$ \cite[Section~4.1 and 41.3a]{Sch03}. 
		Thus, we clearly have $S \in \II$.
		Observe that $S \in \II'$ implies $|c^{-1}(i) \cap S| \leq n_i,i \in [4]$.
		Since the set $S$ is of size $\sum_{i=1}^4 n_i$ ,
		we have $|c^{-1}(i) \cap S| = n_i,i \in [s]$.
		Hence, if $w(S) \geq x$, 
		then we correctly output $S$.
		Otherwise, we terminate without an output.
		Observe that this is correct since we have for 
		any set $S \in \II$ with $|c^{-1}(i) \cap S| = n_i,i \in [s]$ that~$S \in \II \cap \II'$.
 \end{proof}
 Now \cref{thm:dm-matroid} follows from \cref{thm:diverse-multistage-pi-det,lem:coloredMatroid}.
 Among others, \cref{thm:dm-matroid} implies that the following problem is fixed-parameter tractable when parameterized by $\ell$.
 \problemdef{\textsc{Diverse Multistage Spanning Forests}}
{A sequence of %
graphs $(G_i)_{i=1}^\lifetime$ and $\ell \in  \NN_0$.}
{Is there a sequence $(S_1,S_2,\dots,S_\lifetime)$ such that
		for all $i \in [\lifetime]$ the graph $(V(G_i),S_i)$ is a spanning forest of $G_i$,
	and for all $i \in [\lifetime-1]$ it holds true that $\abs{S_i \Delta S_{i+1}} \geq \ell$?}
Spanning forests have been studied by \citet{gupta2014changing} in the non-diverse multistage setting.
}%

\section{Hardness of Vertex Cover}
\label{sec:vertex-cover}
\appendixsection{sec:vertex-cover}

\newcommand{\dmVC}{\textsc{Diverse Multistage Vertex Cover}}
We finally present a problem where our framework from \cref{sec:framework} is not applicable, unless $\FPT=\Wone$.
The non-diverse variant of the following problem was studied by \citet{FluschnikNRZ19}. Among others, they showed \Wone-hardness 
when parametrized by the vertex cover size~$k$ or by the maximum number of edges over all instances in the input.
\problemdef{\dmVC}
{A sequence of %
graphs~$(G_i)_{i=1}^{\tau}$ and~$k, \ell \in \NN$.}
{
	Is there a sequence $(S_1,S_2,\dots,S_\lifetime)$ such that
	for all~$i \in \oneto{\lifetime}$ the set~$S_i \subseteq V(G_i)$ is a vertex cover of size at most~$k$ in~$G_i$
	and~$\abs{S_i \Delta S_{i+1}} \ge \ell$ for all~$i \in \oneto{\tau-1}$?
}
The framework from \cref{sec:framework} is presumably not applicable to \dmVC{} because of the following result.
\ifarxiv{}
\begin{theorem}
\else{}
\begin{theorem}
		[\appref{thm:dmvc-hardness}]
\fi{}
	\label{thm:dmvc-hardness}
	\dmVC{} parameterized by $\ell$ is \Wone{}-hard, even if~$\tau=2$.
\end{theorem}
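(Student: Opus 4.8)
The plan is to reduce from \textsc{Multicolored Clique} (or equivalently \textsc{Multicolored Independent Set}), which is $\Wone$-hard parameterized by the number of color classes. Given an instance of \textsc{Multicolored Independent Set} with color classes $V_1, \dots, V_g$ (each of size, say, $N$) and the task of finding an independent set with exactly one vertex per class, I would build a $\dmVC$ instance with $\tau = 2$ and parameter $\ell = \Theta(g)$ (and $k$ chosen appropriately). The key idea is that a vertex cover $S$ of size at most $k$ in a carefully constructed graph corresponds to \emph{not} picking certain vertices, i.e.\ the complement (restricted to some gadget) encodes a selection of one vertex per color class; and requiring $|S_1 \Delta S_2| \geq \ell$ between the two stages forces the two vertex covers to differ in a controlled way that can only happen if the selection is a genuine independent set.

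Concretely, I would use two stages that share a common ``selection gadget'' — for each color class $i$ a clique (or a gadget enforcing that exactly $N-1$ of its $N$ vertices lie in any minimum vertex cover), so that any size-$k$ vertex cover leaves out exactly one vertex $x_i$ per class. In $G_1$ I would attach edge gadgets encoding the edges of the \textsc{Multicolored Independent Set} instance in a way that a vertex cover of size $k$ exists iff the chosen vertices $\{x_1,\dots,x_g\}$ form an independent set. $G_2$ would be a ``mirror'' copy where the roles of chosen/unchosen are flipped within each class gadget (e.g.\ via a parallel set of vertices), so that the symmetric difference $|S_1 \Delta S_2|$ is exactly $\Theta(g)$ precisely when the two covers encode the \emph{same} selection in both stages and that selection is valid. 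The diversity lower bound $\ell$ is then used as a \emph{lower} bound that can be met only when both stages simultaneously admit the tight, selection-encoding vertex cover — any ``wasteful'' vertex cover in one stage would be forced to be larger than $k$ or would fail the diversity constraint.

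The main obstacle — and the part requiring the most care — is getting the bookkeeping of sizes exactly right so that (a) the parameter $\ell$ is bounded by a function of $g$ only (it must not depend on $N$), (b) a size-$k$ vertex cover in each stage is forced to have the rigid ``all-but-one-per-class'' structure rather than some cheaper alternative, and (c) the $\geq \ell$ constraint cannot be satisfied ``for free'' by covers that do not encode a valid independent set. Point (c) is delicate because diversity is a lower bound: one must ensure that any pair of size-$k$ covers with large symmetric difference is \emph{forced} through the edge gadgets to correspond to an independent set. A clean way to handle this is to make the two stages nearly identical so that the \emph{only} way to achieve symmetric difference $\geq \ell$ is to toggle exactly the per-class selection bits — and design the edge gadgets so that a consistent toggling is possible in both $G_1$ and $G_2$ within budget $k$ iff the selection is independent. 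I would also double-check the claim holds already for $\tau = 2$, which is automatic once the two-stage gadget above is in place, and note that $k$ is not the parameter here (only $\ell$ is), so $k$ may grow with $N$ and $g$ freely.
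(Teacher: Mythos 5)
There is a genuine gap: your proposal is a plan, not a proof, and the part you yourself flag as ``delicate'' (point~(c), ensuring that the diversity lower bound can only be met by covers encoding a valid independent set) is exactly the part left unresolved. You describe per-class clique gadgets, edge gadgets, and a ``mirror'' second stage, but never pin down the construction or verify that ``wasteful'' or non-selection covers cannot also reach symmetric difference~$\ell$. In particular, the strategy of making the two stages ``nearly identical'' so that only per-class toggles produce diversity is not obviously sound: nothing prevents a size-$k$ cover in $G_2$ from redistributing its budget among classes unevenly (leaving two vertices out of one class and none out of another), which would also inflate $|S_1\Delta S_2|$ without encoding a one-per-class selection. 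You would need an explicit budget argument to rule this out, and it is missing.

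The paper's proof is far simpler and sidesteps all of this. It reduces from plain \textsc{Independent Set} (parameterized by solution size $k$): set $G_1$ to be the \emph{complete} graph on $V\cup\{v\}$, set $G_2$ to be $G$ with a universal vertex $v$ added, and let $k'=n$, $\ell=k+1$. The completeness of $G_1$ forces $|S_1|\geq n$, so WLOG $S_1=V$; then $S_1\Delta S_2 = \{v\}\cup(V\setminus S_2)$ directly counts the vertices \emph{missing} from $S_2$, and since $S_2$ covers $G_2$, the set $(V\cup\{v\})\setminus S_2$ is independent. The universal vertex $v$ forces $v\in S_2$ (else $S_2=V$ and the symmetric difference is too small), so the missing vertices lie in $V$ and form an independent set in $G$ of size $\geq\ell-1=k$. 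This eliminates the need for color classes, per-class cliques, or a mirror gadget: ``pinning'' $S_1$ to be the entire vertex set makes the symmetric difference a clean proxy for the complement of $S_2$. The key idea your sketch misses is that one stage can be made trivial (a clique) so that the diversity constraint degenerates into a lower bound on $|(V\cup\{v\})\setminus S_2|$, which is precisely the independent-set size.
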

\appendixproof{thm:dmvc-hardness}{
\begin{proof}
	We reduce from \textsc{Independent Set}:
	Given an %
	graph~$G = (V, E)$, and~$k\in\NN$, is there a vertex set~$S \subseteq V$, $|S|\ge k$, such that the vertices in~$S$ are pairwise nonadjacent?
	\textsc{Independent Set} is \Wone{}-hard with respect to~$k$ \cite{downey2013fundamentals}.

	Let~$I := (G=(V,E),k)$ be an instance of \textsc{Independent Set} and let~$|V|=n$.
	\Wilog{}, we assume that~$k>1$.
	We construct an instance~$J := ((G_1, G_2), k', \ell))$ of \dmVC{} as follows.
	The first graph~$G_1$ is a complete graph on the vertex set~$V \cup \{v\}$.
	The second graph~$G_2$ consists of the vertex set~$V \cup \{v\}$ and the edge set~$E \cup \{\{u, v\}\mid u \in V\}$, that is, $G_2$ is a copy of~$G$ to which we add a vertex~$v$ which is adjacent to every other vertex.
	Lastly, we set~$k':=n$ and~$\ell:=k+1$.
	Clearly, $J$ can be constructed in polynomial time.
	We now show that $I$ is a \yes-instance if and only if $J$ is a \yes-instance.

	\textbf{($\Rightarrow$):} 
	Let~$S$ be an independent set of size at least~$k$ in~$G$.
	Let~$S_1 := V$ and~$S_2 := \{v\} \cup V \setminus S$.
	Note that~$|S_1 \Delta S_2| \ge k+1$ and~$|S_2| \le |S_1| = n$,
	and~$S_i$ is a vertex cover in~$G_i$ for~$i \in \oneto{2}$.
	Thus~$(S_1, S_2)$ is a valid solution for our instance of \dmVC{}.

	\textbf{($\Leftarrow$):}
	Let~$(S_1, S_2)$ be a solution for our instance of \dmVC{}.
	As~$G_1$ is a complete graph, we have~$|S_1| \ge n$.
	\Wilog{}, we assume that~$S_1=V$.
	Then~$S_1 \Delta S_2 = \{v\} \cup V \setminus S_2$.
	Note that~$v\in S_2$, otherwise~$S_2$ must be equal to $V$ in order to be a vertex cover, and $|S_1 \Delta V|<\ell$.
	As~$S_2$ is a vertex cover of~$G_2$, the set~$S := (V \cup \{v\}) \setminus S_2$ is an independent set of~$G_2$.
	Note that~$v \notin S$, hence~$S$ is also an independent set of~$G$.
	Finally, as~$S = (S_1 \Delta S_2) \setminus \{v\}$, we have~$|S| \ge \ell-1 = k$, and we are done.
\end{proof}
} %

\section{Conclusion}

We introduced a versatile framework to show fixed-parameter tractability for a variety of diverse multistage problems
when parameterized by the diversity~$\ell$.
The only requirement for applying our framework is that a four-colored variant of the base problem can be solved efficiently.
We presented four applications of our framework, one of which resolving an open question by \citet{bredereck2020multistage}.
Two other applications revealed problems which may be of independent interest from a technical and motivational point of view, see \cref{sec:matching,sec:path}.

We believe that our framework can be applied to a broad spectrum of multistage problems.
In particular, a broad systematic study of the multistage setting in elections was proposed by \citet{BN21}.
Herein, diversity is a natural goal.
From a motivational point of view, an interesting direction for future research is to combine the diverse multistage setting 
with time windows, known from other temporal domains \cite{Z20,MMNZZ,CasteigtsHMZ20,akrida2020temporal,mertzios2019sliding}.
Here, a solution to the $i$-th instance should be sufficiently different from the $\delta$ previous solutions in the sequence; 
our work covers the case~$\delta = 1$.
In some multistage scenarios a ``global view'' \cite{heeger2021multistage} on the symmetric differences is desired.
In context of this paper this means that two consecutive solutions can have a small symmetric difference as long as the sum of all consecutive symmetric differences is at least~$\ell$.
We believe that our framework (\cref{sec:framework}) can be extended to this setting. 
To see this, we have to realize that for an $\ell$-diverse representative $\mathcal F$ of a family of solutions the following holds:
For all sets $A$ and $B$ and integers $\ell_a, \ell_b \leq \ell$, if there is an $S \in \mathcal F$ 
such that $\abs{A \Delta S} \geq \ell_a$ and
$\abs{B \Delta S} \geq \ell_b$, 
then there is an $\widehat{S} \in \mathcal F$
such that $\abs{A \Delta \widehat{S}} \geq \ell_a$ and
$\abs{B \Delta \widehat{S}} \geq \ell_b$.
We leave the details for further research.
Finally, the presented time and space constraints to compute $\ell$-diverse representatives seem to be suboptimal.
Hence, improving the time or space constraints could be a fruitful research direction.

\bibliography{strings-long,bibliography}

\ifarxiv{}\else{}
\appendix
\appendixProofText
\fi{}

\end{document}